\newtheorem{theorem}{Theorem}[section]
\newtheorem{lemma}[theorem]{Lemma}
\newtheorem{claim}[theorem]{Claim}
\newtheorem{fact}[theorem]{Fact}
\newtheorem{corollary}[theorem]{Corollary}
\newtheorem{definition}[theorem]{Definition}
\newtheorem{remark}[theorem]{Remark}
\newtheorem{observation}[theorem]{Observation}
\newtheorem{thm}{Theorem}
\newtheorem{cor}[thm]{Corollary}
\newtheorem*{prob*}{Problem}
\def\F{\mathcal{F}}
\def\S{\mathcal{S}}
\def\D{\mathcal{D}}
\def\A{\mathcal{A}}
\def\B{\mathcal{B}}
\def\G{\mathcal{G}}
\newcommand{\supp}{\mathrm{supp}}
\newcommand{\dist}{\mathrm{dist}}
\DeclareMathOperator{\E}{\mathbb{E}}
\newif\iffinal
\newcommand{\savehyperref}[2]{\texorpdfstring{\hyperref[#1]{#2}}{#2}}
\newcommand{\pref}{\prettyref}
\title{Nearly Tight Lower Bounds for Relaxed Locally Decodable Codes via Robust Daisies}
\author{Guy Goldberg \thanks{GG is supported by ERC Consolidator Grant 772839 and ISF Grant 2073/21.} \\Weizmann Institute of Science \\ \textsf{guy.goldberg@weizmann.ac.il} \and
Tom Gur\thanks{TG is supported by ERC Starting Grant 101163189 and UKRI Future Leaders Fellowship MR/X023583/1.}\\University of Cambridge\\ \textsf{tom.gur@cl.cam.ac.uk} \and
Sidhant Saraogi\thanks{SS is supported by the NSF CAREER grant CCF-1845125.} \\Georgetown University \\ \textsf{ss4456@georgetown.edu}}
\date{\today}
\begin{document}
\maketitle
\begin{abstract}
    We show a nearly optimal lower bound on the length of linear relaxed locally decodable codes (RLDCs).
    Specifically, we prove that any $q$-query linear RLDC $C\colon \{0,1\}^k \to \{0,1\}^n$ must satisfy $n = k^{1+\Omega(1/q)}$.
    This bound closely matches the known upper bound of $n = k^{1+O(1/q)}$
    by Ben-Sasson, Goldreich, Harsha, Sudan, and Vadhan (STOC 2004).
    
    Our proof introduces the notion of robust daisies, which are relaxed sunflowers with pseudorandom structure, and leverages a new spread lemma to extract dense robust daisies from arbitrary distributions.
    
\end{abstract}

\newpage

\setcounter{page}{1}
\section{Introduction}
    In their influential 2004 paper, Ben-Sasson, Goldreich, Harsha, Sudan, and Vadhan (BGHSV) \cite{BGHSV06} introduced the notion of relaxed locally decodable codes (RLDCs). Similarly to standard locally decodable codes (LDCs), these are error-correcting codes from which individual message bits can be recovered, with high probability, by querying only a few codeword bits, even when the codeword is partially corrupted. However, RLDCs permit a relaxed decoder that, on a small fraction of coordinates, may output the rejection symbol $\bot$ upon detecting corruption.
    
    More precisely, a $(q, \delta, \sigma)$-RLDC $C: \{0, 1\}^k \rightarrow \Sigma^n$ is a code that admits a relaxed decoder $D$ that, given an index $i \in [k]$ and oracle access to $w \in \{0, 1\}^n$ that is $\delta$-close to some codeword $c = C(x)$, satisfies the following conditions.
    \begin{enumerate}
	   \item \emph{Completeness:} if $w = c$, then $D^w(i)=x_i$.
	   \item \emph{Relaxed local decoding:} otherwise, $\Pr[D^w(i)\in\{x_i,\bot\}] \geq \sigma$.
    \end{enumerate}
    As observed in \cite[Lemma 4.10]{BGHSV06}, for $O(1)$-query RLDCs, the two conditions above imply a relaxed decoder that will only output $\bot$ on an arbitrarily small fraction of the message bits.

    This seemingly modest relaxation of LDCs allows for constructions with dramatically better parameters.
    BGHSV constructed $q$-query linear RLDCs with length $n = k^{1+O(1/q)}$.\footnote{While BGHSV only guaranteed that the length is $n = k^{1+O(1/\sqrt{q})}$, Goldreich \cite{Goldreich24} showed that their construction achieves the stronger guarantee, with minor modifications to the analysis.}
    In particular, this implies $O(1)$-query RLDCs with nearly-linear length, while the best known construction of $O(1)$-query (non-relaxed) LDCs has superpolynomial length \cite{yekhanin08,efr12}.
    However, despite the much attention that RLDCs received (cf. \cite{GR18, GG18, GGK19, GRR20, gl19,  AS21, GG21, DGMT22, CGS22, CY22, DGL23, Goldreich24, KM24, CY24}), there are no constructions that improve on BGHSV by achieving length $n = k^{1+o(1/q)}$, and whether such constructions are possible remained an open problem.

    \subsection{Main result}
    We prove a lower bound for linear RLDCs, which closely matches the $n = k^{1+O(1/q)}$ upper bound of BGHSV.
    We do this by first proving a lower bound for RLDCs with \emph{non-adaptive} decoders, and then apply a known reduction by Goldberg \cite{Goldberg24} to deduce the same bound for linear RLDCs.
    
    \begin{thm} \label{thm:main}
        Let $C :\{0,1\}^k\rightarrow \Sigma^n$ be a non-adaptive $(q, \delta, \sigma)$-RLDC, where $q \in \mathbb{N}$, $\sigma > 0$, and $\delta > n^{-\frac{\sigma}{2q}}$.
        Then, 
        \[ 
            n \ge \left( \frac{\sigma^2 \cdot k}{38 q^4  \log^2(|\Sigma|) \cdot \log^2 k } \right)^{1 + \frac{1}{\lceil q/\sigma \rceil}} \;.
        \] 
    \end{thm}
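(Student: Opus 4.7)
The plan follows the now-standard template for local-code lower bounds---smooth the decoder, extract a rich combinatorial structure from its query distribution, and use that structure in a hybrid corruption argument---but adapted to the relaxed setting via the notion of robust daisies advertised in the abstract.

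\textbf{Step 1: Reducing to a normalized non-adaptive decoder.}
I would preprocess $D$ using Katz--Trevisan style smoothing: averaging over a random permutation of the codeword coordinates and over internal randomness, I may assume that each $j\in[n]$ is queried with probability at most $O(q/n)$ on any fixed input $i\in[k]$. Combining this with the averaging argument of \cite[Lemma 4.10]{BGHSV06} cited in the introduction, I can also assume that on the uncorrupted codeword, for a $(1-o(\sigma))$-fraction of indices $i$ the decoder outputs $x_i$ (as opposed to $\bot$ or an incorrect value) with probability $1-o(\sigma)$. The upshot is a large subset $I_0\subseteq[k]$ on which each $i$ carries a smooth distribution $\mu_i$ over query sets $Q\subseteq[n]$ of size at most $q$ that decodes correctly with high probability.

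\textbf{Step 2: Extracting a robust daisy.}
This is where the paper's new technology enters. From the family $\{\mu_i\}_{i\in I_0}$ I would extract a \emph{robust daisy}: a large sub-family $I\subseteq I_0$ together with sets $\mathcal{F}_i\subseteq\mathrm{supp}(\mu_i)$ sharing a common core $Y\subseteq[n]$, whose petals form a pseudorandom system in the sense that no small coordinate set meets the petals of too many $i$'s. This is the role of the new spread lemma: iteratively, either the current family is already spread in the Alweiss--Lovett--Wu--Zhang sense and a daisy can be read off, or a small collection of coordinates is hit by many petals, in which case I append those coordinates to the core and pass to a slightly smaller sub-universe. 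Each round ``spends'' at least one of the $q$ budgeted queries, so the process terminates after at most $\lceil q/\sigma\rceil$ rounds; the accumulated loss in $|I|$ and universe size is what eventually yields the $1+1/\lceil q/\sigma\rceil$ exponent.

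\textbf{Step 3: Hybrid corruption against the daisy.}
Given a robust daisy, I would derive a contradiction unless $n$ is as large as claimed. Pick any two messages $x\neq x'$ differing on a fixed $i_0\in I$. By pseudorandomness, for all but a tiny fraction of $i\in I\setminus\{i_0\}$ the $\mu_i$-mass on sets intersecting petals of $\mathcal{F}_{i_0}$ is negligible, so corrupting only within those petals affects these decoders barely at all. On $i_0$ itself, an appropriately designed hybrid of such corruptions (a standard move in the RLDC lower-bound literature) forces $D^w(i_0)$ to output $\bot$ on a large fraction of coins, contradicting the near-perfect completeness arranged in Step~1 unless $|I|$, and hence $n$, is sufficiently large. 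Solving the resulting counting inequality reproduces the bound $n\ge\bigl(\sigma^2 k/(38q^4\log^2|\Sigma|\log^2 k)\bigr)^{1+1/\lceil q/\sigma\rceil}$.

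\textbf{Main obstacle.}
The hardest step is unambiguously Step~2: proving the new spread lemma quantitatively enough, and arranging the daisy so that petals remain pseudorandom after each passage to a smaller universe. Unlike the classical LDC setting, the decoder may output $\bot$, so I cannot insist that every daisy-petal forces a hard decoding constraint; instead the robustness must guarantee that a typical petal-corruption looks like benign noise to almost every \emph{other} decoder, and this is exactly what the spread lemma's pseudorandomness parameters must deliver. Tracking how those parameters degrade across $\lceil q/\sigma\rceil$ iterations while losing only the stated polylogarithmic factors $\log^2|\Sigma|\log^2 k$ is the delicate accounting at the heart of the proof.
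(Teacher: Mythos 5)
Your proposal diverges from the paper at the point that matters most, and the divergence is a genuine gap rather than an alternative route. The engine of the paper's proof is an information-theoretic \emph{compression} argument, not a corruption/hybrid argument: a ``global sampler'' queries each coordinate of a \emph{valid} codeword independently with probability $p$, and for every index $i$ it enumerates all $|\Sigma|^{|K_i|}$ guesses for the values on a per-index kernel $K_i$ of heavy coordinates, feeding every fully-sampled petal to the decoder's deterministic output function and committing to a bit only upon consensus. Completeness guarantees the correct guess produces the right bit; soundness (a wrong guess corresponds to a word within distance $|K_i|/n\le\delta$ of the codeword, so a $\sigma$-fraction of query sets output $x_i$ or $\bot$) prevents a wrong consensus, provided the daisy guarantees that even the ``good'' $\sigma$-mass subfamily has a sampled petal except with probability $\varepsilon_i^{\sigma}\le 1/(3k|\Sigma|^{|K_i|})$. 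Union-bounding over guesses and over $i$ recovers all of $x$ from at most $2pn$ queries, giving $k\le 2pn\log|\Sigma|$, and plugging in $p=\tilde O(n^{-1/c})$ with $c=\lceil q/\sigma\rceil+1$ yields the theorem. Your Step~3 instead tries to force $D^w(i_0)$ to output $\bot$ under a designed corruption and calls this a contradiction with completeness; but completeness only constrains uncorrupted codewords, and outputting $\bot$ on a corrupted word is exactly what an RLDC is permitted to do. This is precisely the obstacle that makes RLDC lower bounds hard, and no concrete counting inequality relating $k$ and $n$ emerges from your sketch.

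Two further points. First, your Step~1 (Katz--Trevisan smoothing so that every coordinate is queried with probability $O(q/n)$) is unsound in the relaxed setting: the standard smoothing treats heavy coordinates as adversarially corrupted, which downgrades the completeness guarantee to the relaxed one (output in $\{x_i,\bot\}$ with probability $\sigma$), destroying the exact-decoding-on-codewords property that the rest of any such argument must exploit. The paper does not smooth; it keeps the heavy coordinates as a per-index kernel and handles them by enumeration inside the sampler. Second, your Step~2 describes a cross-index structure with a common core and an iterative ``spread or grow the core'' process terminating in $\lceil q/\sigma\rceil$ rounds; the paper's robust daisy is instead a per-index distributional object requiring \emph{every} subfamily of measure $\rho$ to be $(p,\varepsilon^{\rho})$-satisfying with $\varepsilon=2^{-\Omega(|K|)}$ --- the exponential-in-$|K|$ error is essential to survive the $|\Sigma|^{|K|}$ union bound over kernel guesses, and obtaining it is exactly what forces the new $(m,k)$-spread notion, the Janson-based small-set spread lemma, and the one-shot weighted-degree bucketing with token-shifting, where the exponent $1/\lceil q/\sigma\rceil$ arises from choosing $c=\lceil q/\sigma\rceil+1$ buckets so that the fraction of good boundaries exceeds $1-\sigma$, not from a round-by-round query budget.
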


    For simplicity, throughout the rest of this section, we restrict our attention to the standard setting of $\sigma=2/3$ and $\delta=\Omega(1)$, and assume a binary alphabet $\Sigma =  \{0, 1\}$.
    In this setting, and for a constant $q$, \pref{thm:main}, yields the following.
    \begin{cor} \label{corollary:linear_rldc}
        For any linear $q$-RLDC $C:\{0, 1\}^k \to \{0, 1\}^n$ it holds that $n = k^{1+\Omega(1/q)}$.
    \end{cor}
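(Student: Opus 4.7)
My plan is to derive \pref{corollary:linear_rldc} from \pref{thm:main} in two steps: first reduce the linear RLDC to a non-adaptive one via Goldberg's reduction \cite{Goldberg24} (as the paper already signals), and then substitute the standard parameters into the bound of \pref{thm:main} and simplify.

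Concretely, I would invoke \cite{Goldberg24} on the given linear $q$-query RLDC to obtain a non-adaptive $(q', \delta', \sigma')$-RLDC of the same length $n$ and message length $k$, where $q' = O(q)$, $\delta' = \Omega(1)$, and $\sigma' \ge 2/3$; this is exactly the purpose of that reduction in prior work, so I would treat it as a black box. I would then feed this non-adaptive code into \pref{thm:main} with $|\Sigma|=2$. With constant $\delta', \sigma'$, the side condition $\delta' > n^{-\sigma'/(2q')}$ is automatic once $n$ exceeds a constant, so the theorem yields
\[
    n \;\ge\; \left( \frac{(\sigma')^{2} \cdot k}{38 \, (q')^{4} \log^{2} k} \right)^{1 + 1/\lceil q'/\sigma' \rceil}.
\]

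To conclude, I would note that for constant $q$ (hence constant $q'$), the denominator $(q')^4 \log^2 k$ is merely $\poly(\log k)$, while the exponent is $1 + 1/\lceil q'/\sigma' \rceil = 1 + \Omega(1/q)$. Taking logarithms, $\log n \ge (1 + \Omega(1/q))(\log k - O(\log \log k)) = \log k + \Omega(\log k / q)$ for sufficiently large $k$, which is exactly $n = k^{1 + \Omega(1/q)}$. I do not expect a real obstacle: the analytic work is entirely inside \pref{thm:main}, and the linear-to-non-adaptive reduction is black-boxed from prior work. The only care needed is verifying that the $\poly(\log k)$ factor in the bracket cannot erode the $k^{1 + \Omega(1/q)}$ lower bound, which is immediate once $q$ is treated as a constant and $k$ is taken sufficiently large.
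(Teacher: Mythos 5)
Your proposal is correct and follows essentially the same route as the paper: in \pref{remark:linear_ldc} the authors likewise apply Goldberg's reduction \cite{Goldberg24} (turning a linear $(q,\delta,\sigma)$-RLDC into a non-adaptive $(q+1,\delta,\sigma)$-RLDC) and then invoke \pref{thm:main} with constant $\delta,\sigma$ and $|\Sigma|=2$, absorbing the $\poly(\log k)$ factor exactly as you do.
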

   Our result improves upon the previous state-of-the-art lower bound of $n = k^{1 + \Omega(1/q^2)}$ for linear RLDCs.
        This prior bound was achieved by applying Goldberg's reduction \cite{Goldberg24} to the $n = k^{1 + \Omega(1/q^2)}$ non-adaptive bound in \cite{Goldreich23} (which improved on the $n = k^{1 + \Omega(1/q^2 \log^2 q)}$ bound in \cite{gl19}).
    
    For the constant rate regime, rearranging the terms in \pref{thm:main} yields the following.
    \begin{cor} \label{corollary:const-rate}
        For any linear $q$-RLDC $C:\{0, 1\}^k \to \{0, 1\}^n$, where $n=O(k)$, it holds that $q = \Omega\left(\frac{\log k}{\log \log k}\right)$.
    \end{cor}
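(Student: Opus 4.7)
The plan is to obtain \pref{corollary:const-rate} by an algebraic rearrangement of \pref{thm:main}. Since \pref{thm:main} is stated for \emph{non-adaptive} RLDCs, I first invoke the reduction of \cite{Goldberg24} (as the authors do for \pref{corollary:linear_rldc}) to convert the given linear RLDC into a non-adaptive one with essentially the same parameters. I may then instantiate \pref{thm:main} with $\sigma = 2/3$, $|\Sigma| = 2$, and some constant $\delta$; the technical hypothesis $\delta > n^{-\sigma/(2q)}$ is automatically satisfied once $k$ (and hence $n$) is large enough. This yields
\[
n \;\ge\; \left( \frac{C \cdot k}{q^4 \log^2 k} \right)^{1 + \frac{1}{\lceil 3q/2 \rceil}}
\]
for some absolute constant $C > 0$.

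The next step is to take logarithms. Writing $1 + 1/\lceil 3q/2 \rceil = 1 + \Theta(1/q)$ and expanding gives
\[
\log n \;\ge\; \log k \;+\; \Omega\!\left( \frac{\log k}{q} \right) \;-\; O\!\left( \log q + \log \log k \right).
\]
Combined with the hypothesis $n = O(k)$, which implies $\log n \le \log k + O(1)$, this rearranges to
\[
\Omega\!\left( \frac{\log k}{q} \right) \;\le\; O\!\left( \log q + \log \log k \right).
\]

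Finally, I may assume $q \le \log k$ (otherwise the desired bound on $q$ is trivial), so $\log q \le \log \log k$ and the displayed inequality simplifies to $\log k / q \le O(\log \log k)$, yielding $q = \Omega(\log k / \log \log k)$. The argument is essentially a calculation and I do not anticipate any conceptual obstacle. The only point requiring care is verifying that the polylogarithmic factors in the denominator inside the base are genuinely subdominant, i.e., that the $\Omega(\log k / q)$ gain from the exponent $1 + \Omega(1/q)$ overwhelms the loss of $O(\log q + \log \log k)$ absorbed by those polylogs; in the regime $q \le \log k$ this is straightforward.
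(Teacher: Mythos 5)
Your proposal is correct and essentially the same as the paper's argument: the paper likewise invokes Goldberg's reduction to pass from linear to non-adaptive decoders (\pref{remark:linear_ldc}) and then rearranges \pref{thm:main} under $n=O(k)$ (in \pref{corollary:non_adaptive_const_rate}, via $k^{1/(2q+3)}\leq q^6$, which is the same calculation as your logarithmic version). One small point of care: the cutoff $q\leq\log k$ does not by itself make $\delta>n^{-\sigma/(2q)}$ automatic for a small constant $\delta$ (it only bounds $n^{-\sigma/(2q)}$ by $2^{-\sigma/2}$), so you should either take $q\leq\epsilon\log k$ for a suitably small constant $\epsilon$ (the bound being trivial otherwise) or carry the hypothesis $\delta>n^{-1/(4q)}$ explicitly, as the paper does in \pref{corollary:non_adaptive_const_rate}.
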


    Since linear locally correctable codes (RLCCs) imply RLDCs with the same parameters, both corollaries immediately extend to them.
    
    \paragraph{Beyond non-adaptive RLDCs.}     
    We remark that the machinery developed in \cite{DGL23} provided the means to extend the $n = k^{1 + \Omega(1/q^2 \log^2 q)}$ lower bound in \cite{gl19} to (not necessarily linear) adaptive RLDCs.
    We believe that an adaptation of this machinery will extend our lower bounds to general RLDCs. We leave this to future work.

\subsection{Robust daisies} \label{section:daisies-intro}
    The proof of our main result, \pref{thm:main}, introduces the combinatorial notion of \emph{robust daisies}, which may be of independent interest.
    A distribution over sets is a robust daisy with a kernel $K$ if, after removing the kernel, its support forms a \emph{satisfying set system}.
    That is, a binomial sampling of the universe elements contains a full set from this system (ignoring the elements in $K$) with high probability (see \cref{def:satisfying}).
    Furthermore, this property must hold for any subset of the support, where the required success probability scales exponentially in the subset's density.

    \begin{definition} (Robust daisy) \label{def:robust_daisy-intro}
    A distribution $\mu$ over $\mathcal{P}(U)$ is a \emph{$(p, \varepsilon)$-robust daisy} with kernel $K \subseteq U$, if, for every $\D \subseteq \supp(\mu)$:
    \begin{equation*}
        \Pr_{W \sim \text{Bin}(U, p)}[\exists \; S \in \D, S \subseteq K \cup W] \geq 1-\varepsilon^{\mu(\D)} \: .
    \end{equation*}
    \end{definition}

    For any set $S \in \mathrm{\supp}(\mu)$, we call $S \setminus K$ a \emph{petal} of the robust daisy.
    Put differently, the robust daisy condition dictates that every subset of petals $\{ S \setminus K \mid S \in \D\}$ is $\left(p, \varepsilon^{\mu(\D)}\right)$-satisfying.

    Robust daisies are closely related to robust sunflowers \cite{rossman14}.
    They both require the petals to be a satisfying set system.
    However, the notions differ in two ways: (1) for robust daisies, the kernel is allowed to have an arbitrary structure, rather than being restricted to the intersection of all sets; (2) the robust daisy is a distribution over sets, rather than an unweighted set system, and the satisfying set system condition must hold not only for the support of the distribution, but also for its subsets.

    We remark that, unlike the notion of (non-robust) daisies \cite{gl19,DGL23}, where outside the relaxed kernel each point is only required to be covered by a bounded number of sets, robust daises capture a pseudorandom structure, known as \emph{spreadness}, outside of the kernel.
    Indeed, to argue about robust daisies, we prove a new spread lemma that is applicable to spread families of small sets (see \pref{section:spread}).

    \paragraph{Robust daisy lemma.} 
    Our main structural result concerning robust daisies shows that it is always possible to extract a dense robust daisy from a distribution over small sets.  

    \begin{lemma} (The Robust Daisy Lemma; informally stated, see \pref{lemma:robust_daisy}) \label{lemma:robust_daisy-intro}
        Fix $q\in\mathbb{N}$ and a set $U$ of size $n$.
        Let $\mu$ be a distribution over $\mathcal{P}(U)$ such that $|S| \leq q$ for every $S \in \supp(\mu)$.
        Then, there exists $\D \subseteq \supp(\mu)$ with $\mu(\D) \ge 0.99$ and a kernel $K \subseteq U$ with $|K| = o(n)$ such that the conditional distribution $\mu_{\D}(x)=\frac{\mu(x)}{\mu(\D)}$ is a $(p, \varepsilon)$-robust daisy with kernel $K$, where
        \[
            p = n^{-\Theta(1/q)} \quad \text{and} \quad \varepsilon = 2^{-\Omega(|K|)} \: .
        \]
    \end{lemma}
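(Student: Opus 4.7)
The plan is to obtain $\mathcal{D}$ and $K$ by an iterative cleanup procedure that peels off violations of a spread condition, and then to invoke the new spread lemma for small sets (developed in \pref{section:spread}) on the surviving distribution to verify the robust daisy property. Fix a spread parameter $r = n^{c/q}$ for a small constant $c>0$ chosen so that the spread lemma turns $r$-spreadness of a family of $q$-sets over $U\setminus K$ into a $(p,\varepsilon)$-satisfying set system with $p = n^{-\Theta(1/q)}$ and failure $\varepsilon = 2^{-\Omega(|K|)}$. The crucial quantitative feature needed, beyond a classical Rossman/ALWZ-style spread lemma, is that the failure should degrade only as $\varepsilon^{\alpha}$ when one restricts to a sub-distribution of relative mass $\alpha$; this is precisely what motivates the tailored small-set spread lemma.

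First I would run the following cleanup. Initialize $K \leftarrow \emptyset$ and $\mathcal{D} \leftarrow \supp(\mu)$. While there exists a nonempty $T \subseteq U \setminus K$ with $\mu(\{S \in \mathcal{D} : T \subseteq S\}) > \mu(\mathcal{D}) \cdot r^{-|T|}$, pick a minimal such $T$ and let $\beta := \mu(\{S \in \mathcal{D} : T \subseteq S\})/\mu(\mathcal{D})$. If absorbing $T$ into the kernel is ``cheap'' in the sense that it is justified by the mass it concentrates (a threshold to be tuned in the analysis), set $K \leftarrow K \cup T$, which lowers the petal sizes of all surviving sets containing $T$; otherwise, delete those $\beta$-fraction sets from $\mathcal{D}$. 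The minimality of $T$ keeps the procedure well-defined, and the case split ensures that every iteration charges to exactly one of the two budgets, kernel growth or mass loss.

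To bound both budgets, I would introduce a potential of the form $\Phi = a\cdot|K| + b\cdot q\log n\cdot(1-\mu(\mathcal{D}))$ and show via an entropy/counting argument that $\Phi$ stays in $o(n)$ throughout the process. The thresholds in the case split are chosen so that each kernel addition pays for itself by contributing a matching amount of concentration in the remaining set system, and each deletion is justified by the severity of the spread violation. The sublinearity $|K|=o(n)$ is forced by the choice $r=n^{c/q}$, since each kernel element must be charged to at least $r^{|T|}/\mathrm{poly}(q)$ worth of mass, which is polynomial in $n$, so only $o(n)$ kernel additions can occur before $\mu(\mathcal{D})$ would drop below $0.99$.

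Finally, once the cleanup halts, $\mu_{\mathcal{D}}$ is $r$-spread outside $K$. For any subfamily $\mathcal{D}' \subseteq \mathcal{D}$ with $\mu_{\mathcal{D}}(\mathcal{D}') = \alpha$, the petals $\{S\setminus K : S\in\mathcal{D}'\}$ are still $(r\alpha^{1/q})$-spread, and applying the small-set spread lemma of \pref{section:spread} converts this into
\[
    \Pr_{W\sim\mathrm{Bin}(U,p)}\bigl[\exists\, S \in \mathcal{D}',\, S\setminus K \subseteq W\bigr] \ge 1 - \varepsilon^{\alpha},
\]
with $\varepsilon = 2^{-\Omega(|K|)}$, which is exactly the robust daisy condition. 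The hardest part I expect is calibrating the cleanup potential together with the quantitative form of the spread lemma so that the failure probability comes out with the required $\varepsilon^\alpha$ dependence \emph{uniformly over all subfamilies} $\mathcal{D}'$; this uniform-over-subfamilies guarantee is what distinguishes the robust daisy statement from standard extraction lemmas for robust sunflowers, and is where the main technical difficulty resides.
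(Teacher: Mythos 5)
There is a genuine gap, and it sits exactly where the paper's main technical work is. Your cleanup loop enforces only the normalized (``$m=1$'') spread condition $\mu(\{S\in\D : T\subseteq S\})\le \mu(\D)\, r^{-|T|}$, and no spread lemma at that strength can produce $\varepsilon = 2^{-\Omega(|K|)}$ at sampling probability $p=n^{-\Theta(1/q)}$: the paper's Small-Set Spread Lemma (\pref{lemma:spread_distribution}) has failure exponent of order $\alpha/(q\, m \log|\F|)$ at sampling probability $p=\alpha/k$, so with $m$ of order $1$ you would need $\alpha=\Omega(|K|)$, i.e.\ $p=\Omega(|K|/k)$, which is vacuous whenever $|K|=\omega(n^{1/q})$ --- and the paper notes this regime is unavoidable (this is precisely the barrier discussed when contrasting $(m,k)$-spreadness with plain $k$-spreadness, and it is also why $t$-daisy-style extractions stall at $n^{-\Theta(1/q^2)}$). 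What the paper actually extracts is $(m,k)$-spreadness together with the quantitative coupling $m\cdot|K|=O(1)$ (for constant $q,c$): the kernel is taken to be all elements whose weighted degree $\bar{\mu}(u)=\sum_{S\ni u}\mu(S)/|S|$ exceeds a threshold, so $m\cdot|K|$ is bounded automatically because $\bar{\mu}$ is a probability distribution, and the threshold is chosen among $\Theta(q)$ geometric buckets via a token-shifting argument so that one bucket is a ``good boundary'' for $99\%$ of the $\mu$-mass, which is what yields the $m/k^{|T|}$ bound for every $T$ outside the kernel. Your procedure never tracks any relation between the spread slack and $|K|$, so even if the loop terminates within budget, the final application of the spread lemma does not give the claimed $\varepsilon$.

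Two further local problems. First, conditioning on a subfamily $\D'$ of relative mass $\alpha$ degrades plain $r$-spreadness only to $(r\alpha)$-spreadness (the binding case is $|T|=1$, since $\mu_{\D'}(\langle T\rangle)\le r^{-|T|}/\alpha$), not to $(r\alpha^{1/q})$-spreadness as you assert; the paper avoids this loss by phrasing conditioning in the $(m,k)$ language, where $k$ is untouched and only $m$ becomes $m/\alpha$ (\pref{observation:cond-spread}), and it is exactly this that makes the exponent come out as $\varepsilon^{\alpha}$ uniformly over subfamilies. Second, the charging/potential argument for the cleanup is not substantiated: ``each kernel element is charged to at least $r^{|T|}/\mathrm{poly}(q)$ worth of mass'' cannot be literally correct since the total mass is $1$, and nothing in the sketch shows the process halts with $\mu(\D)\ge 0.99$ and $|K|=o(n)$ simultaneously; note also that violating cores $T$ can have size up to $q$ and need not consist of degree-heavy elements, whereas the paper's kernel is defined purely by element degrees, which is what makes both the size bound $|K|\le n^{1-1/c}$ and the $m\cdot|K|$ bound fall out immediately.
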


    The conceptual message of the robust daisy lemma stands in sharp contrast to that of the robust sunflower lemma.
    While the robust sunflower lemma shows the existence of a small, highly structured subset within the set system, the robust daisy lemma instead extracts a pseudorandom approximation of the entire distribution over the set system. 
    To use a metaphor, if a robust sunflower is a small precious ``gem'' that can be found within any large enough mountain, a robust daisy is (approximately) \emph{the entire mountain}.

\subsection{Related work}
    LDCs, LCCs and their relaxed counterparts have attracted significant attention in recent years.
    See the works of Yekhanin \cite{Yekhanin12} and Kopparty and Saraf \cite{KS17} and references therein for comprehensive surveys of LDCs, LCCs and their applications.

    \paragraph{RLDCs constructions.} 
    The constructions of RLDCs and RLCCs can be separated into two main parameter regimes: constant query complexity, and constant rate.

    In the constant rate regime, the state-of-the-art code is the construction by Cohen and Yankovitz \cite{CY24}.
    They construct a linear RLCC with rate arbitrarily close to $1$, and query complexity $q = (\log{n})^{2+o(1)}$.
    This construction builds upon the result by Kumar and Mon \cite{KM24}, which shows a similar code but with query complexity $q = (\log{n})^{O(1)}$.

    In the constant query regime, the original work of \cite{BGHSV06} claimed to achieve RLDC with constant query complexity $O(q)$ and length $n = O(k^{1 + 1/\sqrt{q}})$.
    In fact, \cite{Goldreich24} showed that this construction actually achieves $n = O(k^{1 + 1/q})$, which still makes it the current state-of-the art RLDC construction with constant query complexity.

    The work of \cite{GRR20} introduced the notion of RLCCs, constructing such a code with constant query complexity, but with a worse length tradeoff.
    Chiesa, Gur, and Shinkar \cite{CGS22} constructed an improved RLCC, achieving length $n = O(k^{1 + 1/\sqrt{q}})$ (matching the original BGHSV claim).
    This was later improved by Asadi and Shinkar \cite{AS21}, who constructed an RLCC with length $n = O(k^{1+1/q})$, matching the actual (and stronger) bound of the BGHSV construction.

    \paragraph{Lower bounds.}
    In recent decades, extensive research has been conducted on lower bounds for (non-relaxed) LDCs in various regimes \cite{KT00, KW03, Woodruff07, Woodruff12, AGKM22, JM25, BHKL25}.

    Gur and Lachish \cite{gl19} presented the first lower bound for relaxed LDCs.
    Specifically, they showed that any non-adaptive RLDC requires a block length of $n = k^{1+\Omega \left(\frac{1}{q^2 \log^2 q} \right)}$.
    For the adaptive case, they established a lower bound of $n = k^{1+\Omega \left(\frac{1}{2^{2q} \log^2 q} \right)}$.

    The result of \cite{gl19} was extended to additional settings, such as proofs of proximity and property testing, and to the adaptive setting by Dall’Agnol, Gur and Lachish \cite{DGL23}.
    Specifically, they extended the lower bound of $n = k ^ {1+\Omega \left(\frac{1}{q^2 \log^2 q} \right)}$ to \emph{adaptive} RLDCs.

    Goldreich \cite{Goldreich23} surveyed and simplified the work of \cite{gl19}, without employing the newer techniques of \cite{DGL23}.
    He established an improved bound of $n = k ^ {1+\Omega(1/q^2)}$ for the non-adaptive case, 
    and a bound of $n = k ^ {1+\Omega(1/q^3)}$ for the adaptive case (which is weaker than the one presented in \cite{DGL23}).

    Goldberg \cite{Goldberg24} presented a generic reduction that transforms any lower bound for non-adaptive RLDCs and extends it to (possibly adaptive) linear RLDCs.
    Applying this reduction to the bound from \cite{Goldreich23} extends the $n = k ^ {1+\Omega(1/q^2)}$ lower bound to all  linear RLDCs.

    \paragraph{Spreadness.}    
    Our techniques draw on the powerful connection between spreadness and robust combinatorial structures, a link that has been central to recent breakthroughs.
    
    The concept of spreadness for distributions was introduced by Talagrand \cite{talagrand10}.
    A version of the spread lemma, with roots in Rossman \cite{rossman14}, was famously used by Alweiss, Lovett, Wu, and Zhang \cite{ALWZ21} (building on \cite{LSZ20}) to prove that any sufficiently spread set system contains a robust sunflower. This directly led to a breakthrough on the sunflower lemma.
    This line of work, and the spread lemma itself, has since been significantly strengthened \cite{FKN21, PP23, Rao2020} and has found numerous applications across combinatorics and computer science (among others, see \cite{ALWZ21, FKN21, PP23, CKR22, CGRSS25}).
    For a detailed survey, we refer the reader to \cite{rao2025story}.

\subsection{Open problem} 
    Our work leaves several interesting directions for further research.
    We highlight one open question that we find particularly compelling.
    
    In the constant rate regime, where $n = O(k)$, \pref{thm:main} implies a lower bound on the query complexity, of $q = \Omega(\frac{\log k}{\log \log k})$.
    On the other hand, the recent state-of-the-art construction of a constant-rate RLDC by Cohen and Yankovitz \cite{CY24}, achieves $q = O(\log^2k)$.
    An important open problem is to close the quadratic gap that still remains in this regime.
\section{Proof overview} \label{section:proof-overview}
    The proof of the RLDC lower bound in \pref{thm:main} consists of the following three high-level steps.

\begin{itemize}
    \item \textbf{Step 1: Reduction to a combinatorial problem.}
    First, we reduce the problem of proving a lower bound for RLDCs to a purely combinatorial problem: finding a specific structure, a dense robust daisy, within the decoder's query-set distribution.
    
    \item \textbf{Step 2: From spreadness to robust daisies.}
    Second, we establish the key link between spreadness and our new notion of robust daisies.
    We introduce a set-theoretic property, which is a generalization of the well-known notion of \emph{set spreadness}.
    We prove a new \emph{small-set spread lemma} which shows that any spread set system is \emph{satisfying}, which is the required structure of the robust daisy outside the kernel.

    \item \textbf{Step 3: finding spreadness.} Third, we prove a \emph{spreadness extraction lemma}.
    We show that every distribution over sets can be made spread, by \emph{puncturing} (removing $o(n)$ elements from the universe), and \emph{conditioning} (restricting the distribution to a large-measure subset of its support).
\end{itemize}

    These three components chain together to prove the main theorem.
    We apply the spreadness extraction lemma (Step 3) to the RLDC's query-set distribution to find a large, spread substructure.
    Our small-set spread lemma (Step 2) then proves this structure is a robust daisy.
    Finally, by our reduction (Step 1), the existence of this robust daisy within the decoder's queries implies the $n = k^{1+\Omega(1/q)}$ lower bound. We proceed to elaborate on each of these three steps.

\subsection{RLDC lower bounds via robust daisies}
    Our first contribution is conceptual: we abstract and generalize the argument underlying the RLDC lower bound in \cite{gl19}.
    This abstraction is crucial, as it reveals the bottleneck common to all previous lower bounds, including \cite{DGL23,Goldreich23,Goldberg24}, and provides the generality that is needed to surpass the barrier of $n = k^{1+\Omega(1/q^2)}$ lower bounds.

    In the following, let $C:\{0, 1\}^k \to \{0, 1\}^n$ be a non-adaptive $q$-RLDC; that is, for each decoding index $i \in [k]$ the decoder's queries are determined by a query-set distribution $\mu_i$ over $q$-tuples of codeword coordinates.

    We show that if the relaxed decoder is structured in the sense that each distribution $\mu_i$ constitutes a robust daisy (\pref{def:robust_daisy-intro}), then the following lower bound on the code's block length must hold.

    \begin{lemma} (informally stated, see \pref{lemma:daisy-to-lowerbound}) \label{lemma:daisy-to-lowerbound-inf}
        Let $C :\{0,1\}^k\rightarrow \{0,1\}^n$ be a non-adaptive $q$-query RLDC.
        If for every $i \in [k]$, the query-set distribution $\mu_i$ is a $(p, \varepsilon_i)$-robust daisy with a kernel $K_i \subseteq [n]$ such that $|K_i| = o(n)$ and $\varepsilon_i = 2^{-\Omega(|K_i|)}$, then $k \leq pn$.
    \end{lemma}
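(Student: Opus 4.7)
The plan is to establish $k \leq pn$ via an information-theoretic encoding argument. Let $x \in \{0,1\}^k$ be a uniformly random message, $c = C(x)$ its codeword, and let $W \sim \mathrm{Bin}([n], p)$ be sampled independently of $x$. By the chain rule,
\[
k = H(x) = H(x \mid W) \leq H(c|_W \mid W) + H(x \mid W, c|_W) \leq pn + H(x \mid W, c|_W),
\]
so proving the lemma reduces to showing that $H(x \mid W, c|_W) = o(k)$.

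For the per-index reconstruction, fix $i \in [k]$ and apply the robust daisy condition to $\D = \supp(\mu_i)$: with probability at least $1-\varepsilon_i$ over $W$, some query set $S_i \in \supp(\mu_i)$ satisfies $S_i \subseteq K_i \cup W$. Completeness of the RLDC then forces $x_i = D(i, S_i, c|_{S_i})$, so $x_i$ is determined by $c|_W$ together with the at most $q$ unknown values $c|_{S_i \cap K_i}$ inside the kernel. Using the hypothesis $\varepsilon_i = 2^{-\Omega(|K_i|)}$, a Fano-type bound keeps the residual ``failure entropy'' tiny, and a union bound over $i \in [k]$ contributes only a lower-order term.

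The main obstacle — and, in my view, the conceptual heart of the lemma — is aggregating these $k$ per-index reconstructions without paying the kernel cost separately for each index. The naive bound $H(x_i \mid W, c|_W) \leq q + O(\varepsilon_i \log(1/\varepsilon_i))$, which charges $q$ kernel bits per index, sums to $qk + o(k)$ and yields only the vacuous inequality $k \leq pn + qk$. My plan to avoid this is to invoke the full strength of \pref{def:robust_daisy-intro}, namely the dense-subset quantifier with the exponential failure bound $\varepsilon_i^{\mu_i(\D)}$: for each $i$, I would carve out a dense subfamily $\D_i \subseteq \supp(\mu_i)$ whose members share a common, very sparse kernel footprint, so that the total ``extra'' information beyond $c|_W$ needed to simultaneously reconstruct every $x_i$ can be encoded in a single auxiliary random variable of conditional entropy $o(k)$, rather than $qk$.

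The hardest step is this global amortization. It is precisely where the pseudorandom/spread character of robust daisies — strictly stronger than the (non-robust) daisies of \cite{gl19,DGL23} — has to enter, together with the quantitative relation $\varepsilon_i = 2^{-\Omega(|K_i|)}$. Without an amortization of this flavor one recovers only the prior-art $n = k^{1+\Omega(1/q^2)}$ bounds; the improvement to $n = k^{1+\Omega(1/q)}$ must be driven by this step, and this is what makes the dense-subset clause of the definition, rather than just the weaker $\D = \supp(\mu_i)$ special case, indispensable.
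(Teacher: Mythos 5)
Your proposal has a genuine gap, and it sits exactly where you locate the ``conceptual heart'': the aggregation step is not merely hard, it is unachievable by the route you sketch, because your argument never uses the soundness (relaxed decoding on corrupted words) of the RLDC --- only completeness. Completeness alone cannot yield the lemma: the decoder for index $i$ may query coordinates that all lie inside the kernel $K_i$ (e.g., every set in $\supp(\mu_i)$ contains a fixed kernel coordinate carrying $x_i$), in which case the robust-daisy condition is satisfied trivially (the petals can even be empty, so every subfamily is $(p,\varepsilon)$-satisfying for any $\varepsilon>0$), yet $c|_W$ together with $W$ carries essentially no information about $x_i$, and $H(x\mid W, c|_W)$ is close to $k$. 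Relatedly, your proposed fix --- carving out for each $i$ a dense subfamily ``whose members share a common, very sparse kernel footprint'' so that all kernel information can be packed into one $o(k)$-entropy auxiliary variable --- does not follow from \pref{def:robust_daisy-intro} and is false in general: the kernels can have size up to $n^{1-1/c}$ and the sets' intersections with them can be arbitrary, so no such subfamily need exist, and the robust-daisy definition gives you no handle on kernel footprints at all (it only speaks about the petals).

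The paper's proof (\pref{lemma:daisy-to-lowerbound}) resolves the kernel problem differently: the global sampler does not pay for the kernel bits in information but \emph{enumerates} all $|\Sigma|^{|K_i|}$ guesses for $c|_{K_i}$, and runs the decoder's deterministic rule on every sampled petal under every guess, outputting a bit only when all sampled views agree. For the correct guess, completeness plus the $\D=\supp(\mu_i)$ case of the daisy condition guarantee a correct unanimous answer with probability $1-\varepsilon_i$. For a wrong guess, the resulting local views are consistent with a word $y'$ within distance $|K_i|/n\leq\delta$ of the codeword, so by \emph{soundness} the family $\D^{\mathrm{good}}(\kappa)$ of sets outputting $x_i$ or $\bot$ has measure $\geq\sigma$, and the dense-subset quantifier of the daisy definition --- applied to $\D^{\mathrm{good}}(\kappa)$, not to a kernel-footprint subfamily --- ensures a good petal is sampled except with probability $\varepsilon_i^{\sigma}$, which blocks a wrong unanimous answer. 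The hypothesis $\varepsilon_i=2^{-\Omega(|K_i|)}$ is precisely what absorbs the union bound over the $|\Sigma|^{|K_i|}$ guesses (and over $i\in[k]$), giving a sampler that recovers all of $x$ from $O(pn)$ queried bits and hence $k\leq 2pn\log|\Sigma|$. Your entropy framing of the outer counting step ($k\leq pn + H(x\mid W,c|_W)$) is fine and could replace the paper's Chernoff-plus-query-count accounting, but without the guess-enumeration, the soundness-based analysis of wrong guesses, and the consistency check, the inner step $H(x\mid W,c|_W)=o(k)$ does not go through.
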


    We defer a detailed overview of the proof of \pref{lemma:daisy-to-lowerbound-inf} to \pref{section:lb-overview}, which we encourage readers unfamiliar with the techniques of \cite{gl19} to review first.
    Our focus here is on how the reduction to a combinatorial problem, which \pref{lemma:daisy-to-lowerbound-inf} provides, allows us to overcome the limitations of previous approaches.

    Towards this end, note that the query-set distribution $\mu$ of a general RLDC might not form a robust daisy, but rather an arbitrary set of distributions $\{\mu_i\}_{i\in[k]}$ supported on $q$-sets.
    However, to apply the reduction, it suffices that each $\mu_i$ can be \emph{approximated} by a robust daisy.

    To see this, fix $i\in[k]$ and denote $\mu = \mu_i$.
    Note that if there exists a dense sub-family $\D \subseteq \supp(\mu)$ such that the conditional distribution $\mu_{\D}$ is a robust daisy, we can slightly modify the operation of the relaxed decoder: 
    instead of sampling a set $S \sim \mu$ (as the original decoder does), the modified decoder will sample a set $S \sim \mu_{\D}$.
    Hence, if $\D$ is dense enough (say, with $\mu(\D) \geq 0.99$), then the soundness probability of the modified decoder is only slightly worse than that of the original one.

    The above discussion, combined with \pref{lemma:daisy-to-lowerbound-inf}, reduces the task of proving RLDC lower bounds to the following, purely combinatorial problem.

    \begin{prob*}[Robust daisy extraction]
        Given an arbitrary distribution $\mu$ supported on $q$-sets, extract a dense $(p, \varepsilon)$-robust daisy with kernel $K$ such that $\varepsilon = 2^{-\Omega(|K|)}$, while minimizing $p$.
    \end{prob*}    
    
    Our reduction to the problem of extracting robust daisies, which are closely related to robust sunflowers (see \pref{section:daisies-intro}), exposes the connection to the pseudorandom structure captured by \emph{spread lemmas}, which lies at the heart of robust sunflower lemmas and is also a key component in our proof.
    
    We remark that previous RLDC lower bounds did not employ an abstract reduction; rather, they directly analyzed the query-set distribution to identify specific structures.
    Recasting the works of \cite{gl19,DGL23,Goldreich23} through the lens of robust daisy extraction, the structures they isolate are in fact $(p,\varepsilon)$-robust daisies, albeit with relatively weak parameters: to achieve $\varepsilon$ that is exponentially small in the kernel's size, they need $p=n^{-\Theta(1/q^2)}$.
    The quadratic dependency in the query complexity gives the corresponding factor in the lower bound.\footnote{
    The robust daisy extracted in \cite{gl19} is of density of only $\mu(\D) \geq 1/q$.
    Therefore, they had to employ a soundness amplification process, which increases the query complexity to $q \log q$.
    This increment gives the term $q^2 \log ^2 q$ in their lower bound.
    Using an improved construction, \cite{Goldreich23} is able to extract a robust daisy with density arbitrarily close to $1$ (but still with $p=n^{-\Theta(1/q^2)}$), which avoids the need for soundness amplification, yielding the improved lower bound.}
    Moreover, as we shall see next, there are explicit counterexamples showing that these structures cannot attain the parameters needed to surpass $n = k^{1+\Omega(1/q^2)}$ lower bounds.

\subsection{Extracting a robust daisy}
    In light of the reduction above, we can set aside RLDCs and focus on the combinatorial problem of robust daisy extraction. 
    For simplicity, let us denote a $(p, \varepsilon)$-robust daisy with $\varepsilon = 2^{-\Omega(|K|)}$ as a \emph{$p$-robust daisy}.

    We begin by examining the methods used in previous lower bounds to extract structures that can be viewed as $p$-robust daisies, and explain a barrier for such approaches. 

    \paragraph{The bottleneck: $t$-daisies.}
    The combinatorial structure extracted in all previous works \cite{gl19,DGL23,Goldreich23} is that of a \emph{$t$-daisy}.\footnote{Technically, the argument in \cite{Goldreich23} avoids the notion of $t$-daisies, and the query sets are simply divided into heavy, medium, and light elements. However, the essence of the structure is maintained, as the heavy elements play the role of the kernel, and the light elements admit the bounded intersection property.} 
    A set system is a $t$-daisy if, outside a kernel $K$, each element is contained in at most $t$ sets.

    Reframing \cite{gl19} through the abstraction of robust daises, their argument regarding $t$-daisies can be seen as showing for any set system containing $O(n)$ sets, the following holds:
    \begin{enumerate}
        \item It is always possible to extract a $t$-daisy with $t |K| = O(n^{1-1/q})$.\footnote{Notice that extracting a $t$-daisy with $t |K| = O(n)$ is trivial, since for a constant $q$, there are at most $O(1/t)$ elements with degree larger than $t$.}
        \item If a $t$-daisy satisfies the condition above, then it is a $n^{-\Theta(1/q^2)}$-robust daisy.
    \end{enumerate}
    We remark that the second item is shown by simple, first-principle arguments.
    Namely, using a greedy process, \cite{gl19} finds a family of disjoint sets in the $t$-daisy.
    Then, they directly calculate the probability to sample one of these sets.
    
    We argue that $p=n^{-\Theta(1/q^2)}$ is the best (i.e., minimal) sampling probability that can be achieved by extracting $t$-daisies; hence, they cannot imply stronger RLDC lower bounds.

    First, there exist $t$-daisies satisfying the condition $t |K| = O(n^{1-1/q})$ which are \emph{not} $p$-robust daisies for any $p=n^{-o(1/q^2)}$.
    To see this, fix $t=n^{1-1/q}$, and consider the set system consisting of $n/t=n^{1/q}$ sets of size $q$, each repeating $t$ times.
    This set system is a $t$-daisy with an empty kernel, and satisfies $t |K| = O(n^{1-1/q})$. However, by a straightforward calculation, one needs $p=n^{-O(1/q^2)}$ to sample a full set with a constant probability.\footnote{For a counterexample that avoids repetitions of the same set, one can add a unique vertex to each of the $n$ sets, while maintaining the sampling bound.}

    Nevertheless, one might wonder: is it possible to extract a $t$-daisy with a better relation between $t$ and $K$?
    This might imply that this $t$-daisy is a $p$-robust daisy with $p=n^{-o(1/q^2)}$. Alas, this is impossible.
    The process for extracting $t$-daisies in \cite{gl19} is optimal; one cannot find better $t$-daisies in arbitrary set systems.

    To illustrate this, we present a $(q+1)$-uniform set system, with $n$ sets over $O(n)$ base elements.
    This set system has the property that for every value of $t$, and setting the kernel $K$ to contain the elements with degree larger than $t$, it holds that $t |K| = \Omega(n^{1-1/q})$.
    Consider the $k$-regular tree with $q+1$ levels, with $k=n^{1/q}$.
    Each level $\ell$ in the tree has $k^{\ell - 1}$ vertices (the root is on level $1$), and the last level where $\ell = q + 1$ has $n$ leaves.
    The $n$ sets in the system are the unique $n$ root-to-leaf paths in the tree.
    The degree of each element in level $\ell$ is $k^{q + 1 - \ell}$, which is the number of root-to-leaf paths going through it.
    The degrees decrease from $k^q = n$ at the root to $k^0 = 1$ at the leaves.
    
    Now, consider any degree threshold $t$.
    This threshold must fall between the degrees of two adjacent levels.
    Namely, $t \in [k^{q - \ell}, k^{q + 1 - \ell})$ for some $\ell$ between $0$ and $q$.
    The kernel $K$ of a $t$-daisy now must include all vertices in the top $\ell$ levels --  otherwise there is an element included in $k^{q + 1 - \ell} > t$ or more petals.
    The size of this kernel is therefore at least the total number of vertices in these levels:
    \[
        |K| \geq \sum_{j=1}^{\ell} k^{j-1} = \Theta(k^{\ell-1}) \; .
    \]
    On the other hand, $t \geq k^{q - \ell}$, and together this implies:
    \[
        t|K| = \Omega(k^{q - \ell} \cdot k^{\ell-1}) = \Omega(k^{q - 1}) = \Omega(n^{1-1/q}) \; .
    \]

    To summarize: the notion of $t$-daisies is, on the one hand, not strong enough to imply the desired $n^{-\Theta(1/q)}$ sampling probability, and on the other hand, too strong, so improving the extraction process is impossible.
    
    Nevertheless, as we shall see next, it is possible to extract a $n^{-\Theta(1/q)}$ robust daisy from every distribution supported over sets of size at most $q$, even from the tree set system above -- this is guaranteed by our Robust Daisy Lemma (\pref{lemma:robust_daisy-intro}).
    For that, however, we need new ideas that avoid the bottleneck of $t$-daisies.

    \paragraph{$(m,k)$-spreadness.} 
    In this work, we introduce the notion of \emph{$(m,k)$-spreadness}, which is a generalization and strengthening of the known notion of $k$-spreadness \cite{talagrand10, MNSZ25}.
    This stronger notion is essential for extracting robust daisies that outperform those from previous approaches.

    In the following, we use $\langle T \rangle$ to denote the family of all subsets of $U$ that contain $T$, and then $\mu(\langle T \rangle)$ is the total density of all subsets of $U$ that contain $T$.

    \begin{definition}[$(m, k)$-spread distributions] \label{def:spread-distribution-intro} 
        Let $\mu$ be a distribution over $\mathcal{P}(U)$, let $k > 1$ and let $m \in (0, 1]$.
        We say that $\mu$ is \emph{$(m, k)$-spread} if for any non-empty set $T \subseteq U$,
        \[
           \mu(\langle T \rangle) \leq \frac{m}{k^{|T|}} \:.
        \]
    \end{definition}

    This new notion coincides with the standard $k$-spreadness for $m=1$.
    For $m<1$, however, it is stronger.
    We believe this generalization, and the connection it creates between RLDCs to the literature on spreadness, might be of independent interest.
    
    We remark that even with $m = 1$, the spreadness condition is strictly stronger than the condition required from the petals of a $t$-daisy.
    To see that, fix $m = 1$, and
    assume the uniform distribution over a set system $\F$ is $k$-spread.
    Then, applying the spreadness condition to $T = \{ x \}$ for every universe element $x$, we deduce that every element is contained in at most $\frac{|\F|}{k}$ sets.
    That is, $\F$ is a $\frac{|\F|}{k}$-daisy with an empty kernel.
    In other words, the bounded intersection requirement is similar to asking for spreadness, but for singletons only.
    Spreadness is a much stronger requirement, as it applies to any subset of $U$.

    Our generalized notion of $(m,k)$-spreadness allows us to prove a new spread lemma, which is useful specifically when the support of the distribution is over small sets.
    Before stating the new lemma, let us demonstrate why we need the new generalized definition.

    \paragraph{$(m,k)$-spreadness vs $k$-spreadness.}
    The main benefit of the new notion comes from the following observation:
    One cannot hope for spreadness parameter better than $k=n^{\Theta(1/q)}$.
    Consider the uniform distribution over $n$ distinct sets, each of size $q$.
    Let $T$ be one of these $n$ sets.
    The probability of picking $T$ is $\mu(T) = 1/n$.
    Since $T \in \langle T \rangle$, the total density $\mu(\langle T \rangle)$ must be at least $1/n$.
    On the other hand, the spreadness condition requires $\mu(\langle T \rangle) \leq 1/k^{|T|} = 1/k^q$.
    Combined, this implies $1/n \leq 1/k^q$, and hence $k \leq n^{1/q}$.

    Now, the high-probability version of the spread lemma (\cite{Rao2020}) guarantees that a $k$-spread set system is $(p, \varepsilon)$-satisfying for $p = O\left(\frac{\log(q/\varepsilon)}{k}\right)$.
    This is known to have an optimal dependence on $\varepsilon$~(e.g., see \cite[Lemma 4]{BCW21}). To obtain $\varepsilon=2^{-\Omega(|K|)}$, one would need to set $p = n^{-\Theta(1/q)} \cdot |K|$ which is vacuous if $|K| = \omega(n^{1/q})$, which is unavoidable in certain cases. This motivates the need for a stronger notion of spreadness, and a stronger spread lemma.
    
    A second difference between the definitions is that the new one is ``closed under conditioning'': if $\mu$ is $(m,k)$-spread, then, for any $\D \subseteq \supp(\mu)$, the conditioned distribution $\mu_{\D}$ is $\left(\frac{m}{\mu(\D)},k\right)$-spread.
    The more refined condition of $(m,k)$-spreadness allows us to express this relation.
    We will see shortly how this property helps us to show that a distribution is a robust daisy.

    \paragraph{The Small-Set Spread Lemma.}
    In \pref{section:spread}, we prove that if a distribution is $(m,k)$-spread, then its support is a satisfying set system.
    \begin{lemma} [``The Small-Set Spread Lemma''; informally stated, see \pref{lemma:spread_distribution}] 
    \label{lemma:spread_distribution-overview} 
        Let $\mu$ be a $(m, k)$-spread distribution, and assume every set in $\supp(\mu)$ has at most $q$ elements.
        Then, for every $\alpha > 2q$, $\supp(\mu)$ is $(p, \varepsilon)$-satisfying with $p = \frac{\alpha}{k}$ and $\varepsilon = \exp\left(- \widetilde{\Omega} \left( \frac{\alpha}{qm} \right)\right)$.
    \end{lemma}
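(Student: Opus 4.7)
The plan is to prove the lemma via a two-level argument: a single-round progress lemma based on the second-moment method, amplified by decomposing $W$ into independent rounds.

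\textbf{Step 1 (single-round progress).} I would first establish the following core statement: if $\nu$ is an $(m', k)$-spread distribution over sets of size at most $q$ and $W' \sim \text{Bin}(U, p')$ with $p' k \geq C q m'$ for a sufficiently large absolute constant $C$, then $\Pr[\exists\, S \in \supp(\nu) : S \subseteq W'] \geq 1/2$. The proof is by Paley--Zygmund applied to $X := |\{S \in \supp(\nu) : S \subseteq W'\}|$. Applying the $(m', k)$-spread condition to any set of maximum size gives $|\supp(\nu)| \geq k^q/m'$, hence $\E[X] \geq |\supp(\nu)| \cdot (p')^q \geq (p'k)^q/m'$. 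For the second moment, one expands $\E[X^2] = \sum_{S, T \in \supp(\nu)} (p')^{|S \cup T|}$ and groups terms by the intersection size $|S \cap T| = i$: the number of pairs with $S \cap T \supseteq T_0$ equals $|\supp(\nu) \cap \langle T_0 \rangle|^2$, and each such cardinality is bounded by $m' \cdot |\supp(\nu)|/k^{|T_0|}$ via the spread condition. A short calculation then yields $\E[X^2]/\E[X]^2 \leq 1 + O(qm'/(p'k))$, which is at most $2$ under the hypothesis $p'k \geq Cqm'$, giving $\Pr[X > 0] \geq 1/2$.

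\textbf{Step 2 (amplification via independent rounds).} I would next boost the constant success probability to an exponential one. Decompose $W \sim \text{Bin}(U, p)$ as a union $W = W_1 \cup \cdots \cup W_r$ of independent samples $W_j \sim \text{Bin}(U, p')$, choosing $p' = 1 - (1-p)^{1/r}$ so that the marginal distribution of $W$ coincides with $\text{Bin}(U, p)$. Picking $r = \Theta(\alpha/(qm))$ ensures $p' k \gtrsim q m$, so the single-round lemma applies in every round with parameters $m' = m$ and $k$. Since the $W_j$'s are independent and any successful round implies that $W = \bigcup_j W_j$ contains a set in $\supp(\mu)$, the failure probability is at most $(1/2)^r = \exp(-\Omega(\alpha/(qm)))$. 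The $\widetilde{\Omega}$ in the lemma absorbs logarithmic slack arising from the approximation $p' \approx p/r$ and from separately handling sets of smaller size.

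\textbf{Main obstacle.} The technical crux is the upper bound on $\E[X^2]$ in Step~1. The delicate point is the double sum over pairs $(S,T)$ split by $|S \cap T|$: one needs a bound of the form $\sum_{|T_0| = i} |\supp(\nu) \cap \langle T_0 \rangle|^2 \leq \bigl(\max_{|T_0|=i} |\supp(\nu) \cap \langle T_0\rangle|\bigr) \cdot \sum_{|T_0|=i} |\supp(\nu) \cap \langle T_0\rangle|$, where the maximum is controlled by the $(m',k)$-spread condition and the total is computed by double counting over sets in $\supp(\nu)$. A secondary issue is that when $\supp(\nu)$ contains sets of several different sizes, the lower bound $|\supp(\nu)| \geq k^q/m'$ does not directly apply, so one must either handle smaller sets separately or argue that they contribute even more strongly to $\E[X]$; this is where I expect the logarithmic factor hidden in the $\widetilde{\Omega}$ to originate.
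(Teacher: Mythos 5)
Your overall architecture differs from the paper's: the paper applies Janson's inequality once to get an exponentially small failure probability directly (\pref{lemma:spread}), whereas you get constant success probability by a second-moment (Paley--Zygmund) argument and then amplify by writing $W\sim\mathrm{Bin}(U,p)$ as a union of $r=\Theta(\alpha/(qm))$ independent $\mathrm{Bin}(U,p')$ rounds. The amplification step itself is sound (with $1-p=(1-p')^r$ the marginal is exact, any round's success is inherited by $W$, and independence gives failure $2^{-r}$), and this decomposition is a legitimate alternative to Janson.

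However, Step 1 has a genuine gap: you apply the $(m',k)$-spread hypothesis, which bounds the \emph{probability mass} $\nu(\langle T_0\rangle)\le m'/k^{|T_0|}$, as if it bounded the \emph{cardinality} $|\supp(\nu)\cap\langle T_0\rangle|\le m'\,|\supp(\nu)|/k^{|T_0|}$, and likewise you infer $|\supp(\nu)|\ge k^q/m'$ from spreadness. These inferences are valid only when $\nu$ is (essentially) uniform over its support, which is exactly the situation the lemma is \emph{not} restricted to: in the paper's application $\nu$ is a conditioned decoder query distribution and can be badly non-uniform. Concretely, take $\nu$ that puts most of its mass on $\approx k^q/m'$ well-spread heavy sets and a tiny total mass on an equally large collection of light sets all sharing a fixed element $x$; then $\nu(\langle x\rangle)$ is tiny, so $\nu$ is $(m',k)$-spread, yet a constant fraction of $\supp(\nu)$ contains $x$, your cardinality bound fails, and your unweighted ratio $\E[X^2]/\E[X]^2$ blows up like $1/p'$ rather than staying $O(1)$. (The conclusion still holds in that example, but via the heavy sets, which your computation does not isolate.) The paper confronts precisely this issue with \pref{lemma:set-large}: it passes to a sub-family on which the minimum probability is within a $2\log|\F|$ factor of uniform, making the measure-to-count translation valid, and this is the sole source of the $\log$ factor hidden in the $\widetilde{\Omega}$; your write-up never introduces such a mechanism, and the logarithmic slack you attribute to $p'\approx p/r$ and to small sets does not cover it. Two standard repairs are available: (i) insert the paper's reduction to a near-uniform sub-family before your second-moment computation, or (ii) run a weighted second moment with $X=\sum_S \nu(S)\,Z_S$, for which the covariance term is controlled directly by $\sum_{T_0\subseteq S,\,|T_0|=i}\nu(\langle T_0\rangle)\le\binom{q}{i}m'/k^i$ (after the same padding-to-$q$-uniform step as in the paper, noting padded sets intersect only in original elements); either fix makes your Step 1 correct and the rest of your argument goes through.
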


    This lemma shows that $(m, k)$-spreadness is sufficient to obtain the required satisfaction guarantees, and provides a clear target for what parameters we should aim for.
   
    Suppose we can find a dense family $\D$ and a kernel $K$ such that the distribution over the petals, $\mu$, is $(m, k)$-spread with $k=n^{\Theta(1/q)}$ and $m |K| = O(1)$.
    Let $\D' \subseteq \D$.
    As noted above, $\mu_{\D'}$ is $(\frac{m}{\mu(\D')},k)$-spread.
    We could then apply \pref{lemma:spread_distribution-overview} with $\alpha = O(m |K|)$ (which is $O(1)$ by our assumption).
    The lemma then gives us sampling probability $p=O(1/k) = n^{-\Theta(1/q)}$ and failure probability $\varepsilon' = \exp\left(- \widetilde{\Omega} \left( \frac{m|K|}{qm/\mu(\D')} \right)\right) = 2^{-\Omega(\mu(\D') \cdot|K|)}$, implying that the set of petals is $(n^{-\Theta(1/q)}, \varepsilon^{\mu(\D'})$-satisfying for $\varepsilon = 2^{-\Omega(|K|)}$, as needed.
    In the next section we show that we can always achieve such parameters.

    Our proof of the Small-Set Spread Lemma relies on a delicate application of Janson's Inequality.
    We follow the proof strategy of prior spread lemmas \cite{rossman14, ALWZ21}, but crucially leverage our new definition of $(m,k)$-spread. 
    We use this strengthened spread property to obtain a tighter bound on the cumulative dependency among intersecting sets in the family, which in turn yields a smaller failure probability.

    While the above guarantee is stronger than that provided by standard $k$-spread, fortunately, the kernel structure of robust daises enable us to extract such a stronger spread in any distribution, as we discuss next.

\subsection{The spreadness extraction lemma}
    By the discussion above, to obtain the desired RLDC lower bound, the remaining task is as follows.
    We get an arbitrary distribution $\mu$ over a universe $U$ of size $n$, supported on sets of size at most $q$, which for the overview we assume is constant.
    We need to extract an $(m, k)$-spread distribution from $\mu$, and we are allowed to perform the following two operations:
    \begin{enumerate}
        \item \textbf{Puncturing}: We remove a small set of ``problematic'' elements $K \subseteq U$, where $|K| = o(n)$.\footnote{We note that it is sufficient for our purposes that $|K|\leq \delta n$. However, for a constant $q$, we can achieve this better $o(n)$ bound.}
        This $K$ corresponds to the kernel of the robust daisy.
        \item \textbf{Conditioning}: We are allowed to restrict the distribution to a ``well-behaved'' subfamily $\D \subseteq \mathrm{supp}(\mu)$, as long as the density $\mu(\D)$ is large  (e.g., $\mu(\D) \geq 0.99$).
    \end{enumerate}

    In \pref{lemma:large_spread_daisy}, we prove that for any distribution $\mu$, there \emph{always exists} such a set $K$ and a subfamily $\D$ which allow us to achieve the desired spreadness.

    \begin{lemma} (The Spreadness Extraction Lemma; informally stated, see \pref{lemma:large_spread_daisy}) \label{lemma:large_spread_daisy-intro}
        Let $\mu$ be a distribution over $\mathcal{P}(U)$, and assume that every set in $\supp(\mu)$ has at most $q$ elements.
    
        Then, there exists a family of sets $\D \subseteq \supp(\mu)$, and a set $K \subseteq U$ such that the distribution $\mu_{\D}$ punctured by $K$ is $(m, k)$-spread, where:
        \begin{align*}        
            k &=  n^{\Theta(1/q)}, &
            m \cdot |K| &= O(1), &
            \mu(\D) &\geq 0.99 \; .
        \end{align*}
    \end{lemma}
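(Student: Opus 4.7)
The plan is to construct $K$ and $\D$ incrementally, starting from $K_0 = \emptyset$ and $\D_0 = \supp(\mu)$, and iteratively fixing violations of $(m,k)$-spreadness. I would fix $k = n^{c/q}$ for a small constant $c > 0$, and treat $m$ as a parameter scaling with $1/|K|$ at termination, so that the invariant $m \cdot |K| \leq C$ is maintained for some absolute constant $C$ throughout the process.

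The iterative step would be as follows: given the current pair $(\D_i, K_i)$ with parameter $m_i = C/(|K_i|+1)$, I check whether $\mu_{\D_i}$ punctured by $K_i$ is already $(m_i, k)$-spread; if yes, the process terminates. Otherwise, there exists a minimal violator $T_i \subseteq U \setminus K_i$ with $\mu_{\D_i}(\langle T_i \rangle) > m_i/k^{|T_i|}$ while every proper subset of $T_i$ still satisfies the spreadness bound. The update depends on $|T_i|$: for small $|T_i|$ (below a threshold calibrated by $k$ and $m_i$) I would add the elements of $T_i$ to the kernel, and for large $|T_i|$ I would remove $\langle T_i \rangle$ from $\D$, consuming at least $m_i/k^{|T_i|}$ of the mass budget of $0.01$. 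Minimality of $T_i$ lower-bounds the marginal conditional $\mu_{\D_i}(\langle T_i \rangle)/\mu_{\D_i}(\langle T_i \setminus \{x\} \rangle) \geq 1/k$, ensuring that each operation makes real progress rather than merely refining a trivial obstruction.

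The main obstacle, which I anticipate to be the heart of the proof, is the simultaneous control of $|K|$ and of the total mass removed from $\D$. I would use a potential function along the lines of $\Phi_i = m_i \cdot |K_i| + \log(1/\mu(\D_i))$ and argue that each iteration decreases $\Phi$ by enough to bound the total number of iterations in both regimes, while still respecting the kernel and mass budgets. The delicate point is the threshold separating small from large violators: small-violator steps grow $|K|$ without consuming mass, while large-violator steps consume mass without (much) growing $|K|$, so a naive choice of threshold leaves a factor of $k = n^{\Theta(1/q)}$ gap in the amortization. If the direct potential argument proves too tight, an alternative is a two-phase construction: first choose $K$ to consist of the $L = \Theta(1/m)$ highest-density singletons under $\mu$, then condition $\D$ to eliminate residual violations of larger $T$, where the number of such residual violations can be bounded via a counting argument over supersets of elements outside $K$.
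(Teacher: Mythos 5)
There is a genuine gap, and it sits exactly where you flag it: neither the amortization for your iterative repair scheme nor the fallback two-phase construction is actually carried out, and the missing ingredient is the paper's central idea. The paper does not iterate at all. It defines the weighted degree $\bar{\mu}(u)=\sum_{S\ni u}\mu(S)/|S|$, partitions $U$ into $c+1$ geometric buckets of step $k=n^{1/c}$ with $c>q$, and takes the kernel to be everything above a single degree threshold --- but the threshold is chosen \emph{adaptively} at a ``good boundary'': a bucket $B_j$ such that for every $i$, a set $S$ has at most $i$ elements in the top $i+1$ buckets below the cutoff. A token-shifting argument shows every $S$ has at least $c-q$ good boundaries, so by averaging one boundary is good for sets of total mass $1-q/c$; that choice simultaneously gives $m|K|=O(1)$ (since $\bar\mu$ sums to $1$) and spreadness, because any $T$ avoiding the kernel must, by the good-boundary property, contain an element $x$ with $\bar\mu(x)\le m/k^{|T|}$, whence $\mu(\langle T\rangle)\le q\,\bar\mu(x)$. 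Nothing in your proposal plays the role of this threshold-selection step, and it cannot be dispensed with.

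Concretely, your fallback (kernel $=$ top $\Theta(1/m)$ singletons, then condition away larger violators ``by a counting argument'') fails for an arbitrary threshold: in the $(q{+}1)$-level $n^{1/q}$-regular tree example from the paper, if the cutoff falls inside a level, the parent--child pairs just below the cutoff have $\mu(\langle T\rangle)$ exceeding $m/k^{2}$ by a polynomial factor (for $k=n^{1/c}$ with $c>2q$), and such a violating pair is contained in a constant fraction of \emph{all} root-to-leaf paths --- so no conditioning retaining $99\%$ of the mass can eliminate them, and the ``counting over supersets'' cannot be small. Your primary iterative scheme has the mirror-image problem: the step ``remove $\langle T_i\rangle$ from $\D$'' removes at least $m_i/k^{|T_i|}$ mass per step, but there is no bound on how many disjoint violators exist (in the scenario above their stars cover most of the support), so the $0.01$ budget is blown; and routing such violators into $K$ instead shrinks $m_i=C/(|K_i|+1)$, which tightens the constraint and can spawn new violators, with no termination or $|K|$ bound argued. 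Your potential $\Phi_i=m_i|K_i|+\log(1/\mu(\D_i))$ does not see this cascade, and you yourself note the factor-$k$ gap in the amortization. In short, the proposal correctly identifies the two allowed operations (puncturing and conditioning) and the tension between them, but the heart of the proof --- the bucket partition by weighted degree, the good-boundary definition, the token-shifting count showing $c-q$ of the $c$ boundaries are good, the averaging step yielding a single boundary good for $99\%$ of the mass, and the one-element argument bounding $\mu(\langle T\rangle)$ --- is absent, and the substitutes offered do not work as stated.
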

    This extraction process builds upon \cite{gl19} and \cite{Goldreich23}.
    However, it is substantially more involved, as we extract $(m, k)$-spreadness, as opposed to merely a degree bound.

    We next provide a high-level overview of the proof of \pref{lemma:large_spread_daisy-intro}.

    \paragraph{Universe partitioning.}
    We partition the elements in $U$ into $c + 1 =\Theta(q)$ buckets.
    The partitioning is according to the (normalized) \emph{weighted degree} of each element, which we define as:
    \[
        \bar{\mu}(u) = \sum_{S \text{ s.t. } u \in S}\frac{\mu(S)}{|S|} \;.
    \]
    Note that $\bar{\mu}$ is a distribution over $U$.
    That is, $\sum_{x \in U}{\bar{\mu}(x)} = 1$.
    This distribution is equivalent to the following two-step random process:
    first, select a set $S \subseteq U$ according to $\mu$, and then select an element $x \in S$ uniformly at random.
    We remark that if $\mu$ is a uniform distribution over a uniform family $\F$, then $\bar{\mu}(u)$ is the normalized degree of $u$ in $\F$ (the fraction of sets in $\F$ containing $u$) divided by $q$.

    We set a \emph{base step size} $k = n^{1/c}$.
    The buckets are then defined by degree ranges: $B_0$ contains the elements with $\bar{\mu}(u) \leq 1/n$, and for $j \in [c]$, the bucket $B_j$ contains the elements with $\bar{\mu}(u) \in (k^{j-1}/n, k^j/n]$.
    Since $k^c = n$ and the degrees are normalized, this is indeed a partition of the universe.

    \paragraph{The kernel threshold.}
    We choose the kernel $K$ to be all elements with $\bar{\mu}(u) > k^j/n$ for some $j \in [c]$ that we pick later.
    That is, the \emph{threshold} $m =  k^j/n$ is set to be one of the bucket boundaries.
    Since $\sum_{x \in U}{\bar{\mu}(x)} = 1$, this implies $m |K| = O(1)$.
    The question is how to choose this bucket $j$ (which defines $K$ and $m$), such that after puncturing by $K$ and conditioning on an appropriate subfamily $\mathcal{D}$ (as we will see next), the resulting distribution is $(O(m), k)$-spread.

    \paragraph{Token distributions and good boundaries.}
    For any fixed set $S$, the partitioning categorizes its $q$ elements into the buckets.
    We treat these elements as $q$ \emph{tokens} distributed among the $c+1$ buckets, and call this distribution the \emph{token distribution} of $S$.

    Now, we define a key property: we say that a bucket $B_j$ is a \emph{good boundary} for $S$ if the token distribution satisfies the following property: for every $i \in \{0, \dots, j\}$, the set of $i+1$ buckets from $B_{j-i}$ to $B_j$ (inclusive) contains at most $i$ tokens.
    The proof relies on the following two claims:
    \begin{enumerate}
        \item If $B_j$ is a good boundary for every $S \in \D$ (for some $\D \subseteq \supp(\mu)$), then after the removal of $K=\{ u \mid \bar{\mu}(u) > k^j/n\}$ the distribution $\mu_{\D}$ punctured by $K$ is $(O(m), k)$-spread.
        \item There exists $\D \subseteq \supp(\mu)$ with $\mu(\D) > 0.99$ and a bucket $B_j$ which is a good boundary for every $S \in \D$.
    \end{enumerate}
    We next sketch the proofs of these two claims, which form the technical heart of the lemma.

    \paragraph{From good boundaries to spreadness.}
    Recall that to prove spreadness, we need to show that for every $T \subseteq U$, the total mass of sets containing $T$ (denoted by  $\mu(\langle T\rangle)$) is upper bounded by $m/k^{|T|}$.
    The key observation is that: if $T$ \emph{contains an element with small weighted degree}, then  $\mu(\langle T\rangle)$ itself is upper bounded.
    Specifically, let $x \in T$.
    Then, since each set containing $T$ also contains $x$, and each set is of size at most $q$, we have
    $\mu(\langle T\rangle) \leq q \cdot \bar{\mu}(x)$.

    Now, since we removed all elements with $\bar{\mu}(u) > k^j/n$, we can assume $T$ does not contain any such element.
    But, by construction, $B_j$ is a good boundary, hence, by taking $i=|T|-1$, $T$ can contain at most $|T|-1$ elements in the buckets between $B_{j-|T|+1}$ and $B_j$ (inclusive).
    In other words, $T$ must contain an element in $B_{j-|T|}$ (or a lower bucket).
    But all these elements have $\bar{\mu}(u) \leq k^{j-|T|}/n = m/k^{|T|}$.
    Hence, by the above argument, we get the bound $\mu(\langle T\rangle) \leq q m/k^{|T|}$. 

    \paragraph{Abundance of good buckets.}
    We prove that for a any fixed set $S$, at least $c-q$ buckets constitute good boundaries.
    Towards this end, we use a subtle analysis of a \emph{token shifting} process: by iteratively moving tokens from crowded buckets to higher ones, we can argue that the final configuration will have at most $q$ buckets containing any tokens.
    This leaves at least $c-q$ buckets empty, and we prove that these empty buckets must be good boundaries. Taking a sufficiently large $c$ (e.g., $c=100q =\Theta(q)$) yields that almost all the boundaries (say, $99\%$) are good.
    In turn, this implies that there exists (at least) one bucket which is a good boundary for $99\%$ of the sets.
    We choose this bucket to determine $m$ and $K$, and set $\D$ to be the sets for which this bucket is good.

    \paragraph{The number of buckets.}
    It is instructive to discuss how we choose the exact number of buckets, $c$.
    On the one hand, we want this number to be as small as possible, as $c$ determines the exact constant in the exponent of the lower bound.
    Specifically, the lower bound we achieve is $n = \widetilde{\Omega}(k^{1+ \frac{1}{c-1}})$.
    
    On the other hand, it must be sufficiently large; the fraction of good boundaries (out of the total number of buckets) must be larger than the soundness error of the relaxed decoder, which is $1 - \sigma$.
    Otherwise, even if the chosen boundary is good for a large fraction of sets, it is possible that all these sets are ``bad'' --- leading the decoder to a wrong output when choosing them.
    This means we must choose $c$ such that the fraction of good boundaries $\frac{c - q}{c} = 1-q/c$ is larger than $1 - \sigma$; that is, $q/c < \sigma$.
    More precisely, we need $\sigma$ to be bounded away from $q/c$ by some constant independent of $k$.

    Hence, we choose $c = \lceil \frac{q}{\sigma} \rceil+1$, which is the minimal integer satisfying this requirement.
    This parameter choice gives the exact exponent in the lower bound -- namely, $\frac{1}{\lceil q / \sigma \rceil}$.

    \paragraph{The centrality of the spread parameter $k=n^{1/q}$.}
    The choice for the value of $k$, the ``spreadness'' parameter or ``step size'', plays a central role in determining the exact lower bound achieved, even more than the choice of $c$.
    Let us follow how this parameter propagates through the different parts of the proof.

    First, the Small-Set Spread Lemma yields a $(p, \varepsilon)$-satisfying set system with $p=\alpha/k$. Recall we set $\alpha=O(1)$ in this overview, so $p=O(1/k)$.\footnote{More accurately, in the full proof we need to set $\alpha=O(\log^2n \log^2|\Sigma|)$ to compensate for a few minor terms which we do not cover in the overview.}
    This sampling probability $p$ is then used for the robust daisy we extract. By our reduction (\pref{lemma:daisy-to-lowerbound-inf}), this value of $p$ is what gives the final lower bound on the code's length.

    Second, we remark that our current proof technique cannot get an improved value for $k$.
    As we have seen, $k=n^{1/c}$ is the base step size between the buckets, and our proof requires $c > q$.
    If there were fewer than $q + 1$ buckets (i.e., $c \leq q$), it would be possible for a set $S$ to place one of its $q$ tokens in every single bucket.
    In this case, our token-shifting argument would fail to guarantee an empty bucket, and there would be no "good boundaries" at all.

    This $k = n^{\Theta(1/q)}$ barrier is not a coincidence; it appears to be fundamental.
    This exponent is not just a limitation of our proof technique but an inherent feature of the problem, stemming from three different points of view:
    \begin{enumerate}
        \item \textbf{From Construction:} It matches the $n^{1+O(1/q)}$ upper bound, so a better parameter would imply an impossible lower bound.
        \item \textbf{From Combinatorics:} As observed above, $k=n^{\Theta(1/q)}$ is the best spreadness parameter achieved for arbitrary distributions over sets of size $q$.
        \item \textbf{From Our Proof:} Our token-shifting technique, which requires $c > q$ buckets, independently arrives at the same $k=n^{\Theta(1/q)}$ parameter.
    \end{enumerate}
    These perspectives --  the upper bound, the combinatorial limitation of spreadness, and our own extraction method -- solidify the $1/q$ exponent as a central, inherent property of the problem.

\subsection{Organization}
    The rest of the paper is organized as follows.
    In \pref{section:preliminaries} we give standard definitions and notations, including those of RLDCs.
    In \pref{section:spread} we prove the spread lemmas, making the link between spreadness and robust daisies.
    In \pref{section:robust_daisies} we prove the spreadness extraction lemma, and apply it to prove the robust daisy lemma.
    In \pref{section:rldc} we show the reduction from proving a lower bound for RLDCs to the problem of finding a robust daisy, and we apply it to finally prove \pref{thm:main}.
\section{Preliminaries} \label{section:preliminaries}
    We provide notation for set and distributions that we shall use throughout the paper.
    
\subsection{Basic notations}
    \paragraph{Set notation.} Let $U$ be a finite set.
    \begin{itemize} 
        \item For a set $U$, we denote by $\mathcal{P}(U)$ the power set of $U$, i.e., the family of all subsets of $U$. Furthermore, let $\mathcal{P}_{\leq q}(U) = \{S \in \mathcal{P}(U) \;;\; |S| \leq q\}$ denote the family of subsets of size at most $q$. 
    
        \item The \emph{star} of a set $T \subseteq U$, denoted by $\langle T \rangle$, is the family of subsets of $U$ that contain $T$;
        that is, $\langle T \rangle = \{ S \subseteq U \mid S \supseteq T\} \subseteq \mathcal{P}(U)$.

        We use a slight abuse of notation and write $\langle x \rangle$ to denote $\langle \{ x \} \rangle$ for an element $x \in U$.

        \item A family of sets $\F \subseteq \mathcal{P}(U)$ is called \emph{$q$-uniform} if every set $S \in \F$ has size $|S| = q$.

        \item  We use $\mathbb{I}[\cdot]$ to denote the \emph{indicator function}.
        For a set $S$, the indicator function of $S$ is defined as:
        \[
            \mathbb{I}[x \in S] = 
            \begin{cases} 
                1 & \text{if } x \in S \\
                0 & \text{if } x \notin S 
            \end{cases}
        \] 
    \end{itemize}

    \paragraph{Distributions.}
    Let $D$ be a discrete domain.
    
    \begin{itemize}
        \item A \emph{distribution} $\mu$ over $D$ is a function $\mu: D \rightarrow [0, 1]$ such that $\sum_{x \in D}\mu(x)=1$.

        For any subset $A \subseteq D$, the \emph{probability mass} or \emph{density} of $A$ is $\mu(A) = \sum_{x \in A}\mu(x)$.

        \item The \emph{support} of a distribution $\mu$ is the set of elements with non-zero probability, denoted  $\supp(\mu)=\{ x \in D \mid \mu(x) > 0\}$.

        \item The \emph{conditioning} of a distribution $\mu$ to a set $A \subseteq D$ with non-zero density is a distribution over $A$, denoted $\mu_A$, defined for each $x \in A$ by $\mu_A(x)=\frac{\mu(x)}{\mu(A)}$.
        The following straightforward observation relates the probability of events in the conditional distribution to the original distribution.

        \begin{fact} \label{fact:conditioned_subset} 
            For any $B \subseteq D$, $\mu_A(B) =\mu(B)/\mu(A)$.
        \end{fact}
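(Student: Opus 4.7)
The plan is straightforward: this is purely a matter of unfolding the definition of $\mu_A$. By definition, $\mu_A$ is a distribution supported on $A$ with $\mu_A(x) = \mu(x)/\mu(A)$ for each $x \in A$, so it assigns zero mass to any element outside $A$. Under the standard convention (implicit in the statement) that $\mu_A(B)$ means the mass of $B \cap A$ under $\mu_A$, the proof proceeds by a single chain of equalities:
\[
    \mu_A(B) = \sum_{x \in B \cap A} \mu_A(x) = \sum_{x \in B \cap A} \frac{\mu(x)}{\mu(A)} = \frac{1}{\mu(A)} \sum_{x \in B \cap A} \mu(x) = \frac{\mu(B \cap A)}{\mu(A)}.
\]
If $B \subseteq A$, then $B \cap A = B$ and the right-hand side is exactly $\mu(B)/\mu(A)$, as claimed; for general $B \subseteq D$, the natural reading of the statement is that $\mu(B)$ denotes $\mu(B \cap A)$, which again yields the identity.

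There is essentially no obstacle beyond keeping the domain of $\mu_A$ in mind; the fact is a one-line consequence of the definition of conditional probability, stated explicitly so that subsequent sections can freely pass between probabilities under $\mu$ and $\mu_A$ without repeating the computation. I would present it as a two-line proof, with the display above serving as the full derivation, followed by the brief remark about the $B \subseteq A$ convention.
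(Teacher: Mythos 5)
Your proof is correct and is exactly the definitional unfolding the paper intends; the paper states \pref{fact:conditioned_subset} without proof as an immediate consequence of the definition of $\mu_A$. Your closing remark about general $B \subseteq D$ (reading the relevant mass as that of $B \cap A$, since $\mu_A$ lives on $A$) correctly addresses the only imprecision in the statement, and in the paper's actual uses the bound is applied as the inequality $\mu_A(B) \leq \mu(B)/\mu(A)$, which your derivation also gives.
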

    \end{itemize}

\subsection{Concentration inequalities}
    We use the following version of Janson's inequality.
    \begin{lemma}[{Janson's Inequality \cite[Chapter 8.1] {AS16}}] \label{lem:janson}
        Let $\S \subseteq \mathcal{P}(U)$ be a family of sets.
        Let $W \sim \mathrm{Bin}(U, p)$. For $S \in \S$, let $Z_S$ be the indicator of the event that $S \subseteq W$.
        Let $X = \sum_{S \in \S} Z_S$ and $M = \E[X]$.
        For, $S \neq T \in \S$, let $S \sim T$ iff $S \cap T \neq \emptyset$.
        Define 
        \[
            \Delta = \sum_{(S,T) \;:\; S \sim T} \E[Z_SZ_T] \; .
        \]
        Then, 
        \[
            \Pr[X = 0] \leq \exp(-M + \Delta/2)  \; ,
        \]
        and if $M \leq \Delta$, then
        \[
            \Pr[X = 0] \leq \exp(-M^2/2\Delta) \; .
        \]
    \end{lemma}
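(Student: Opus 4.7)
The plan is to prove the inequality by the classical Janson telescoping-plus-FKG argument. For each $S \in \S$ let $A_S$ denote the monotone event $\{S \subseteq W\}$, so that $\{X = 0\} = \bigcap_{S \in \S} \overline{A_S}$. Fix any linear ordering $S_1, \ldots, S_N$ of $\S$ and write
\[
    \Pr[X = 0] \;=\; \prod_{i=1}^{N} \Pr\bigl[\overline{A_{S_i}} \,\big|\, \overline{A_{S_1}} \cap \cdots \cap \overline{A_{S_{i-1}}}\bigr].
\]
The task reduces to upper-bounding each factor. I would first drop from the conditioning those $\overline{A_{S_j}}$ with $S_j \cap S_i = \emptyset$: under the product measure $\mathrm{Bin}(U, p)$ these events depend on coordinates disjoint from $A_{S_i}$, hence are independent of it. Only the ``dependent'' events with $S_j \sim S_i$ survive the reduction.

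For the surviving conditioning, the key step is an FKG/Harris-type inequality: $A_{S_i}$ is increasing and each $\overline{A_{S_j}}$ is decreasing in the product Bernoulli measure, so conditioning on a decreasing event can only decrease $\Pr[A_{S_i}]$. A short inclusion-exclusion combined with this monotonicity yields
\[
    \Pr\bigl[A_{S_i} \,\big|\, \bigcap_{j < i, \, S_j \sim S_i} \overline{A_{S_j}}\bigr] \;\geq\; \Pr[A_{S_i}] \;-\; \sum_{j < i, \, S_j \sim S_i} \Pr[A_{S_i} \cap A_{S_j}].
\]
Plugging this into the complementary factor, applying $1 - x \leq e^{-x}$, and multiplying over $i$ collapses the total exponent to $-M + \Delta/2$: the singletons contribute $\sum_i \Pr[A_{S_i}] = M$, while the pair sum over indices $(j, i)$ with $j < i$ and $S_j \sim S_i$ is precisely $\Delta/2$ (half of the symmetric sum defining $\Delta$).

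For the stronger bound under the assumption $M \leq \Delta$, I would apply the standard subsampling trick. Introduce independent Bernoulli$(q)$ indicators $(I_S)_{S \in \S}$ on an enlarged product space and define $\tilde{A}_S = A_S \cap \{I_S = 1\}$. Then $\Pr[\tilde{A}_S] = q \Pr[A_S]$ and $\Pr[\tilde{A}_S \cap \tilde{A}_T] = q^2 \Pr[A_S \cap A_T]$ for $S \neq T$, while $\bigcap_S \overline{\tilde{A}_S} \supseteq \{X = 0\}$. Applying the first bound to the augmented family gives $\Pr[X = 0] \leq \exp(-qM + q^2 \Delta/2)$, and the optimal choice $q = M/\Delta$, which is legitimate because $M \leq \Delta$ forces $q \in (0, 1]$, yields $\exp(-M^2/(2\Delta))$, as required.

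The main obstacle is the FKG step, which is where all the real content sits: one must justify that conditioning on a large intersection of monotone down-events leaves the up-event $A_{S_i}$ with probability at least $\Pr[A_{S_i}]$ minus the pairwise correlation terms. Everything else — the telescoping, the $1 - x \leq e^{-x}$ estimate, and the subsampling optimization — is routine bookkeeping once that correlation inequality is in hand. Since the full argument is entirely standard (see, e.g., Alon--Spencer), a proof proposal here essentially amounts to recalling the above chain of reductions.
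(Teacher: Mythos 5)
This lemma is not proved in the paper at all: it is quoted from Alon--Spencer \cite[Chapter 8.1]{AS16} as a known black box, so there is no internal proof to compare your attempt against. Your sketch reconstructs the standard textbook argument (telescoping over a linear order, a Harris/FKG correlation step, $1-x\le e^{-x}$, and the subsampling trick with $q=M/\Delta$ for the extended bound when $M\le\Delta$), and that outline is essentially the right one. One step, however, is stated in a way that does not hold as written: you claim you may simply drop from the conditioning the events $\overline{A_{S_j}}$ with $S_j\cap S_i=\emptyset$ because they are independent of $A_{S_i}$. Independence of $A_{S_i}$ from those events does not justify discarding them, since $A_{S_i}$ is not independent of the \emph{joint} conditioning $B\cap C$ (with $B$ the intersection of the dependent events and $C$ of the independent ones); if anything, the FKG heuristic says that adding the decreasing event $C$ to the conditioning can only lower $\Pr[A_{S_i}\mid\cdot]$, which is the unfavorable direction for the reduction you want. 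The standard proof keeps $C$ in the conditioning throughout: one writes
\[
  \Pr[A_{S_i}\mid B\cap C]\;\ge\;\Pr[A_{S_i}\cap B\mid C]\;\ge\;\Pr[A_{S_i}\mid C]-\sum_{j<i,\;S_j\sim S_i}\Pr[A_{S_i}\cap A_{S_j}\mid C]\;,
\]
uses genuine independence only for the first term ($\Pr[A_{S_i}\mid C]=\Pr[A_{S_i}]$), and invokes Harris/FKG only for the pair terms ($\Pr[A_{S_i}\cap A_{S_j}\mid C]\le\Pr[A_{S_i}\cap A_{S_j}]$, an increasing event conditioned on a decreasing one). With that repair the bound $\exp(-M+\Delta/2)$ follows exactly as you describe, and your subsampling derivation of $\exp(-M^2/2\Delta)$ is fine, with the minor remark that the augmented family lives under a non-uniform product measure, so you need (and the same proof gives) Janson for general product measures rather than only for $\mathrm{Bin}(U,p)$.
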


\subsection{Relaxed Locally Decodable Codes} \label{section:ecc}
    We recall the formal definition of relaxed locally decodable codes \cite{BGHSV06}.

    \begin{definition} (Relaxed locally decodeable codes) \label{def:rldc}
        Let $C:\{0,1\}^k \rightarrow \Sigma^n$ be an error correcting code.
        A $(q, \delta,\sigma)$-relaxed decoder for $C$ is a randomized procedure $\B$ that on an explicit input $i \in [k]$, and oracle access to $w \in \Sigma^n$, outputs an element of $\{0, 1, \bot\}$, and satisfies the following requirements: 
        \begin{enumerate}
            \item (completeness) If $w = C(x)$ for some $x \in \{0, 1\}^k$, then $\B^{w}(i) = x_i$.\footnote{We remark that a common variation of RLDCs allows the decoder to err with a small probability even on valid codewords.
            Our lower bound holds for such codes as well (assuming they are linear), but for simplicity, we assume perfect completeness throughout.}
            \item (relaxed local decoding) If there exists $x \in \{0, 1\}^k$ such that $\dist(w, C(x)) \leq \delta$, then $\B^{w}(i) \in \{x_i, \bot\}$ with probability at least $\sigma$.
            \item For every input $i$ and oracle access to any $w \in \Sigma^n$, $\B$ makes at most $q$ queries.
        \end{enumerate}
        We say that a $\B$ is non-adaptive if it determines all its queries based on its explicit input (namely, the index to decode) and internal coin tosses, independently of the specific $w$ to which it is given oracle access.
        We refer to $\delta$ as the decoding radius of the decoder, and to $\sigma$ as its soundness probability.
    \end{definition}

    A code $C$ with a $(q, \delta, \sigma)$-relaxed decoder is often referred to as a $(q, \delta, \sigma)$-relaxed locally decodable code.
    
    A relaxed \emph{corrector} for a code is defined analogously to a relaxed decoder, but its objective is to correct any codeword symbol rather than decode a message bit.
    That is, the three requirements in \pref{def:rldc} are extended to any $i \in [n]$, and the corrector's output should be $c_i$ (or $\bot$) instead of $x_i$.

    We say that a code $C:\{0,1\}^k \rightarrow \Sigma^n$ is \emph{linear} if $\Sigma$ is a field and the image of $C$ is a linear subspace of $\Sigma^n$.
\section{Spread lemmas} \label{section:spread}
    In this section, we prove two spread lemmas.
    The first is a spread lemma for \emph{families of sets}.
    We then proceed to extend this lemma, and prove a spread lemma for distributions.

    Let us begin by reiterating the relevant definitions.

    \begin{definition} [Satisfying set system] \label{def:satisfying}
        Let $\mathcal{F}$ be a family of sets over a universe $U$.
        We say that $\mathcal{F}$ is $(p, \varepsilon)$-satisfying if
        \[
            \Pr_{W \sim \textrm{Bin}(U, p)}[\exists S \in \mathcal{F}, S \subseteq W] \geq 1-\varepsilon \:.
        \]
    
        Here, $W \sim \textrm{Bin}(U, p)$ denotes the random set where each element $u \in U$ is chosen to be in $W$ independently with probability $p$.
    \end{definition}
    We remark that the notion of a satisfying set system, and its name, originates from the study of DNF formulas. When a set system is interpreted as a DNF formula, this condition is that the formula has more than a $1 -\varepsilon$ probability of being satisfied on $p$-biased inputs.

    In the following, recall that for $T \subseteq U$, the star of $T$, denoted $\langle T \rangle$, is the family of all subsets of $U$ that contain $T$.
    \begin{definition}[$(m, k)$-spread distributions, reiterating \pref{def:spread-distribution-intro}] \label{def:spread-distribution} 
        Let $\mu$ be a distribution over $\mathcal{P}(U)$, let $k > 1$ and let $m \in (0, 1]$.
        We say that $\mu$ is \emph{$(m, k)$-spread} if for any non-empty set $T \subseteq U$,
        \[
           \mu(\langle T \rangle) \leq \frac{m}{k^{|T|}} \:.
        \]
    \end{definition}

    \begin{definition}[Spread families] \label{def:spread-family} 
        Let $\F$ be a family of sets over a universe $U$, let $k > 1$, and let $m \in (0, 1]$.
        We say that $\F$ is $(m, k)$-spread if the uniform distribution over $\F$ is $(m, k)$-spread.
        That is, if for any non-empty set $T \subseteq U$,
        \[
            \frac{\deg_{\F}(T)}{|\F|} \leq \frac{m}{k^{|T|}} \:,
        \]
        where $\deg_{\F}(T)$ is the number of sets in $\F$ that contain $T$.
    \end{definition}

\subsection{The spread lemma for families of sets}
    We start by proving the spread lemma for fixed set systems.
    
    \begin{lemma} [``The Small-Set Spread Lemma'' - for families of sets] \label{lemma:spread}
        Fix $k > 1$ and $m \in (0, 1]$.
        Let $\mathcal{F}$ be an $(m, k)$-spread family of sets over universe $U$, and assume every set in $\F$ has at most $q$ elements.
        Then, for every $\alpha > 2q$, the family $\F$ is $(p, \varepsilon)$-satisfying with $p = \alpha/k$ and $\varepsilon = \exp \left(-\frac{\alpha}{4 q m} \right)$.
    \end{lemma}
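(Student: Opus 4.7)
The plan is to apply Janson's Inequality (\pref{lem:janson}) to the random variable $X = \sum_{S \in \F} Z_S$, where $Z_S = \mathbb{I}[S \subseteq W]$ and $W \sim \mathrm{Bin}(U, p)$ with $p = \alpha/k$. Since $\{\exists S \in \F : S \subseteq W\} = \{X > 0\}$, and $\alpha > 2q$ together with $p \leq 1$ forces $k \geq \alpha > 2q$, the task reduces to showing $\Pr[X = 0] \leq \exp(-\alpha/(4qm))$.

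The central computation is the bound on the dependency sum $\Delta = \sum_{(S, T) : S \neq T, S \cap T \neq \emptyset} p^{|S \cup T|}$. I would fix $S$ and partition the inner sum by the non-empty intersection $A = S \cap T$, relaxing the condition $S \cap T = A$ to $A \subseteq T$. The $(m, k)$-spread hypothesis gives $\deg_\F(A) \leq m|\F|/k^{|A|}$, so in the $q$-uniform case (where $|T| = q$) one gets $\sum_{T \supseteq A} p^{|T| - |A|} = p^{q - |A|} \deg_\F(A) \leq p^{q - |A|} \cdot m|\F|/k^{|A|}$. Summing over non-empty $A \subseteq S$ and invoking the binomial estimate $\sum_{i=1}^{q} \binom{q}{i}\alpha^{-i} \leq (1+1/\alpha)^q - 1 \leq 2q/\alpha$ (valid for $\alpha > 2q$), then summing over $S \in \F$ and using $|\F| p^q = M$, yields the clean bound $\Delta \leq (2qm/\alpha) \cdot M^2$, where $M = \E[X]$.

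With this estimate in hand, I finish by cases. If $M \leq \Delta$, the quadratic form of Janson gives $\Pr[X = 0] \leq \exp(-M^2/(2\Delta)) \leq \exp(-\alpha/(4qm))$, exactly as required. Otherwise the linear form gives $\Pr[X = 0] \leq \exp(-M + \Delta/2) \leq \exp(-M/2)$, and I close this case by lower-bounding $M$ using the spread hypothesis itself: taking a set $S^* \in \F$ of maximum size $q^* \leq q$ and applying spreadness with $T = S^*$ yields $1 \leq \deg_\F(S^*) \leq m|\F|/k^{q^*}$, hence $|\F| \geq k^{q^*}/m$. Since $p^{|S|} \geq p^{q^*}$ for every $S \in \F$, this gives $M \geq |\F| p^{q^*} \geq \alpha^{q^*}/m \geq \alpha/(2qm)$ (for $\alpha \geq 2q \geq 2$), so that $\exp(-M/2) \leq \exp(-\alpha/(4qm))$, completing the argument.

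The main obstacle I anticipate is the $\Delta$ estimate: extracting a factor of $M^2$ (rather than the weaker $M \cdot |\F|$) from the intersection sum depends on the simultaneous alignment of the identity $pk = \alpha$, the spread-based degree bound, and the binomial tail estimate. The non-$q$-uniform case follows by the same factorization with a touch more bookkeeping over the varying $p^{|T|}$ factors, but the structural skeleton of the argument is unchanged.
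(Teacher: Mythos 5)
Your treatment of the uniform case is correct and is essentially the paper's own proof: the same application of Janson's inequality, the same spreadness-driven estimate $\Delta \leq (2qm/\alpha)\,M^2$ (the paper organizes the count by intersection size $t$, bounding the number of intersecting pairs by $|\F|^2 m (s/k)^t$, which is exactly your factorization over the intersection $A$), and the same two-case finish, including the lower bound on $M$ obtained by applying spreadness to a largest set. The one place you diverge is the reduction of the general case (sets of size \emph{at most} $q$) to the uniform case, which you dismiss as ``a touch more bookkeeping over the varying $p^{|T|}$ factors.''

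That step is a genuine gap: if you keep the non-uniform family and apply Janson to it directly, the clean bound $\Delta \leq (2qm/\alpha)M^2$ is simply false, and in fact both forms of Janson's inequality can be too weak to yield the claimed $\varepsilon$, no matter how carefully you track exponents. Concretely, take $m=1$, $q=5$, $\alpha=11$, $k$ large, and let $\F$ consist of $d=k^4$ sets $\{a,b_i\}$ of size $2$ through a common element $a$, plus $N=k^5$ pairwise disjoint $5$-sets disjoint from everything else. This family is $(1,k)$-spread (the binding checks are $\deg_{\F}(\{a\})=k^4\leq |\F|/k$ and $1\leq |\F|/k^5$), and all sets have size at most $5$. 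With $p=\alpha/k$ one gets $M\approx dp^2=\alpha^2k^2$, while the intersecting pairs among the small sets alone give $\Delta\geq d(d-1)p^3\approx \alpha^3k^5$; hence $-M+\Delta/2>0$ and $M^2/(2\Delta)\approx \alpha/(2k)$, so the best Janson can give on this family is roughly $\exp(-\alpha/2k)\to 1$, far weaker than the required $\exp(-\alpha/(4qm))=\exp(-11/20)$. (The lemma is still true here -- the $N$ disjoint $5$-sets already make $\F$ satisfying -- but your route cannot see that.) The missing idea is the paper's uniformization: pad each set with fresh dummy elements up to the maximum size $s$, observe that containment of a padded set implies containment of the original, and check that padding preserves $(m,k)$-spreadness (each dummy lies in a single set, and spreadness applied to a largest set gives $k^{s}\leq m|\F|$, so stars through dummies are fine). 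With that reduction inserted, your computation goes through verbatim; without it, or some equivalent modification of the family, the argument as proposed does not close.
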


\begin{proof}   
    Our goal is to lower bound $\Pr_{W \sim \text{Bin}(U, p)}[\exists S \in \F, S \subseteq W]$, where $p = \alpha/k$.
    The proof relies on Janson's inequality.

    \paragraph{Setup of Janson's inequality.} 
    For any $S \in \F$, let $\mathcal{Z}_S$ be the indicator value for the event $S \subseteq W$.
    Denote $S \sim T$ if $S, T \in \F$ intersect.
    Define: 
    \begin{equation}
        M = \sum_{S \in \F}{\E[\mathcal{Z}_S]}, \quad \Delta = \sum_{S \sim T}{\E[\mathcal{Z}_S \mathcal{Z}_T]}\:. 
    \end{equation}
    Recall that Janson's inequality states that:
    \begin{enumerate} 
        \item when $\Delta \leq M$: 
            \[
                \Pr_{W \sim \text{Bin}(U, p)}[\exists S \in \F, S \subseteq W] \geq 1 - \exp \left(-M + \Delta/2\right) \geq 1 - \exp(-M/2).
            \]
        \item when $\Delta > M$:
        \[
            \Pr_{W \sim \text{Bin}(U, p)}[\exists S \in \F, S \subseteq W] \geq 1 - \exp \left(-{\frac{M^2}{2\Delta}}\right).
        \]
    \end{enumerate}

    Let $s \leq q$ be the size of the largest set in $\F$.
    It is sufficient to show that
    \[
        \min \left(\frac{M^2}{2\Delta}, \frac{M}{2}\right) \geq \frac{\alpha}{4 s  m} \geq \frac{\alpha}{4 q  m} \:.
    \]

    \paragraph{Uniformity assumption.}
    In what follows, we assume that $\F$ is $s$-uniform.
    This assumption can be made without loss of generality.
    If the family is not uniform, we can increase the universe with a set of ``dummy'' elements and pad each set $S \in \F$ with $s - |S|$ distinct dummies to create a new $s$-uniform family $\F'$.
    This transformation only makes the required condition stricter: for any sample $W$ from the new universe, if a padded set $S' \in \F'$ is fully contained in $W$, then its original counterpart $S$ is necessarily contained in $W$ as well. 
    
    Also note that since each new dummy element is contained only in a single set, this transformation does not affect the spreadness of $\F$.
        
    \paragraph{Estimating $M$.}
    By assumption, every set in $\mathcal{F}$ has $s$ elements, and hence for any $S \in \mathcal{F}$ we have $\E[\mathcal{Z}_S] = p^s$.
    Therefore,
    \begin{equation} \label{eq:mu}
        M = \sum_{S \in \mathcal{F}}{\E[\mathcal{Z}_S]} = |\F| \cdot p^s\:.
    \end{equation}    

    \paragraph{Bounding $\Delta$.}
    For any $t \in [s]$, let $r_t$ be the number of pairs of sets $S, T \in \F$ such that $|S \cap T| = t$.
    Any such $S, T \in \mathcal{F}$ share $t$ vertices, and have $s - t$ unique vertices each.
    Therefore, the probability of sampling all of the vertices of both $S, T$ is
    \[
        \E[\mathcal{Z}_S \mathcal{Z}_T] = p^{t} \cdot (p^{s-t})^2 = p^{2s-t} \:.
    \]
    
    Hence, it follows that
    \[
        \Delta = \sum_{S \sim T}{\E[\mathcal{Z}_S \mathcal{Z}_T]} = \sum_{t=1}^{s}{r_t\cdot p^{2s-t}} = p^{2s} \sum_{t=1}^{s}{r_t \cdot p ^{-t}}\:.
    \]

    We proceed to bound $r_t$.
    For each of the $|\F|$ subsets in the family, there are $\binom{s}{t} \leq s^t$ options to choose the intersection set $R \subseteq S$ of size $t$.
    By the spreadness hypothesis, each such set $R$ is contained in at most $\frac{m}{k^t} |\F|$ sets of $\F$.
    Therefore, in total, $r_t \leq |\F|^2 m \left(\frac{s}{k} \right)^t$, and we get the upper bound
    \[
        \Delta \leq p^{2s} \sum_{t=1}^{s}{|\F|^2 m \left(\frac{s}{k} \right)^t \cdot p ^{-t}} = m p^{2s} |\F|^2 \sum_{t=1}^{s}{ \left(\frac{s}{pk} \right)^t}\:.
    \]

    Next, since $pk = \alpha$ and by assumption $\alpha > 2s \implies(1 - s / \alpha) >1/2$:
    \begin{equation} \label{eq:delta}
        \Delta \leq m p^{2s} |\F|^2 \sum_{t=1}^{s}{\left(\frac{s}{pk} \right)^t} \leq
        m p^{2s} |\F|^2 \sum_{t=1}^{\infty }{ \left( \frac{s}{\alpha} \right)^t} = m p^{2s} |\F|^2 \left( \frac{s/\alpha}{1-s/\alpha} \right) \leq
        m p^{2s} |\F|^2 \frac{2s}{\alpha} \:.
    \end{equation}

    \paragraph{Applying Janson's inequality.}
    Combining \cref{eq:mu} and \cref{eq:delta}, we conclude that
    \[
        \frac{M^2}{2\Delta} \geq \frac{(|\F| p^s)^2}{2 m p^{2s} |\F|^2 \alpha^{-1} 2s} = \frac{\alpha}{ 4s m}\:.
    \]

    Also, note that 
    \[
        \frac{M}{2} = \frac{p^s|\F|}{2} = \frac{\alpha^s|\F|}{2k^s} \geq \frac{\alpha }{2} \cdot \frac{|\F|}{k^s}
    \]
    where the last inequality follows since $\alpha \geq 1$.
    Now, by assumption, there exists a set $S \in \F$ (without dummy elements) with $|S| = s$.
    Hence, from the spreadness of $\F$ when applied to $T=S$, we get $1 \leq \deg(S) \leq |\F| \cdot \frac{m}{k^s} \implies \frac{|\F|}{k^s} \geq \frac{1}{m}$, and then:
    \[
        \frac{M}{2} \geq  \frac{\alpha }{2} \cdot \frac{1}{m} \geq \frac{\alpha}{ 4s m}
    \]
\end{proof}

\subsection{The spread lemma for distributions}
    We use the following helper lemma to transition from distributions to fixed set systems.
    \begin{lemma} \label{lemma:set-large}
        Let $\mu$ be a distribution over a support $R$ with $|R| \geq 2$.
        There exists a non-empty set $A \subseteq R$ such that for every $a \in A$:
        \[
            \mu(a) \geq \frac{1}{2|A| \log(|R|)}
        \]
    \end{lemma}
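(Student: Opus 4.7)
The plan is to prove the lemma via a direct prefix-sum argument on the elements of $R$ ordered by decreasing probability mass. Let $n := |R|$ and sort the elements as $r_1, r_2, \ldots, r_n$ so that $\mu(r_1) \geq \mu(r_2) \geq \cdots \geq \mu(r_n)$. For each $s \in \{1, \ldots, n\}$, I will consider the candidate prefix $A_s := \{r_1, \ldots, r_s\}$; by the ordering, every $a \in A_s$ satisfies $\mu(a) \geq \mu(r_s)$, so it suffices to exhibit some $s \geq 1$ for which $\mu(r_s) \geq \frac{1}{2 s \log|R|}$, and then take $A := A_s$.

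I will argue by contradiction. Suppose instead that for every $s \in \{1, \ldots, n\}$ we have $\mu(r_s) < \frac{1}{2 s \log n}$. Summing over $s$ and using that $\mu$ is a probability distribution on $R$ yields
\[
    1 \;=\; \sum_{s=1}^{n} \mu(r_s) \;<\; \frac{1}{2 \log n} \sum_{s=1}^{n} \frac{1}{s} \;=\; \frac{H_n}{2 \log n},
\]
where $H_n$ denotes the $n$-th harmonic number. This is equivalent to $H_n > 2 \log n$, which I will contradict using the standard estimate $H_n \leq \ln n + 1$.

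The final step, and the only quantitative point, is to verify that $H_n < 2 \log n$ for all integers $n \geq 2$, where $\log$ is interpreted as $\log_2$. Combining with the estimate above reduces the claim to $(\ln 2)\log_2 n + 1 < 2 \log_2 n$, i.e., $\log_2 n > 1/(2 - \ln 2) \approx 0.765$, which holds for every $n \geq 2$ since $\log_2 n \geq 1$.

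The main (minor) obstacle is choosing the right extraction strategy rather than executing it: the natural alternative of bucketing $R$ by dyadic ranges of $\mu$ and applying pigeonhole to pick a heaviest bucket loses a logarithmic factor and fails to match the bound demanded by the statement. Adopting the greedy-prefix viewpoint above circumvents this loss, after which the proof collapses to a one-line harmonic-sum estimate; the constant $2$ in the denominator of the bound is calibrated precisely so that the inequality is already tight enough at $n = 2$.
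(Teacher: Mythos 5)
Your proof is correct and is essentially identical to the paper's own argument: sort the probabilities in decreasing order, assume for contradiction that every prefix threshold fails, sum to get $1 < H_n/(2\log n)$, and contradict via $H_n \le \ln n + 1 \le 2\log_2 n$ for $n \ge 2$. The only difference is that you spell out the verification of the last numerical inequality slightly more explicitly.
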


\begin{proof}
    Let $n = |R|$, and let $p_1 \ge p_2 \ge \dots \ge p_n$ be the sorted probabilities of the elements in $R$.
    Assume for contradiction that for all $k \in \{1, \dots, n\}$, we have $p_k < \frac{1}{2k\log(n)}$.
    
    Summing over all $k$ gives:
    \[
        1 = \sum_{k=1}^n p_k < \sum_{k=1}^n \frac{1}{2 k \log(n)} = \frac{1}{2 \log(n)} \sum_{k=1}^n \frac{1}{k} = \frac{H_n}{2\log(n)}
    \]
    This implies $2\log(n) < H_n$.
    However, for $n \ge 2$, it is a known that $H_n \le \ln (n) + 1 \leq 2\log(n)$, which is a contradiction.
    
    Therefore, there must exist an index $k$ such that $p_k \ge \frac{1}{2k\log(n)}$.
    Let $A$ be the set of the $k$ elements with the largest probabilities.
    Then $|A|=k$, and for any $a \in A$, its probability $\mu(a) \ge p_k$, which satisfies the desired bound.
\end{proof}

Now, we prove the full version of our small-set spread lemma. 
\begin{lemma} [``The Small-Set Spread Lemma''] \label{lemma:spread_distribution} 
    Fix $k > 1$ and $m \in (0, 1]$.
    Let $\mu$ be a $(m, k)$-spread distribution over $\mathcal{P}(U)$ with support $\F$, and assume every set in $\F$ has at most $q$ elements.
    Then, for every $\alpha > 2q$, $\F$ is $(p, \varepsilon)$-satisfying with $p = \alpha /k$ and $\varepsilon = \exp \left( -\frac{\alpha}{8 q m \log |\F|} \right)$.
\end{lemma}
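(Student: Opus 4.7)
The plan is to reduce the distributional statement back to the uniform-family version already proved as \pref{lemma:spread}. The bridge between the two is the helper result \pref{lemma:set-large}, which extracts from any distribution a subfamily on which the probabilities are ``flat'' up to a logarithmic factor. The logarithmic loss in \pref{lemma:set-large} is exactly what will appear as the extra $\log|\F|$ in the exponent of the failure probability, so matching the stated bound should come out cleanly.

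Concretely, first apply \pref{lemma:set-large} to $\mu$ viewed as a distribution on $\F$. This yields a non-empty subfamily $A \subseteq \F$ such that every $S \in A$ satisfies $\mu(S) \geq 1/(2|A|\log|\F|)$. Next, argue that the uniform distribution on $A$ is $(2m\log|\F|, k)$-spread. For any non-empty $T \subseteq U$, summing the pointwise lower bound on $\mu$ over sets of $A$ containing $T$ gives
\[
\frac{\deg_A(T)}{2|A|\log|\F|} \;\leq\; \mu(\langle T\rangle \cap A) \;\leq\; \mu(\langle T\rangle) \;\leq\; \frac{m}{k^{|T|}},
\]
so $\deg_A(T)/|A| \leq 2m\log|\F| / k^{|T|}$, which is the required spreadness of $A$ with parameter $m' := 2m\log|\F|$. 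Note that every set in $A$ still has at most $q$ elements, since $A \subseteq \F$.

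Finally, invoke \pref{lemma:spread} on the $(m', k)$-spread family $A$. For $\alpha > 2q$, this yields that $A$ is $(p, \varepsilon)$-satisfying with $p = \alpha/k$ and
\[
\varepsilon \;=\; \exp\!\left(-\frac{\alpha}{4 q m'}\right) \;=\; \exp\!\left(-\frac{\alpha}{8 q m \log|\F|}\right),
\]
which is exactly the bound stated in the lemma. Because $A \subseteq \F$, any set $W$ containing some $S \in A$ a fortiori contains some $S \in \F$, so the satisfying property immediately transfers from $A$ to $\F$.

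There is no substantive obstacle in this plan; the main point is simply to notice that \pref{lemma:set-large} is designed precisely for this reduction, and to verify that the spreadness parameter is inflated by exactly the factor $2\log|\F|$ when passing from $\mu$ to the uniform distribution on $A$. The only care needed is in handling degenerate cases: if $|\F| = 0$ the claim is vacuous, and if $|\F| = 1$ the unique set $S$ satisfies $\mu(\langle S\rangle) = 1$, which by $(m,k)$-spreadness with $m \leq 1$ and $k > 1$ forces $S = \emptyset$, making $\F$ trivially $(p,0)$-satisfying. In all remaining cases the argument above applies and gives the claimed bound.
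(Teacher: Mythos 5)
Your proposal is correct and follows essentially the same route as the paper: extract a flat subfamily via \pref{lemma:set-large}, show the uniform distribution on it is $(2m\log|\F|,k)$-spread, apply \pref{lemma:spread}, and transfer the satisfying property to $\F$ by monotonicity. The only cosmetic difference is your explicit handling of the degenerate cases $|\F|\le 1$, which the paper leaves implicit.
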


\begin{proof}
    First, observe that if $\F' \subseteq \F$ is $(p, \varepsilon)$-satisfying, then $\F$ is also $(p, \varepsilon)$-satisfying.

    Now, by \pref{lemma:set-large}, there is a subset $\F' \subseteq \F$ such that for every $S \in \F'$:
    \[
        \mu(S) \geq \frac{1}{2 |\F'| \log |\F|} \implies  1 \leq 2 |\F'| \log |\F| \mu(S)
    \]

    We argue that $\F'$ is $(2 \log |\F| m, k)$-spread.
    Let $T \subseteq U$ be a non-empty set.
    Then,
    \begin{align*}
        \frac{\deg_{\F'}{T}}{|\F'|} &= \frac{\sum_{S \in \langle T\rangle} \mathbb{I}[S \in \F']}{|\F'|} \\
        &\leq \frac{\sum_{S \in \langle T \rangle} {2 |\F'| \log |\F| \mu(S)}}{|\F'|} \\
        &= 2 \log |\F| \mu(\langle T \rangle) \leq 2 \log |\F| \cdot m \cdot k^{-|T|}
    \end{align*}
    where in the last inequality we applied the spreadness of $\mu$.

    Now, applying \pref{lemma:spread}, and as every set in $\F'$ has at most $q$ elements, we conclude that $\F'$ is $(p, \varepsilon)$-satisfying (and hence also $\F$) for $p = \alpha / k$ and
    \[
        \varepsilon = \exp \left( - \frac{\alpha}{4 q \cdot 2 \log |\F| m} \right) = \exp \left( -\frac{\alpha}{8 q m \log |\F|} \right)
    \]    
    \end{proof}
\section{Robust daisies} \label{section:robust_daisies}
    In this section, we prove the Robust Daisy Lemma: every distribution over sets of small size contains a dense robust daisy (\pref{def:robust_daisy-intro}).

\begin{lemma} (The Robust Daisy Lemma)  \label{lemma:robust_daisy}
    Let $\mu$ be a distribution over $\mathcal{P}_{\leq q}(U)$.
    For every integer $c > q$ there exists $\D \subseteq \mathcal{P}_{\leq q}(U)$ such that for every $\alpha > 2q$, the conditioned distribution $\mu_\D$ is a $(p, \varepsilon)$-robust daisy with a non-empty kernel $K \subseteq U$ where
    \begin{align*}
        p &= \frac{\alpha}{n^{1/c}}  
        & \varepsilon &= \exp \left(-\frac{\alpha(1-q/c)}{8q^2 \log |\D|} \cdot |K|\right) 
        & |K| & \leq n^{1-1/c}
        & \mu(\D) &\geq  1-\frac{q}{c}.
    \end{align*}
\end{lemma}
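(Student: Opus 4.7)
The plan is to follow the architecture sketched in the proof overview: first construct a substructure $\mathcal{D} \subseteq \mathrm{supp}(\mu)$ and a kernel $K$ on which $\mu_{\mathcal{D}}$ becomes spread outside of $K$, and then invoke the Small-Set Spread Lemma (\pref{lemma:spread_distribution}) to upgrade spreadness into the satisfying property required by the robust daisy definition. I would begin by defining the normalized weighted degree $\bar{\mu}(u) := \sum_{S \ni u} \mu(S)/|S|$, which is a probability distribution on $U$, and partitioning $U$ into buckets $B_0, \ldots, B_c$ at scales of $k := n^{1/c}$: bucket $B_0$ collects $u$ with $\bar{\mu}(u) \leq 1/n$, and $B_j$ (for $j \in [c]$) collects $u$ with $\bar{\mu}(u) \in (k^{j-1}/n,\, k^j/n]$. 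Call $j \in [c]$ a \emph{good boundary} for a set $S$ if, for every $i \in \{0, \ldots, j\}$, the top $i+1$ buckets $B_{j-i} \cup \cdots \cup B_j$ contain at most $i$ elements of $S$.

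A token-shifting argument — iteratively moving elements of $S$ up from crowded buckets into sparser ones — will give that at most $q$ indices $j \in [c]$ can fail to be a good boundary for any fixed $S$ of size at most $q$. Averaging this over $\mu$ and then over the $c$ candidate boundaries, some $j^* \in [c]$ must be a good boundary for a family $\mathcal{D} \subseteq \mathrm{supp}(\mu)$ with $\mu(\mathcal{D}) \geq 1 - q/c$. Set $m := k^{j^*}/n$ and $K := \{u : \bar{\mu}(u) > m\}$; since $\bar{\mu}$ sums to $1$, this immediately yields $m \cdot |K| \leq 1$ and hence $|K| \leq n^{1-1/c}$. (If the procedure produces an empty $K$, artificially adding the single highest-weight element preserves all parameters and secures the non-emptiness requirement.)

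The core combinatorial step is then translating the good-boundary condition into spreadness. For any non-empty $T \subseteq U \setminus K$ of size $t$, applying the good-boundary condition with $i = t-1$ to any $S \in \mathcal{D}$ containing $T$ forces at most $t-1$ elements of $T$ into the top $t$ buckets $B_{j^*-t+1} \cup \cdots \cup B_{j^*}$; hence $T$ contains an element $u^\star$ in some $B_\ell$ with $\ell \leq j^* - t$, so $\bar{\mu}(u^\star) \leq m/k^t$. Since every set containing $T$ also contains $u^\star$ and has at most $q$ elements, $\mu(\langle T \rangle) \leq q\bar{\mu}(u^\star) \leq qm/k^t$. After conditioning on $\mathcal{D}$ and further on any $\mathcal{D}' \subseteq \mathcal{D}$, this gives that $\mu_{\mathcal{D}'}$ is $\bigl(qm/(\mu(\mathcal{D})\mu_{\mathcal{D}}(\mathcal{D}')),\, k\bigr)$-spread on $U \setminus K$. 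Feeding this into \pref{lemma:spread_distribution} with $\alpha > 2q$ yields sampling probability $\alpha/k = \alpha/n^{1/c}$ and failure probability $\exp\bigl(-\alpha\, \mu(\mathcal{D})\,\mu_{\mathcal{D}}(\mathcal{D}') / (8q^2 m \log|\mathcal{D}|)\bigr)$, which — using $1/m \geq |K|$ and $\mu(\mathcal{D}) \geq 1 - q/c$ — is at most $\varepsilon^{\mu_{\mathcal{D}}(\mathcal{D}')}$ for the $\varepsilon$ promised in the statement, since the exponent scales linearly in $\mu_{\mathcal{D}}(\mathcal{D}')$.

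The main obstacle is the token-shifting step. One has to show cleanly that the number of boundaries $j$ which fail for $S$ is bounded by $|S| \leq q$, even though a single offending element of $S$ can in principle spoil several adjacent boundaries simultaneously. I expect this will require either a carefully chosen monotone potential function tracking how many elements of $S$ sit at or above each level after shifting, or an inductive argument that charges each bad boundary to a distinct defect in the token distribution of $S$. Once the token-shifting bound is in place, the averaging, the spreadness extraction, and the application of \pref{lemma:spread_distribution} are largely mechanical bookkeeping.
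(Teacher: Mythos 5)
Your outline follows the paper's own two-stage route essentially verbatim: weighted degrees $\bar{\mu}$, buckets at scale $k=n^{1/c}$, good boundaries, an averaging argument giving some $j^*$ good for a family $\D$ with $\mu(\D)\ge 1-q/c$, the kernel $K=\{u : \bar{\mu}(u)>k^{j^*}/n\}$ (so $m|K|\le 1$ and $|K|\le n^{1-1/c}$), the bound $\mu(\langle T\rangle)\le q\bar{\mu}(u^\star)\le qm/k^{|T|}$ for $T$ avoiding $K$, closure of $(m,k)$-spreadness under conditioning on any $\D'\subseteq\D$, and finally \pref{lemma:spread_distribution} to turn spreadness of the punctured conditioned distribution into the satisfying condition with exponent linear in $\mu_{\D}(\D')$. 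That bookkeeping is correct, including the final comparison $\varepsilon'\le \varepsilon^{\mu_\D(\D')}$ via $1/m\ge|K|$ and $\mu(\D)\ge 1-q/c$. One small omission: before applying the good-boundary condition with $i=|T|-1$ you must check $|T|\le j^*$; this follows from the $i=j^*$ case of the same condition, since $T\subseteq B_{[0,j^*]}$ and $T\subseteq S$ give $|T|\le|S\cap B_{[0,j^*]}|\le j^*$.

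The genuine gap is the step you explicitly leave open: that any $S$ with $|S|\le q$ has at most $q$ bad boundaries among $j\in[c]$. This is the technical heart of the extraction lemma (\pref{claim:good_sets}), and your proposal only names two candidate strategies without carrying either out. The paper closes it exactly by the token-shifting process you gesture at, with three invariants that resolve your worry that one element can spoil several adjacent boundaries. Start with $w_j=|S\cap B_j|$ and, while some bucket has $w_j\ge 2$, move one token from $B_j$ to $B_{j+1}$ (tokens pushed past $B_c$ are discarded). Then: (i) the total token count never exceeds $|S|\le q$; (ii) a bucket that ever becomes nonempty stays nonempty, since tokens are removed only from buckets holding at least two; (iii) consequently, if $B_j$ currently holds no tokens, then no token has ever left the window $B_{[j-i,j]}$ through its top, so $\sum_{j'=j-i}^{j}w_{j'}\ge |S\cap B_{[j-i,j]}|$ for every $i$. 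At termination each bucket holds at most one token, so at most $q$ buckets are nonempty; and every empty bucket $B_j$ is a good boundary, for if $|S\cap B_{[j-i,j]}|>i$ for some $i\le j$, then by (iii) the window $B_{[j-i,j]}$ would hold at least $i+1$ tokens, while its $i+1$ buckets, with $B_j$ empty and the rest holding at most one token each, hold at most $i$ --- a contradiction. This yields at least $c-q$ good boundaries per set, which is precisely what your averaging step needs; with that in place the rest of your argument goes through as written.
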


    The proof of the lemma is in two steps.
    First, we prove the ``Spreadness Extraction Lemma'', \pref{lemma:large_spread_daisy}.
    We show that it is always possible to make a distribution into a spread one by removing a kernel (``puncturing'') and conditioning on a dense subfamily of the original support.
    This lemma is the main technical part of this section, and is proved in \pref{section:punctured_spreads}.

    The second part of the proof is to apply the small-set spread lemma (\cref{lemma:spread_distribution}) to argue that the extracted spread structure is a robust daisy with the required parameters.

\subsection{The spreadness extraction lemma} \label{section:punctured_spreads}
    We start by formally defining the puncturing operator, and observe its properties.
    
    \begin{definition}[Punctured Distribution] \label{defn:punctured_distribution}
        Let $\mu$ be a distribution over $\mathcal{P}(U)$. The \emph{punctured distribution} of $\mu$ with respect to a set $K \subseteq U$ is a distribution over $\mathcal{P}(U \setminus K)$, denoted  $\mu^{\circ K}$.
        It is defined by the process of first selecting a set $S \subseteq U$ according to $\mu$ and then outputting the set $S \setminus K$.
        The probability of a set $A \subseteq U \setminus K$ is given by
        \[
            \mu^{\circ K} (A) = \sum_{B \subseteq K} {\mu(A \cup B)}.
        \]
        Note that $\sum_{A \in \mathcal{P}(U \setminus K)} \mu^{\circ K} (A) = 1$, and so $\mu^{\circ K}$ is indeed a distribution over $\mathcal{P}(U \setminus K)$.
    \end{definition}

    The following observations about the behavior of puncturing will be useful.

    \begin{observation} \label{obs:punct-start}
        For every distribution $\mu$ over $\mathcal{P}(U)$, a family $\D \subseteq \mathcal{P}(U)$ and a set $K \subseteq U$, the following holds:
        \begin{enumerate}
            \item If $\supp(\mu) = \D$ then $\supp(\mu^{\circ K}) =\D \setminus K = \{ S \setminus K \mid S \in \D\}$.
            \item If $\mu$ is supported on sets of size at most $q$, then so is $\mu^{\circ K}$.
            \item $\mu^{\circ K}(\D \setminus K) \geq \mu(\D)$.
            \item For every $T \subseteq U \setminus K$, it holds that $\mu^{\circ K}(\langle T \rangle) = \mu(\langle T \rangle)$.
        \end{enumerate}
    \end{observation}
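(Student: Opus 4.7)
The plan is to verify all four items by unfolding the definition of $\mu^{\circ K}$ and tracking the map $S \mapsto S \setminus K$. Since the punctured distribution is generated by sampling $S \sim \mu$ and outputting $S \setminus K$, each assertion reduces to a short bookkeeping argument, so I do not expect any significant obstacle.

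For item 1 I would prove the two inclusions separately. The forward direction uses that $\mu^{\circ K}(A) = \sum_{B \subseteq K} \mu(A \cup B) > 0$ forces some $A \cup B$ with $B \subseteq K$ to lie in $\supp(\mu) = \D$, which gives $A = (A \cup B)\setminus K \in \D\setminus K$. The reverse direction takes any $S \in \D$ and observes that $\mu^{\circ K}(S \setminus K) \geq \mu(S) > 0$, so $S \setminus K \in \supp(\mu^{\circ K})$. Item 2 is then immediate from item 1, since $|S \setminus K| \leq |S| \leq q$ for every $S \in \supp(\mu)$.

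Item 4 is the cleanest: for $T \subseteq U \setminus K$, the condition $S \setminus K \supseteq T$ is equivalent to $S \supseteq T$, because $T$ and $K$ are disjoint and so intersecting $S$ with the complement of $K$ cannot destroy membership of $T$. The sampling definition of $\mu^{\circ K}$ thus yields
\[
    \mu^{\circ K}(\langle T \rangle) = \Pr_{S \sim \mu}[\, S \setminus K \supseteq T\,] = \Pr_{S \sim \mu}[\, S \supseteq T\,] = \mu(\langle T \rangle).
\]

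Item 3 is the one requiring the most care and is the only place where one could slip. Expanding by definition,
\[
    \mu^{\circ K}(\D \setminus K) = \sum_{A \in \D \setminus K} \sum_{B \subseteq K} \mu(A \cup B),
\]
and the sum can pick up mass from sets $A \cup B$ that do not themselves lie in $\D$, so one should not attempt to prove equality. Instead, I would exhibit the injection $S \mapsto (S \setminus K,\, S \cap K)$ from $\D$ into $(\D \setminus K) \times \mathcal{P}(K)$: for each $S \in \D$ the term $\mu(S)$ appears exactly once in the double sum (with $A = S\setminus K$ and $B = S \cap K$), and discarding all other non-negative contributions gives $\mu^{\circ K}(\D \setminus K) \geq \sum_{S \in \D} \mu(S) = \mu(\D)$. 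The only conceptual pitfall worth flagging is precisely this distinction between the desired inequality and a tempting but false equality.
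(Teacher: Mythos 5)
Your proof is correct and follows essentially the same route as the paper, which simply unfolds the definition of $\mu^{\circ K}$ (the paper dismisses items 1--3 as immediate and proves item 4 by the same expansion into a double sum that you phrase probabilistically). Your careful treatment of item 3 via the injection $S \mapsto (S \setminus K, S \cap K)$, and your note that equality can fail there, fills in detail the paper leaves implicit but does not deviate from its approach.
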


    \begin{proof}
        Items 1-3 follow directly from the definitions.
        Item 4 follows by carefully expanding the definitions:
        \[
            \mu^{\circ K}(\langle T \rangle) = \sum_{A \text { s.\ t } T \subseteq A \subseteq U \setminus K}{\mu^{\circ K}(A)} = \sum_{A \text { s.\ t } T \subseteq A \subseteq U \setminus K}{\sum_{B \subseteq K}{ \mu(A \cup B)}}
        \]
        Now observe that the double summation is the same as summing over all subsets of $U$ that contain $T$.
    \end{proof}

    We are now ready to state and prove the spreadness extraction lemma.
    
\begin{lemma} (The Spreadness Extraction Lemma) \label{lemma:large_spread_daisy}
    Let $\mu$ be a distribution over $\mathcal{P}_{\leq q}(U)$, where $|U| = n$.
    Let $c > q$ be any integer.
    
    Then, there exists $j \in [c]$, a family of sets $\D \subseteq \mathrm{supp}(\mu)$, and a non-empty set of elements $K \subseteq U$ such that the distribution $(\mu_\D)^{\circ K}$ is $(m, n^{1/c})$-spread, where:
    \begin{align*}        
        m &= \frac{q}{\mu(\D)} \cdot \frac{n^{j/c}}{n}, &
        |K| & \leq \frac{n}{n^{j/c}}, &
        \mu(\D) &\geq 1 - \frac{q}{c} \;.
    \end{align*}
\end{lemma}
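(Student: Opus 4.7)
The plan is to instantiate the outline from \pref{section:proof-overview}, making all the parameters quantitative. The construction uses a weighted-degree partition of the universe into $c+1$ buckets, a record-low walk argument to guarantee that enough good boundaries exist for every set $S\in\supp(\mu)$, and an averaging argument to select a single boundary bucket $B_{j^*}$ that governs both the kernel $K$ and the conditioning family $\mathcal{D}$.

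I would begin by defining the weighted degree $\bar{\mu}(u) = \sum_{S\ni u}\mu(S)/|S|$, which is a distribution on $U$, set $k = n^{1/c}$, and partition $U$ into $B_0 = \{u:\bar{\mu}(u)\leq 1/n\}$ and $B_j = \{u:\bar{\mu}(u)\in(k^{j-1}/n,\,k^j/n]\}$ for $j\in[c]$. For $S\in\supp(\mu)$, write $t_j(S) = |S\cap B_j|$, and call $B_j$ a \emph{good boundary} for $S$ if $\sum_{r=j-i}^{j} t_r(S)\leq i$ for every $i\in\{0,\ldots,j\}$. The core combinatorial claim is that for every $S$, at least $c-q$ indices $j\in[c]$ make $B_j$ a good boundary. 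To prove this I would associate the walk $W(j) = \sum_{\ell=0}^{j}(t_\ell(S)-1)$ with $W(-1)=0$. A short algebraic rearrangement shows that the good-boundary condition is equivalent to $W(j)<W(r)$ for every $r\in\{-1,0,\ldots,j-1\}$, i.e., to $j$ being a \emph{strict} record low of $W$. Since each step of $W$ decreases by at most $1$ (precisely when $t_j(S)=0$), the strict record lows are hit at the consecutive integer values $-1,-2,\ldots$, so their number equals $|\min W|$. From $W(c)=|S|-(c+1)\leq q-c-1$ we conclude there are at least $c+1-q$ strict record lows in $\{0,\ldots,c\}$, and hence at least $c-q$ in $[c]$.

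Averaging then yields $\sum_{j=1}^{c}\Pr_{S\sim\mu}[B_j\text{ good for }S]\geq c-q$, so by pigeonhole there exists $j^*\in[c]$ with $\mu(\mathcal{D})\geq 1-q/c$ for $\mathcal{D}=\{S:B_{j^*}\text{ good for }S\}$. I would then set $K=\{u:\bar{\mu}(u)>k^{j^*}/n\}=B_{j^*+1}\cup\cdots\cup B_c$; applying Markov to $\bar{\mu}$ immediately gives $|K|\leq n/k^{j^*}=n^{1-j^*/c}$. For the spreadness bound, fix a non-empty $T\subseteq U\setminus K$ and use \pref{fact:conditioned_subset} together with \pref{obs:punct-start}(4) to rewrite $(\mu_{\mathcal{D}})^{\circ K}(\langle T\rangle)=\mu(\langle T\rangle\cap\mathcal{D})/\mu(\mathcal{D})$. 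If $|T|>j^*$, applying the good-boundary condition to any $S\in\mathcal{D}$ with $i=j^*$, combined with $T\subseteq B_0\cup\cdots\cup B_{j^*}$, forces $|T|\leq j^*$, a contradiction, so the mass is zero. Otherwise, applying the condition at $i=|T|-1$ shows that $T$ itself must contain some element $x^*\in B_0\cup\cdots\cup B_{j^*-|T|}$, and hence $\bar{\mu}(x^*)\leq k^{j^*-|T|}/n$. Since every $S$ in the support has $|S|\leq q$, we obtain $\mu(\langle T\rangle\cap\mathcal{D})\leq\mu(\langle\{x^*\}\rangle)\leq q\bar{\mu}(x^*)$, yielding the desired bound $m/k^{|T|}$ with $m = qn^{j^*/c}/(n\mu(\mathcal{D}))$.

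The main obstacle I expect is Step 2, the walk / record-low analysis: the key subtlety is the ``strict versus weak'' distinction that arises in translating the good-boundary condition, since it is exactly the strict version that allows the clean bijection between record lows and the quantity $|\min W|$, which in turn gives the tight count $c-q$ needed to match the lemma's $1-q/c$ density guarantee. A minor point to address is the non-emptiness of $K$: the statement requires $K\neq\emptyset$, so if the pigeonhole happens to favor $j^*=c$ (for which the threshold $k^c/n=1$ leaves $K$ empty), one reruns the averaging restricted to $j\in[c-1]$, at a negligible loss in $\mu(\mathcal{D})$ absorbed into the constants of the downstream analysis.
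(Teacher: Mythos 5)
Your proposal is correct and follows the same overall architecture as the paper's proof: the weighted-degree buckets with step $k=n^{1/c}$, the good-boundary definition, the averaging argument selecting $j^*$ and $\D$ with $\mu(\D)\ge 1-q/c$, the Markov bound $|K|\le n/k^{j^*}$ (the paper's \pref{claim:punc-size}), and the spreadness argument that locates a low-degree element $x^*\in T\cap B_{[0,j^*-|T|]}$ and bounds $\mu(\langle T\rangle)\le q\bar{\mu}(x^*)$ (the paper's \pref{claim:boundary-to-spread}) all coincide. Where you genuinely diverge is the proof of the central combinatorial sub-claim (\pref{claim:good_sets}, that every $S$ admits at least $c-q$ good boundaries): the paper runs a token-shifting process and verifies three loop invariants, concluding that buckets left empty by the process are good boundaries, whereas you observe that the good-boundary condition at $j$ is \emph{exactly equivalent} to $j$ being a strict record low of the walk $W(j)=\sum_{\ell\le j}(t_\ell-1)$, and then count record lows via $-\min W\ge c+1-|S|$ since each downward step has size at most $1$. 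Your reformulation is an exact characterization rather than a one-sided sufficient condition, is shorter, and avoids the invariant bookkeeping, while yielding the same count $c-q$ and hence the same density $1-q/c$; both buy the same parameters. Incidentally, your derivation of $|T|\le j^*$ (from $T\subseteq S\cap B_{[0,j^*]}$ and the condition at $i=j^*$) is cleaner than the corresponding step in the paper, which asserts $S\subseteq B_{[0,j]}$ — a statement that is not literally needed (nor true in general), since bounding $|T|$ directly suffices.

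One small point: your fallback for guaranteeing $K\neq\emptyset$ (re-running the averaging over $j\in[c-1]$ when $j^*=c$) only gives $\mu(\D)\ge 1-q/(c-1)$, slightly weaker than the stated $1-q/c$, and it does not cover the case where $B_{[j^*+1,c]}$ is empty for some $j^*<c$. The paper's fix is lossless and simpler: if $B_{[j^*+1,c]}=\emptyset$, set $K$ to be an arbitrary singleton; enlarging $K$ only shrinks the collection of sets $T\subseteq U\setminus K$ to be checked and leaves $(\mu_\D)^{\circ K}(\langle T\rangle)=\mu_\D(\langle T\rangle)$ unchanged, so all bounds are preserved.
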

    We note that when $q$ and $c$ are treated as constants, the parameters satisfy the relation $m \cdot|K| = O(1)$, which is needed for the robust daisy extraction.
    The specific value of the index $j$ is not crucial beyond its role in defining $m$ and $K$.

\begin{proof}
    The proof proceeds by partitioning the universe $U$ into $c+1$ buckets based on the \emph{weighted degree} of each element, $\bar{\mu}$.
    We define $\bar{\mu}$ for each $x \in U$ by:\footnote{Recall that $S \in \langle x \rangle \iff x \in S$.}
    \[
        \bar{\mu}(x) = \sum_{S \in \langle x \rangle}\frac{\mu(S)}{|S|} \;.
    \]
    Note that $\bar{\mu}$ is a distribution over $U$.
    That is, $\sum_{x \in U}{\bar{\mu}(x)} = 1$.
    This distribution is equivalent to the following two-step random process:
    first, select a set $S \subseteq U$ according to $\mu$, and then select an element $x \in S$ uniformly at random.
    
    The construction identifies a threshold defined by one of the buckets, indexed by $j$.
    The punctured set $K$ consists of all elements with a weighted degree above this threshold.

    \paragraph{Construction.}
    Let $k = n^{1/c}$.
    We partition the universe~$U$ into $c\;+\;1$ buckets, 
    $\{ B_0, B_1,\ldots, B_c\}$, based on the weighted degree $\bar{\mu}(u)$.
    Define $B_0 = \left\{u \in U \mid \bar{\mu}(u) \leq \frac{1}{n}  \right\}$ and for each $j \in [c]$:
    \begin{equation}
        B_j = \left\{u \in U \mid \bar{\mu}(u) \in \left(\frac{k^{j-1}}{n}, \frac{k^{j}}{n} \right] \right\}\;.
    \end{equation}
    Since $k^c = n$, the upper bound for $\bar{\mu}(u)$ in $B_c$ is $1$.
    As $\bar{\mu}(u) \leq 1$ for any $u \in U$, the buckets
    
    $\{ B_0, B_1, \ldots, B_c\}$ form a partition of $U$.
    
    For any indices $i \leq \ell$, we denote $B_{[i, \ell]} = \cup_{j = i}^{\ell} B_j$.

    The core of our argument relies on identifying a ``well-behaved'' boundary between buckets.
    We formalize this in the following definition.
    \begin{definition}\label{def:good_bucket} (Good Boundary)
        Let $j \in [c]$ and $S \subseteq U$.
        We say that the bucket $B_j$ is a \emph{good boundary} for the set $S$ if for every $i \in \{0, \dots, j \}$, we have $|S \cap B_{[j - i, j]}| \leq  i$.\footnote{Note that for $i \geq q$, this condition becomes trivial, as $|S| \leq q$ for every $S \in \supp(\mu)$.}
    \end{definition}
    Note that by definition, $B_0$ cannot be a good boundary.

    The lemma now follows from the three claims below, and setting $K = B_{[j+1, c]}$. It is possible that $B_{[j+1, c]}$ is empty, in which case we set $K = \{u\}$ for an arbitrary $u \in U$. This does not affect any of our arguments. 
    
    \begin{claim} \label{claim:good-boundary-existance}
        There exists an index $j \in [c]$ and a family $\D \subseteq \mathcal{P}(U)$ such that $B_j$ is a good boundary for every $S \in \D$, and $\mu(\D) \geq 1 - \frac{q}{c}$.
    \end{claim}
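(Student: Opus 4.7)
The plan is to reduce the claim to a pointwise combinatorial inequality: for every $S \in \supp(\mu)$, at least $c-q$ of the indices $j \in [c]$ are good boundaries for $S$. Once this is established, the claim follows by averaging: exchanging sums,
$\sum_{j=1}^{c} \mu\bigl(\{S : B_j \text{ good for } S\}\bigr) = \E_{S \sim \mu}\bigl[\#\{j \in [c] : B_j \text{ good for } S\}\bigr] \ge c - q$, so some $j^{*} \in [c]$ satisfies $\mu\bigl(\{S : B_{j^{*}} \text{ good for } S\}\bigr) \ge 1 - q/c$, and setting $\D$ to this family yields the result.

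To prove the pointwise inequality, fix $S$ and introduce token counts $t_j := |S \cap B_j|$, prefix sums $T_j := \sum_{\ell = 0}^{j} t_\ell$ (with $T_{-1} := 0$), and the potential $f(k) := T_k - k$ on $\{-1, 0, \ldots, c\}$, so that $f(-1) = 1$ and $f(c) = |S| - c$. The good-boundary condition $|S \cap B_{[j-i,j]}| \le i$ for all $0 \le i \le j$ rearranges to $T_j - T_{j-i-1} \le i$, which is the same as $f(j) \le f(j-i-1) - 1$; letting $i$ range over $\{0, 1, \ldots, j\}$ and setting $k = j - i - 1$, this says precisely that $f(j) < f(k)$ for every $k \in \{-1, 0, \ldots, j-1\}$. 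In other words, $B_j$ is a good boundary for $S$ iff $f(j)$ is a strict new minimum of $f$ over the prefix of length $j$.

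Now let $m_j := \min_{-1 \le \ell \le j} f(\ell)$ denote the running minimum. Since $f(j) = f(j-1) + t_j - 1$ and $f(j-1) \ge m_{j-1}$, we have $f(j) \ge m_{j-1} + t_j - 1 \ge m_{j-1} - 1$; if $j$ is good then integrality forces $f(j) \le m_{j-1} - 1$, so the inequalities collapse to equalities, giving $t_j = 0$, $f(j-1) = m_{j-1}$, and $m_j = m_{j-1} - 1$. At bad positions, by definition, $m_j = m_{j-1}$. Telescoping, the number of good positions in $\{0, 1, \ldots, c\}$ equals $m_{-1} - m_c = 1 - m_c \ge 1 - f(c) = c - |S| + 1 \ge c - q + 1$. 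Since at most one of these good positions is $j = 0$, at least $c - q$ of them lie in $[c]$, which is exactly the pointwise inequality.

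The main obstacle is the initial translation: identifying $f(k) = T_k - k$ as the right potential (with the slightly unusual boundary value $f(-1) = 1$) and recognizing that the good-boundary condition is equivalent to a strict new prefix-minimum of $f$. After that, the equality-chain argument forcing $t_j = 0$ at good positions, the telescoping of the running minimum, and the final averaging argument are all routine.
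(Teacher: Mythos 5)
Your proposal is correct, and its outer skeleton is the same as the paper's: you reduce the claim to the pointwise statement that every $S \in \supp(\mu)$ (so $|S| \le q$) has at least $c-q$ good boundaries among $j \in [c]$ (this is exactly the paper's \pref{claim:good_sets}), and then conclude with the identical exchange-of-summation averaging argument to extract $j^*$ and $\D$ with $\mu(\D) \ge 1 - q/c$. Where you genuinely diverge is in how the pointwise count is proved. The paper runs a token-shifting process (repeatedly pushing a token from any bucket holding at least two tokens into the next bucket up, deleting tokens that fall off the top) and verifies loop invariants showing that any bucket left empty at termination is a good boundary, of which there are at least $c-q$. You instead give an exact reformulation: with $T_j = |S \cap B_{[0,j]}|$, $f(j) = T_j - j$ and $f(-1) = 1$, the condition $|S \cap B_{[j-i,j]}| \le i$ for all $0 \le i \le j$ is equivalent (by integrality, with $f(-1)=1$ correctly encoding the $i=j$ case) to $f(j)$ being a strict new prefix minimum, and the running-minimum telescoping gives exactly $1 - m_c \ge c+1-|S| \ge c-q+1$ such positions in $\{0,\dots,c\}$, hence at least $c-q$ in $[c]$ after discarding $j=0$. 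I checked the equality-forcing step at good positions (yielding $t_j=0$ and $m_j = m_{j-1}-1$) and the bad-position case; both are sound. The trade-off: your ballot-style potential argument is shorter, gives an exact characterization of good boundaries (strict new minima) rather than only a sufficient condition (empty buckets after shifting), and replaces the informal invariant bookkeeping with a telescoping identity; the paper's token-shifting is more pictorial and matches the intuition given in the proof overview, but proves nothing stronger. Quantitatively the two yield the same bound, so your route is a valid drop-in replacement for the proof of \pref{claim:good_sets}.
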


    \begin{claim} \label{claim:boundary-to-spread}
        If $B_j$ is a good boundary for every $S \in \D$, then the distribution $(\mu_{\D})^{\circ B_{[j+1, c]}}$ is $\left(\frac{q}{\mu(\D)} \cdot \frac{k^j}{n} \;,\; k\right)$-spread.
    \end{claim}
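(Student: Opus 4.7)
The plan is to verify the spread condition directly. Write $K := B_{[j+1,c]}$, so $U \setminus K = B_{[0,j]}$, and set $m := \frac{q}{\mu(\D)}\cdot\frac{k^j}{n}$. We must show $(\mu_\D)^{\circ K}(\langle T \rangle) \leq m/k^{|T|}$ for every non-empty $T \subseteq B_{[0,j]}$. The first step is to strip puncturing and conditioning: by item~4 of \pref{obs:punct-start}, star densities are preserved under puncturing whenever $T$ is disjoint from the kernel, and conditioning contributes a factor $1/\mu(\D)$, so
\[
(\mu_\D)^{\circ K}(\langle T \rangle) \;=\; \mu_\D(\langle T \rangle) \;=\; \frac{\mu(\langle T\rangle \cap \D)}{\mu(\D)} \;\leq\; \frac{\mu(\langle T\rangle)}{\mu(\D)}.
\]
It thus suffices to prove $\mu(\langle T \rangle \cap \D) \leq qk^{j-|T|}/n$.

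The heart of the argument is to use the good-boundary hypothesis to locate an element $x \in T$ of low weighted degree. Fix $t := |T|$ and any $S \in \D$ with $T \subseteq S$. When $t \leq j$, applying the good-boundary condition (\pref{def:good_bucket}) with $i = t-1$ yields $|T \cap B_{[j-t+1, j]}| \leq |S \cap B_{[j-t+1,j]}| \leq t-1$; since $T \subseteq B_{[0,j]}$ has size $t$, this forces some $x \in T \cap B_{[0, j-t]}$, and by the bucket definition $\bar{\mu}(x) \leq k^{j-t}/n$. In the remaining case $t > j$, the good-boundary condition with $i = j$ gives $|S \cap B_{[0,j]}| \leq j < t$; since $T \subseteq S \cap B_{[0,j]}$, no such $S \in \D$ exists, so $\mu(\langle T \rangle \cap \D) = 0$ and the bound holds trivially.

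The final step relates $\bar{\mu}(x)$ back to $\mu(\langle x \rangle)$. Since every set in $\supp(\mu)$ has at most $q$ elements,
\[
\bar{\mu}(x) \;=\; \sum_{S \in \langle x\rangle} \frac{\mu(S)}{|S|} \;\geq\; \frac{1}{q}\sum_{S \in \langle x\rangle}\mu(S) \;=\; \frac{\mu(\langle x\rangle)}{q}.
\]
As $T \subseteq S$ forces $x \in S$ and hence $\langle T \rangle \subseteq \langle x \rangle$, we conclude that $\mu(\langle T\rangle) \leq \mu(\langle x\rangle) \leq q\bar{\mu}(x) \leq qk^{j-t}/n$, and combining with the first step yields $(\mu_\D)^{\circ K}(\langle T \rangle) \leq qk^{j-t}/(n\mu(\D)) = m/k^t$, as required.

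I do not anticipate any real obstacle here: all three steps are elementary manipulations once the good-boundary hypothesis is in hand. The substantive combinatorial work of the lemma is pushed to \pref{claim:good-boundary-existance}, the token-shifting argument establishing the existence of such a boundary; with that in place, the spread estimate reduces to the bookkeeping above, and the only mild subtlety is the separation of the case $|T| \leq j$ from the trivially vacuous $|T| > j$ case.
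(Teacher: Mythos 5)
Your proof is correct and takes essentially the same route as the paper's: strip the puncturing and conditioning via \cref{obs:punct-start} (item 4), use the good-boundary condition with $i=|T|-1$ to locate an element $x \in T \cap B_{[0,\,j-|T|]}$, and conclude via $\mu(\langle T\rangle) \leq \mu(\langle x\rangle) \leq q\bar{\mu}(x) \leq q k^{j-|T|}/n$. Your treatment of the case $|T|>j$ (arguing through $T \subseteq S \cap B_{[0,j]}$) is in fact slightly cleaner than the paper's phrasing of the same step, which asserts $S \subseteq B_{[0,j]}$.
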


    \begin{claim} \label{claim:punc-size}
        For every index $j \in [c]$, the set of elements with high weighted degree is small: $|B_{[j+1, c]}| \leq \frac{n}{k^j}$.
    \end{claim}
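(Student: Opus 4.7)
The plan is to prove this by a direct Markov-style argument, exploiting the fact that $\bar{\mu}$ is a probability distribution over $U$.

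First, I would recall from the construction that $\bar{\mu}: U \to [0,1]$ is a distribution, i.e.\ $\sum_{u \in U} \bar{\mu}(u) = 1$. This was noted right after the definition $\bar{\mu}(x) = \sum_{S \in \langle x \rangle} \mu(S)/|S|$, and it can be verified by swapping the order of summation: $\sum_{u \in U} \bar{\mu}(u) = \sum_{S \in \supp(\mu)} \mu(S) \cdot |S|/|S| = 1$.

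Next, I would use the definition of the buckets. By construction, every element $u \in B_j$ for $j \geq 1$ satisfies $\bar{\mu}(u) > k^{j-1}/n$. In particular, every $u \in B_{[j+1, c]} = \bigcup_{i=j+1}^{c} B_i$ satisfies $\bar{\mu}(u) > k^j/n$.

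The bound then follows immediately. Since $\bar{\mu}$ sums to at most $1$ over all of $U$, we have
\[
    1 \;\geq\; \sum_{u \in B_{[j+1, c]}} \bar{\mu}(u) \;>\; |B_{[j+1, c]}| \cdot \frac{k^j}{n},
\]
which rearranges to $|B_{[j+1, c]}| < n/k^j$, yielding the claim. There is no real obstacle here; the claim is essentially a one-line consequence of Markov's inequality applied to the distribution $\bar{\mu}$, together with the threshold used to define the buckets. The only thing worth being slightly careful about is that the bucket definition uses a strict inequality on the lower endpoint, so the derived bound is in fact strict, which comfortably implies the non-strict inequality stated in the claim.
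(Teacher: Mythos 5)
Your proof is correct and matches the paper's argument exactly: both sum the lower bound $\bar{\mu}(u) > k^j/n$ over $B_{[j+1,c]}$ and compare against the total mass $\sum_{u \in U} \bar{\mu}(u) = 1$ to conclude $|B_{[j+1,c]}| < n/k^j$.
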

    
    We begin by proving the claim that motivates the definition of good boundaries: if a bucket is a good boundary for some sets $\D$, then by conditioning on $\D$, and then removing the elements ``above'' this bucket, we get a spread distribution. 
    \begin{proof} [Proof of \pref{claim:boundary-to-spread}]
    Assume that $B_j$ is a good boundary for every $S \in \D$.
    We prove that the distribution $(\mu_{\D})^{\circ B_{[j+1, c]}}$ is $\left(\frac{q}{\mu(\D)} \cdot \frac{k^j}{n} \;,\; k\right)$-spread.

    \paragraph{Restating the goal.}
    The distribution $(\mu_{\D})^{\circ B_{[j+1, c]}}$ is over subsets of 
    \[
        U \setminus B_{[j+1, c]} = B_{[0, j]} =  \left\{ u \in U \mid \bar{\mu}(u) \leq \frac{k^{j}}{n} \right\} \:.
    \]
    Hence, to show it is spread, we need to show that for every non-empty set $T \subseteq B_{[0, j]}$ we have 
    \[
        (\mu_{\D})^{\circ B_{[j+1, c]}}(\langle T \rangle) \leq \frac{q}{\mu(\D)} \cdot \frac{k^j}{n} \cdot k^{-|T|}
    \]
    
    By \cref{obs:punct-start} (item $4$) and the definition of $\mu_{\D}$, this is equivalent to:
    \[
        \mu(\langle T \rangle) \leq q \cdot \frac{k^{j - |T|}}{n} \:.
    \]
    To show this, we first bound $\bar{\mu}(x)$ for some $x \in T$.
    Then, we use this bound to bound $\mu(\langle T \rangle)$.

    \paragraph{Bounding $\bar{\mu}(x)$ for $x \in T$.}
    We may assume there exists a set $S \in \D$ with $T \subseteq S$; otherwise, $\mu_{\D}(\langle T \rangle) = 0$ and the spreadness requirement holds trivially.
    
    First, we show that $|T| \leq j$. 
    Since $B_j$ is a good boundary, $|S \cap B_{[j - i, j]}| \leq i$ for every $i \leq j$.
    Specifically, setting $i = j$ we have $|S \cap B_{[0, j]}| \leq j$.
    Now, because $S$ is a subset of $B_{[0, j]}$, this simplifies to $|S| \leq j$, and therefore $|T| \leq j$.

    Next, we apply the same inequality with $i = |T|-1$.
    Note that $|T| \leq j$ ensures $i \leq j$.
    This yields:
    \[
        \left|T \cap B_{[j - |T| + 1, j]}\right| \leq |T| - 1 \: .
    \] 
    The inequality implies that at most $|T|-1$ elements of $T$ belong to the set $B_{[j - |T| + 1, j]}$.
    By the pigeonhole principle, at least one element $x \in T$ must lie outside this set.
    Since all elements of $T$ are in $B_{[0, j]}$, $x$ must be in $B_{[0, j]} \setminus B_{[j - |T| + 1, j]} = B_{[0, j - |T|]}$.
    By definition, this means $\bar{\mu}(x) \leq \frac{k^{j - |T|}}{n}$.

     \paragraph{Bounding $\mu(\langle T \rangle)$.}
     First, observe that for any $x \in U$, we have $\mu(\langle \{ x\} \rangle) \leq q \cdot \bar{\mu}(x)$.
     This is because every set in the support of $\mu$ has at most $q$ elements, and then by definition $\bar{\mu}(x) = \sum_{S \in \langle x \rangle}\frac{\mu(S)}{|S|} \geq \frac{ \mu(\langle x \rangle)}{q}$.

     In addition, for every $T' \subseteq T$  we have $\mu(\langle T \rangle) \leq \mu(\langle T' \rangle)$, since in the left side of the inequality we sum up over (possibly) fewer sets.
     Hence, for $T' = \{x \}$ we get $\mu(\langle T \rangle) \leq \mu(\langle \{ x \} \rangle)$.
     Together with the bound on $\bar{\mu}(x)$, this implies that:
    \[
        \mu(\langle T \rangle) \leq \mu(\langle x \rangle) \leq q \bar{\mu}(x) \leq q \cdot \frac{k^{j - |T|}}{n} \:.
    \]

    \end{proof}

    The proof of \pref{claim:good-boundary-existance} relies on the following combinatorial claim and a standard averaging argument.
    \begin{claim}\label{claim:good_sets}
        For any set $S \subseteq U$ with $|S| \leq q$, there are at least $c - q$ indices $j \in [c]$ for which $B_j$ is a good boundary for $S$.
    \end{claim}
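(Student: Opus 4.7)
The plan is to recast the good-boundary condition as the condition of being a strict record low of a single one-dimensional integer sequence. Set $s_\ell = |S \cap B_\ell|$, let $P_j = \sum_{\ell = 0}^{j} s_\ell$ with the convention $P_{-1} = 0$, and define $f(-1) = 1$ together with $f(j) = P_j - j$ for $j \geq 0$. Since $|S \cap B_{[j-i,j]}| = P_j - P_{j-i-1}$, a direct rearrangement shows that \pref{def:good_bucket} holds for $B_j$ exactly when $f(j) \leq f(k) - 1$ for every $k \in \{-1, 0, \ldots, j-1\}$; as $f$ is integer-valued this is the same as $f(j) < f(k)$ for all such $k$, i.e., $j$ is a strict record low of the sequence $f(-1), f(0), \ldots, f(c)$. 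The convention $f(-1) = 1$ is chosen precisely so that the extremal case $i = j$ of the good-boundary condition (which amounts to $P_j \leq j$, i.e., $f(j) \leq 0$) is absorbed uniformly into the single inequality $f(j) < f(-1)$.

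The argument then reduces to counting record lows of $f$. Because $s_\ell \geq 0$, we have $f(j) - f(j-1) = s_j - 1 \geq -1$, so $f$ (and hence its running minimum $m_j = \min_{k \leq j} f(k)$) can drop by at most $1$ per step. Moreover, at every record low the running minimum must drop by exactly $1$: the bound $f(j) \geq f(j-1) - 1 \geq m_{j-1} - 1$ combined with $f(j) < m_{j-1}$ forces $f(j) = m_{j-1} - 1$. Consequently, the number of record lows in $\{0, 1, \ldots, c\}$ equals $m_{-1} - m_c = 1 - m_c$. Using $m_c \leq f(c) = |S| - c \leq q - c$, this count is at least $c - q + 1$. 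Finally, to restrict attention to $j \in [c] = \{1, \ldots, c\}$, observe that $j = 0$ can be a record low only when $s_0 = 0$ (equivalently $f(0) = 0 < 1 = f(-1)$), so at most one record low is lost in passing from $\{0, 1, \ldots, c\}$ to $[c]$, leaving at least $c - q$ good boundaries in $[c]$, as required.

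I do not anticipate any substantive obstacle in this argument. The only care needed is the bookkeeping around the boundary conventions, namely the choice $f(-1) = 1$ that uniformizes the good-boundary condition, and the treatment of the index $j = 0$, which may absorb one record low that lies outside $[c]$. Once the translation to the sequence $f$ is in place, the remainder is an elementary one-dimensional counting argument exploiting only that $f$ is Lipschitz with constant $1$ on the downward side and that $f(c) \leq q - c$.
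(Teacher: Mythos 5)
Your proof is correct, and it takes a genuinely different route from the paper. The paper argues via a token-shifting process: starting from the token counts $|S\cap B_j|$, it repeatedly moves a token up from any bucket holding at least two tokens (dropping tokens that leave $B_c$), and uses loop invariants to show that the final configuration has at most $q$ nonempty buckets and that every empty bucket is a good boundary, yielding $c-q$ good boundaries. You instead linearize the problem: setting $f(j)=|S\cap B_{[0,j]}|-j$ with the convention $f(-1)=1$, the good-boundary condition of \pref{def:good_bucket} becomes precisely ``$j$ is a strict record low of $f$'' (your check of the extremal case $i=j$ against $f(-1)=1$ is right), and since $f$ drops by at most $1$ per step, each record low decreases the running minimum by exactly $1$, so the number of record lows in $\{0,\dots,c\}$ telescopes to $1-m_c\ge 1-(|S|-c)\ge c-q+1$; discarding the possible record low at $j=0$ (which is not in $[c]$ and indeed cannot be a good boundary) leaves $c-q$ indices in $[c]$, proving \pref{claim:good_sets}. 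What each approach buys: the paper's process mirrors the ``crowded buckets push tokens upward'' intuition used in the proof overview, while your argument replaces the iterative process and its three invariants with a closed-form, exact count of good boundaries (namely $1-m_c$, minus the possible loss at $j=0$), which is shorter, easier to verify rigorously, and slightly sharper than the bound stated.
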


    \begin{proof}
    Let $S \subseteq U$ with $|S| \leq q$.
    To prove the claim, we show that there are at most $q$ ``bad boundaries'' for $S$, which implies that at least $c-q$ boundaries must be good.
    The proof uses a ``token-shifting'' argument.
    We conceptualize the elements of $S$ as tokens distributed among the buckets.
    Initially, each bucket is assigned a number of tokens that is equal to the number of elements of $S$ that are in this bucket.
    The total number of tokens is therefore $|S| \leq q$.
    We then apply an iterative redistribution process that shifts these tokens between buckets to produce a final configuration.
    
    The core of the argument is to show that in this final configuration, if a bucket contains no tokens, then it must be a good boundary.
    We now proceed with the formal description of the redistribution process and its analysis.

    \paragraph{The Token-Shifting Process.}
    The process is defined as follows.    
    We start with the initial token counts $w_j =|B_j \cap S|$ for $j \in [c]$.
    As long as there is any bucket $B_j$ that contains two or more tokens ($w_j \geq 2$), we apply the following rule: move one token from bucket $B_j$ to the bucket above it, $B_{j+1}$.
    That is, update the counts: $w_j \gets w_j - 1$ and $w_{j+1} \gets w_{j+1} + 1$.
    If $j = c$ and there is not bucket above it, only remove a token from $B_c$.
    The process ends when every bucket has one or zero token.

    \paragraph{Loop Invariants.}
    We claim that the following three invariants hold throughout the operation of the reassignment process.
    \begin{enumerate}
        \item The total number of tokens is at most $|S| \leq q.$
        That is, $\sum_{j=1}^c{w_j} \leq q$.
        This is because the total number of tokens starts at $|S|$ and can only decrease if bucket $B_c$ is chosen.
        \item If at some point a bucket has non-zero tokens, then it will never have zero tokens later.
        This is because we remove tokens from a bucket only if it has at least $2$ tokens.
        \item If bucket $B_j$ has no tokens, then for any $i$, the total number of tokens in the $i$ buckets below $B_j$ is at least the number of elements of $S$ in those buckets.
        That is, $\sum_{j'=j-i}^{j}{w_{j'}} \geq \sum_{j'=j-i}^{j}{|B_{j'} \cap S|}$.
        This is because a bucket with a non-zero weight will never become zero, and the total number of tokens in a sequence of buckets can decrease only when the bucket just above them gains a token.
    \end{enumerate}

    \paragraph{Post-loop claims.}
    From these invariants, we  deduce the following claims about the weights at the end of the loop:
    \begin{enumerate}
        \item There are at most $q$ non-zero buckets.
        Each bucket ends with a weight of at most $1$, and by the first invariant the total number of tokens is at most $q$.
        \item If a bucket $B_j$ has no tokens, then it is a good boundary.
        Otherwise, there is some $\ell$ such that $\left|S \cap B_{[j-i,j]}\right| > i$.
        But according to the third invariant, since $B_j$ contains no tokens, the total number of tokens in the $i$ buckets below $B_j$ is at least $\left|S \cap B_{[j-i,j]}\right| > i$.
        This is a contradiction, since after the loop ends each of these buckets has at most one token, and $B_j$ contains no tokens.
    \end{enumerate}
    Together, these claims imply that there are at least $c - q$ good boundaries for $S$.
    \end{proof}

    We can now complete the proof of \pref{claim:good-boundary-existance}
    \begin{proof} [Proof of \pref{claim:good-boundary-existance}]
        Let $G(S) \subseteq [c] $ be the set of good boundaries for a set $S$.
        By \pref{claim:good_sets}, $|G(S)| \geq c - q$ for every set $S$, and hence the expected number of good boundaries for a set $S \sim \mu$ is also lower bounded:
        \[
            \mathbb{E}_{S \sim \mu}[|G(S)|] \geq c-q
        \]
        On the other hand, by linearity of expectation, the expectation can be expressed as:
        \begin{align*}
            \mathbb{E}_{S \sim \mu}[|G(S)|] &= 
            \mathbb{E}_{S \sim \mu} \left[ \sum_{j=1}^c \mathbb{I}[j \in G(S)] \right] \\
            &= \sum_{j=1}^c \mathbb{E}_{S \sim \mu} [\mathbb{I}[j \in G(S)]] \\
            &= \sum_{j=1}^c \mu\left(\{S \mid j \in G(S)\}\right)
        \end{align*}
    
        Combining these facts implies there must exist an index $j^* \in [c]$ such that
        \[
        \mu(\{S \mid j^* \in G(S)\}) \geq \frac{c-q}{c} \:.
        \]
        We define the family $\D$ as:
        \[
            \D = \{ S \in \mathrm{supp}(\mu) \mid B_{j^*} \text{ is a good boundary for } S \}
        \]        
        For this choice, we have:
        \[
            \mu(\D) \geq \frac{c-q}{c} = 1 - \frac{q}{c}    
        \]
    \end{proof}
    
    Finally, we prove the simplest claim, \pref{claim:punc-size}.
    \begin{proof} [Proof of \pref{claim:punc-size}]
        By the definition of the buckets, every element $u \in B_{[j+1, c]}$ has a weighted degree $\bar{\mu}(u) > \frac{k^j}{n}$.
        Summing this lower bound over all elements in $B_{[j+1, c]}$ yields:
        \[
            \sum_{u \in B_{[j+1, c]}}{\bar{\mu}(u)} > \left|B_{[j+1, c]}\right| \cdot \frac{k^j}{n} \: .
        \]
        On the other hand, since $\bar{\mu}$ is a probability distribution over the universe $U$, the sum of probabilities over any subset of $U$ is at most $1$:
        \[
            \sum_{u \in B_{[j+1, c]}}{\bar{\mu}(u)} \leq \sum_{u \in U}{\bar{\mu}(u)} = 1 \:.
        \]
        Combining these two inequalities gives:
        \[
            \left|B_{[j+1, c]}\right| \cdot \frac{k^j}{n} < 1 \:,
        \]
        and rearranging the terms finishes the proof.
    \end{proof}
\end{proof}

\subsection{Large robust daisies } \label{section:spreadness_to_robustness}
    We next use the spread lemma for distributions (\pref{lemma:spread_distribution}) to argue that if $\mu^{\circ K}$ is spread, then $\mu$ is a robust daisy with kernel $K$.

    First, we make the following observation:
    \begin{observation} \label{observation:cond-spread}
        Fix $k > 1$ and $m \in (0, 1]$.
        Let $\mu$ be a $(m, k)$-spread distribution over $\mathcal{P}(U)$ and let $\D \subseteq \supp(\mu)$.
        Then, $\mu_{\D}$ is $\left(\frac{m}{\mu(\D)}, k\right)$-spread.
\end{observation}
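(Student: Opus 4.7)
The plan is to unpack the definition of $(m, k)$-spread for the conditioned distribution $\mu_{\D}$ and directly reduce the bound to the spreadness hypothesis on $\mu$. Concretely, for an arbitrary non-empty $T \subseteq U$, I want to upper bound $\mu_{\D}(\langle T \rangle)$ by $\frac{m}{\mu(\D) \cdot k^{|T|}}$.

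First, I would note that since $\mu_{\D}$ is supported on $\D$, we have $\mu_{\D}(\langle T \rangle) = \mu_{\D}(\langle T \rangle \cap \D)$. Then, by \pref{fact:conditioned_subset} applied with $A = \D$ and $B = \langle T \rangle \cap \D$, this equals $\mu(\langle T \rangle \cap \D) / \mu(\D)$. Since $\langle T \rangle \cap \D \subseteq \langle T \rangle$, monotonicity of the measure gives $\mu(\langle T \rangle \cap \D) \leq \mu(\langle T \rangle)$. Finally, the $(m, k)$-spreadness of $\mu$ bounds $\mu(\langle T \rangle) \leq m/k^{|T|}$, yielding the desired inequality.

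There is no real obstacle here; the observation is a one-line bookkeeping consequence of the definitions, and its role is essentially to justify the conditioning step used in the overview of how the Small-Set Spread Lemma is applied to subfamilies $\D' \subseteq \D$ when deriving the robust daisy property. The only thing worth being careful about is the domain of $\mu_{\D}$ (a distribution over $\D$, not over all of $\mathcal{P}(U)$), which is why the intersection $\langle T \rangle \cap \D$ appears in the intermediate step.
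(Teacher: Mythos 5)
Your proposal is correct and follows essentially the same route as the paper: the paper's one-line proof also bounds $\mu_{\D}(\langle T \rangle) \leq \mu(\langle T \rangle)/\mu(\D)$ via the definition of conditioning and then invokes the $(m,k)$-spreadness of $\mu$. Your extra step of intersecting with $\D$ just makes explicit the bookkeeping the paper leaves implicit.
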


\begin{proof}
    Let $T \subseteq U$ be a non-empty set.
    By the definition of conditioned distribution, and by the spreadness of $\mu$, we have:
    \[
        \mu_{\D}(\langle T \rangle) \leq \frac{\mu(\langle T \rangle)}{\mu(\D)} \leq \frac{1}{\mu(\D)} \cdot \frac{m}{k^{|T|}}
    \]
\end{proof}
Armed with this observation, we show that a distribution which is spread outside a kernel is also a robust daisy. 
\begin{lemma} \label{lemma:spread-to-robust}
    Fix $k > 1$ and $m \in (0, 1]$.
    Let $\mu$ be a distribution over $\mathcal{P}_{\leq q}(U)$.
   Suppose $\mu^{\circ K}$ is $(m, k)$-spread for some $K \subseteq U$.
   Then, for every $\alpha > 2q$, $\mu$ is a $(p, \varepsilon)$-robust daisy with kernel $K$, $p = \alpha/k$ and $\varepsilon = \exp \left(-\frac{\alpha}{8 q m \log|\supp(\mu)|} \right)$.
\end{lemma}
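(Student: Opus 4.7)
Fix an arbitrary $\D \subseteq \supp(\mu)$. We must show
\[
    \Pr_{W \sim \mathrm{Bin}(U, p)}[\exists S \in \D,\; S \subseteq K \cup W] \geq 1 - \varepsilon^{\mu(\D)}.
\]
Since the event $S \subseteq K \cup W$ depends on $W$ only through $W \cap (U \setminus K)$, and $S \subseteq K \cup W$ is equivalent to $S \setminus K \subseteq W$, this reduces to showing that the petal family $\F := \{S \setminus K : S \in \D\}$ is $(p, \varepsilon^{\mu(\D)})$-satisfying as a set system over the universe $U \setminus K$. The plan is to exhibit an $(m', k)$-spread distribution with support exactly $\F$ and then invoke the Small-Set Spread Lemma (\pref{lemma:spread_distribution}).

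\textbf{The candidate distribution.} The natural choice is $\eta := (\mu_\D)^{\circ K}$, the distribution obtained by first conditioning $\mu$ on $\D$ and then puncturing by $K$. A short check using \pref{obs:punct-start} shows that $\supp(\eta) = \F$, and that sets in $\supp(\eta)$ still have at most $q$ elements. I would then prove that $\eta$ is $\bigl(m/\mu(\D),\, k\bigr)$-spread by the following short chain of equalities and inequalities for any non-empty $T \subseteq U \setminus K$:
\[
    \eta(\langle T\rangle) \;=\; \mu_\D(\langle T\rangle) \;=\; \frac{\mu(\langle T\rangle)}{\mu(\D)} \;=\; \frac{\mu^{\circ K}(\langle T\rangle)}{\mu(\D)} \;\leq\; \frac{m}{\mu(\D)\cdot k^{|T|}},
\]
where the first equality is \pref{obs:punct-start}(4) applied to $\mu_\D$, the second is \pref{fact:conditioned_subset}, the third is \pref{obs:punct-start}(4) again applied to $\mu$ (using $T \cap K = \emptyset$), and the final inequality uses the $(m,k)$-spreadness hypothesis on $\mu^{\circ K}$.

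\textbf{Applying the Small-Set Spread Lemma and tidying up.} Applying \pref{lemma:spread_distribution} to $\eta$ with spread parameter $m' = m/\mu(\D)$ and the given $\alpha > 2q$, I obtain that $\F = \supp(\eta)$ is $(p, \varepsilon')$-satisfying for $p = \alpha/k$ and
\[
    \varepsilon' \;=\; \exp\!\left(-\frac{\alpha}{8 q \cdot (m/\mu(\D)) \cdot \log|\F|}\right) \;=\; \exp\!\left(-\frac{\alpha\,\mu(\D)}{8 q m \log|\F|}\right).
\]
Since $|\F| \leq |\D| \leq |\supp(\mu)|$, monotonicity of $\log$ yields $\varepsilon' \leq \exp\!\bigl(-\alpha\,\mu(\D)/(8 q m \log|\supp(\mu)|)\bigr) = \varepsilon^{\mu(\D)}$, which is exactly the required bound. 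Translating the satisfying statement back to a statement about $W \sim \mathrm{Bin}(U,p)$ via the correspondence $S \mapsto S \setminus K$ completes the proof.

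\textbf{Main obstacle.} The one subtle point is that conditioning and puncturing do not commute: $(\mu_\D)^{\circ K}$ is \emph{not} the same as $(\mu^{\circ K})_{\F}$, so one cannot blindly apply \pref{observation:cond-spread} to the spread distribution $\mu^{\circ K}$. The right move is to work with $(\mu_\D)^{\circ K}$ directly (which is the unique candidate whose support is exactly $\F$) and verify its spreadness from scratch using the fact that star-masses are preserved under puncturing whenever the star's base is disjoint from the kernel.
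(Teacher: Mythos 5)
Your proof is correct and reaches the statement by essentially the same route as the paper: fix $\D \subseteq \supp(\mu)$, reduce the robust-daisy condition to showing the petal family $\D \setminus K$ is satisfying, apply the Small-Set Spread Lemma (\pref{lemma:spread_distribution}) to a spread distribution supported on the petals, and convert the error to $\varepsilon^{\mu(\D)}$ using $|\D \setminus K| \le |\supp(\mu)|$. The only real difference is the intermediate object: you puncture after conditioning and verify directly that $(\mu_\D)^{\circ K}$ is $(m/\mu(\D),k)$-spread via preservation of star-masses for $T \subseteq U \setminus K$, whereas the paper conditions the punctured distribution, applying \pref{observation:cond-spread} to $\mu^{\circ K}$ with the subfamily $\D \setminus K$, obtaining spread parameter $m/\mu^{\circ K}(\D \setminus K)$ and then invoking $\mu^{\circ K}(\D \setminus K) \ge \mu(\D)$ from \pref{obs:punct-start}. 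Both routes work, so your ``main obstacle'' remark is slightly off: one \emph{can} apply \pref{observation:cond-spread} to $\mu^{\circ K}$, just conditioning on the petal family $\D \setminus K$ rather than on $\D$, and $(\mu_\D)^{\circ K}$ is not the unique candidate with support $\D \setminus K$ --- $(\mu^{\circ K})_{\D \setminus K}$ has that support as well; the paper's footnote only records that the two distributions need not coincide. Two cosmetic points: your middle step should read $\mu_\D(\langle T\rangle) \le \mu(\langle T\rangle)/\mu(\D)$ rather than an equality (the star $\langle T\rangle$ is not contained in $\D$ in general), which is all you need; and, exactly as in the paper's own argument, the resulting spread parameter $m/\mu(\D)$ may exceed $1$, which is harmless since the proof of \pref{lemma:spread} never uses $m \le 1$.
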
  

\begin{proof}
    To prove that $\mu$ is a $(p, \varepsilon)$-robust daisy, we need to show that for every $\D \subseteq \supp(\mu)$, the family $\D \setminus K :=\{ S \setminus K \mid S \in \D \}$ is $(p, \varepsilon^{\mu(\D)})$-satisfying.

    By assumption, $\mu^{\circ K}$ is $(m, k)$-spread.
    Therefore, by \pref{observation:cond-spread}, the distribution $(\mu^{\circ K})_{\D \setminus K}$ is $\left(\frac{m}{\mu^{\circ K}(\D \setminus K)}, k\right)$-spread.\footnote{We remark that, perhaps unintuitively, it is not always true that $(\mu^{\circ K})_{\D \setminus K} = (\mu_{\D})^{\circ K}$.}

    Now, by \pref{lemma:spread_distribution}, for every $\alpha > 2q$ we have that $\supp((\mu^{\circ K})_{\D \setminus K}) = \D \setminus K$ is $(p, \varepsilon')$-satisfying with $p = \alpha/k$ and
    \[
        \varepsilon' =\exp \left(-\frac{\alpha}{8 q (m /\mu^{\circ K}(\D \setminus K)) \log |\D  \setminus K|} \right)
        \leq \exp \left(-\frac{\alpha\cdot \mu(\D)}{8q m \log |\supp(\mu)|} \right) 
        = \varepsilon^{\mu(\D)} \; .
    \] 
    where the inequality follows from the observation that $\mu^{\circ K}(\D \setminus K) \geq \mu(\D )$ (\cref{obs:punct-start}) and since $|\supp(\mu)| \geq |\D| \geq |\D  \setminus K|$. 
\end{proof}

We can now prove the main result of this section.

\begin{proof} [Proof of \pref{lemma:robust_daisy}]
    The lemma follows from the Punctured Spread Distribution Lemma (\pref{lemma:large_spread_daisy}), combined with the fact that spreadness implies robustness (\pref{lemma:spread-to-robust}).
    
    Let $c > q$.
    By \pref{lemma:large_spread_daisy}, there exists $j \in [c]$, $\D \subseteq \mathcal{P}_{\leq q}(U)$ and a non-empty $K \subseteq U$ such that $(\mu_{\D})^{\circ K}$ is a $(m, n^{1/c})$-spread distribution, where:
    \begin{align*}
        m &= \frac{q}{\mu(\D)} \cdot \frac{n^{j/c}}{n}, &
        |K| & \leq \frac{n}{n^{j/c}}, &
        \mu(\D) &\geq 1 - \frac{q}{c}\;.
    \end{align*}

    Suppose $\alpha > 2q$.
    Now, we apply \pref{lemma:spread-to-robust} to conclude that $\mu_\D$ is a $(p, \varepsilon)$-robust daisy with $p = \alpha / k = \alpha n^{-1/c}$ and
    \begin{align*}
        \varepsilon &= \exp \left(-\frac{\alpha}{8qm\log|\D|} \right)  & (\supp(\mu_{\D}) = \D) \\
        &= \exp \left(-\frac{\alpha \cdot \mu(\D) \cdot n}{8q \cdot q \cdot n^{j/c} \cdot \log |\D|}\right) & \left(\text{plugging in $m$}\right)  \\
        & \leq \exp \left(-\frac{\alpha \mu(\D) \cdot |K|}{8q^2 \log |\D|}\right)  & \text{(since $|K| \leq \frac{n}{n^{j/c}}$)}  \\
        & \leq \exp \left(-\frac{\alpha(1-q/c)|K|}{8q^2 \log |\D|}\right). & \left(\text{since $\mu(\D) \geq 1-q/c$}\right)
    \end{align*}

    Finally, since $j \geq 1$, we have $|K| \leq \frac{n}{n^{j/c}} \leq n^{1-1/c}$, as required.
\end{proof} 
\section{RLDC lower bounds} \label{section:rldc}
    This section proves the main result (\cref{thm:main}) of the paper.
    
    In \pref{section:lb-overview}, we provide a detailed overview that motivates and discusses the definition of robust daisies and the ``global sampler'' strategy.
    In \pref{section:lb-struct}, we formally prove that if a relaxed decoder possesses the structure of robust daisies, this implies a lower bound on its block length.
    \pref{section:lb-overview} and \pref{section:lb-struct} are independent of the previous sections, and we encourage readers unfamiliar with the work of \cite{gl19} to read them first.
    
    Finally, in \pref{section:lb-general}, we prove the main result, by using the Robust Daisy Lemma to show that any relaxed decoder can be transformed into one that has the structure of robust daisies.
    
\subsection{Overview} \label{section:lb-overview}
    At its core, the proof is a compression-based, information-theoretic argument: one cannot recover $k$ bits of information by observing fewer than $k$ bits, except with small probability.
    
    For the rest of the overview, let  $C: \{0,1\}^k \rightarrow \{0,1\}^n$ be a  $q$-query RLDC with a  non-adaptive decoder, constant decoding radius $\delta$, and soundness probability $\sigma$.
    We show that if the block length $n$ is too small as a function of the message length $k$, it is possible to recover, with a high probability, the entire $k$-bit message by querying fewer than $k$ bits of the corresponding codeword.

    \paragraph{Global sampler.}
    A \emph{global sampler} is a probabilistic algorithm with oracle access to a \emph{valid codeword} $C(x)$.
    It samples each bit of the codeword independently with some probability $p$, and its goal is to recover the entire message $x$.
    With a high probability, the sampler queries $\Theta(pn)$ bits in total.
    Therefore, if it succeeds in recovering the message $x$, it implies a bound  of $pn=\Omega(k)$.
    For concreteness, a sampling probability of exactly $p=n^{-1/q}$ would give the bound $n^{1-\frac{1}{q}} = \Omega(k)$, which implies $n = \Omega\left(k^{1+\frac{1}{q-1}}\right)$.

    The global sampler leverages the relaxed decoder of $C$ to recover the bits of $x$.
    To recover the $i$-th bit, $x_i$, the global sampler aims to simulate the relaxed decoder for index $i$.
    The challenge is that the sampler's random samples must be sufficient to decode all $k$ indices \emph{simultaneously}, whereas the relaxed decoder for each index $i$ may query an arbitrary set, according to a distribution depending on $i$.

    \paragraph{Warm-up: satisfying set systems.}
    Consider a non-adaptive relaxed decoder for a code $C:\{0, 1\}^k \to \{0, 1\}^n$.
    To decode $x_i$, the decoder performs as follows:
    it picks a set $S$ according to some distribution $\mu_i$, queries the indices in $S$, and computes its output \emph{deterministically} based on its sampled \emph{local view}.\footnote{We can assume that this computation is deterministic, since the decoder is non-adaptive and must never err on valid codewords.}

    If the global sampler happens to draw a set of samples that contains a full set from the support of $\mu_i$, it can recover $x_i$ by invoking the decoder's logic on that local view.
    Since the global sampler's input is a valid codeword, the relaxed decoder is guaranteed to decode correctly. 

    This idea is captured by the well-known notion of a \emph{Satisfying Set System}.
    In the following, $W \sim \textrm{Bin}(U, p)$ denotes a random subset of $U$ where each element is included independently with probability $p$.

    \begin{definition} [Satisfying set system, restating \pref{def:satisfying}]
        Let $\mathcal{F}$ be a family of sets over a universe $U$.
        We say that $\mathcal{F}$ is $(p, \varepsilon)$-satisfying if
        \[
            \Pr_{W \sim \textrm{Bin}(U, p)}[\exists S \in \mathcal{F}, S \subseteq W] \geq 1-\varepsilon \:.
        \]
    \end{definition}

    This suggests that our goal is to show that each family $\F_i := \supp(\mu_i)$ is $(p, \varepsilon)$-satisfying for a small $\varepsilon$ (e.g., $\varepsilon \ll 1/k$, to allow for a union bound over all $k$ indices).
    However, it is unreasonable to expect the query families $\F_i$ to satisfy such a strong property.
    The decoder's queries may be highly non-uniform; for instance, to decode index $i$, the decoder might always query index $i$ itself — a common feature in known constructions.
    In this case, the global sampler must happen to sample index $i$ to capture any full set $S \in \F_i$.
    Since this occurs with probability $p$, the sampler would effectively need to query almost all of $C(x)$ to recover all of $x$, yielding a trivial bound.
    We must therefore handle these ``heavy indices'' differently.

    \paragraph{The actual global sampler: robust daisies.}
    Our strategy is to \emph{guess} the values of $C(x)$ at the few ``heavy'' indices instead of trying to sample them.
    Let us denote the set of these heavy indices by a kernel $K \subseteq [n]$.
    To obtain a full local view for a query set $S$, the global sampler no longer needs to sample all of $S$; it only needs to sample the ``light'' elements in $S \setminus K$.

    We call the sets $\{ S \setminus K \mid S \in \F_i\}$ the \emph{petals}.
    The intuition is that it might be sufficient for the family of petals to be satisfying, rather than the family of full query sets.

    If the guess for the heavy indices is correct, each local view we construct is consistent with the valid codeword $C(x)$.
    Therefore, since the decoder never errs on valid codewords, applying its logic to any of these views will yield the correct output bit.

    But what if our guess for the bits in $K$ is incorrect?
    In this case, the local views we construct are not consistent with $C(x)$, but rather with a corrupted version of it.
    The number of corruptions is at most $|K|$.
    If $|K|$ is smaller than the decoding radius of the code, then the decoder's soundness guarantee holds.
    This means that for any guess, at least a large fraction (i.e., $\sigma$, where ``fraction'' is measured with respect to the distribution $\mu_i$) of the query sets will lead the decoder to output the correct symbol or the special rejection symbol $\bot$.

    This leads to the actual algorithm for the global sampler.
    To recover each bit $x_i$:
    \begin{enumerate}
    \item Sample the codeword $C(x)$ by picking each index i.i.d.\ with probability $p$.
    \item Iterate over all $2^{|K|}$ possible guesses for the values of the bits at the heavy indices in $K$.
        \item For each guess, identify all petals $S \setminus K$ that were fully contained in the initial sample.
        \item For each such petal, form a complete local view using the sampled values and the current guess.
        Feed all these views to the decoder's logic.
        \item If all these simulated local views result in the same output bit, output that bit for $x_i$.
        Otherwise, continue.
    \end{enumerate}

    The algorithm is guaranteed to produce an output - for the correct guess on $K$.
    By the decoder's completeness, every local view consistent with $C(x)$ leads to the correct output bit, ensuring a consensus.
    An incorrect output for $x_i$ can only occur if, for some incorrect guess, \emph{all} sampled petals happen to correspond to query sets that mislead the decoder.

    Conversely, for any guess, the algorithm avoids an error as long as at least one sampled petal corresponds to a ``good'' query set — one that would lead the decoder to output the correct bit or $\bot$.
    By the soundness guarantee, a large fraction of query sets are good.
    Therefore, we expect that our sampler will likely hit one of the many corresponding ``good'' petals.

    This motivates our central definition of a \emph{Robust Daisy}.
    These structures differ from naive satisfying set systems in two ways. First, they allow for a kernel $K$ to be handled separately.
    Second, they require that \emph{any} sufficiently large sub-family is also satisfying. 
    This ensures that even if we restrict our attention to the ``good'' sets (which form a large subfamily), we are still guaranteed to sample one of their petals. 

    For simplicity, our definition requires this property for \emph{all} sub-families, a stronger condition we can achieve without extra cost.

    \begin{definition} (Robust daisy, restating \pref{def:robust_daisy-intro})
        A distribution $\mu$ over $\mathcal{P}(U)$ is a \emph{$(p, \varepsilon)$-robust daisy} with kernel $K \subseteq U$, if, for every $\D \subseteq \supp(\mu)$, the family of petals $\D \setminus K :=\{ S \setminus K \mid S \in \D \}$ is $(p, \varepsilon^{\mu(\D)})$-satisfying.
        That is, if:
        \[
            \Pr_{W \sim \text{Bin}(U, p)}[\exists \; S \in \D, S \subseteq K \cup W] \geq 1-\varepsilon^{\mu(\D)} \: .
        \]
    \end{definition}

    \paragraph{From robust daisies to a global sampler.}
    Now, suppose that for some $i \in [k]$, the query-set distribution of the relaxed decoder for index $i$, denoted $\mu_i$, is a $(p, \varepsilon_i)$-robust daisy with kernel $K_i$.
    Recall that at least a $\sigma$ fraction of the petals are ``good'' (i.e., they lead to a correct output or $\bot$).
    Hence, by the discussion above, for any specific guess of the kernel values, the probability that the global sampler \emph{fails} to sample a ``good'' petal is at most $\varepsilon_i^\sigma$.
    By taking a union bound over all $2^{|K_i|}$ possible guesses, the probability that the global sampler fails to find a good petal for \emph{any} of the guesses, and hence outputs a wrong bit for $x_i$, is at most $2^{|K_i|} \cdot \varepsilon_i^\sigma$.

    Taking another union bound over all $i \in [k]$, and assuming that every $\mu_i$ is a robust daisy, gives that with probability at least $1 - \sum_{i \in [k]}{\varepsilon_i ^{\sigma} \cdot 2^{|K_i|}}$ the global sampler recovers \emph{all} of the $k$ bits of $x$, without any mistakes.
    This brings us to the exact requirement we need from the robust daisy, and the formal proof of the reduction, which we present in \pref{section:lb-struct}.

    However, what if the distributions $\mu_i$ are not robust daisies?
    A simple argument shows that it is enough to extract a robust daisy of large density from each $\mu_i$.
    Namely, if we find $\D \subseteq \supp(\mu)$ such that $\mu_{\D}$ is a robust daisy, we can modify the relaxed decoder such that instead of sampling a set $S \sim \mu$ and querying the indices of $S$, the modified decoder samples a set $S \sim \mu_{\D}$.
    Assuming that $\mu(\D)$ is large enough compared to $\sigma$ (specifically, if $\mu(\D) > 1 - \sigma$), then the modified decoder has non-trivial soundness, and can be used by the global sampler. 
    We give a formal proof of this argument in \pref{section:lb-general}, and use it to complete the proof of \pref{thm:main}. 

\subsection{Lower bound for RLDCs with structured decoders} \label{section:lb-struct}
    In this section, we prove a lower bound for RLDCs with relaxed decoders whose query set distributions are robust daisies. 

    \begin{lemma} \label{lemma:daisy-to-lowerbound}
        Let $C :\{0,1\}^k\rightarrow \Sigma^n$ be a non-adaptive $(q, \delta, \sigma)$-RLDC, and let $p$ satisfy $\frac{3\ln(n)}{n} < p < 1$. 
        For each $i \in [k]$, let $\mu_i$ be the decoder's query distribution for index $i$.
    
        Assume that for every $i \in [k]$, $\mu_i$ is a $(p, \varepsilon_i)$-robust daisy with a kernel $K_i \subseteq [n]$ such that
        \begin{equation} \label{eqn:robust_condition}
            |K_i| \leq \delta n \quad \text{and} \quad \varepsilon_i^{\sigma} \leq \frac{1}{3k \cdot |\Sigma|^{|K_i|}} \: .
        \end{equation}
        Then $k \leq 2pn \cdot \log|\Sigma|$.

    \end{lemma}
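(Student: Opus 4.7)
I will prove the bound by designing a \emph{global sampler}: a randomized procedure that, given oracle access to an unknown codeword $C(x)$ with $x$ uniformly distributed in $\{0,1\}^k$, samples each coordinate of $C(x)$ independently with probability $p$ and reconstructs the entire message $x$ with constant probability. Standard Chernoff concentration on the number of samples together with a compression argument then yields $k \le 2pn\log|\Sigma|$.

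\textbf{The sampler.} Let $W \sim \mathrm{Bin}([n], p)$ be the random set of queried coordinates, so the sampler observes $C(x)|_W$. To recover a fixed bit $x_i$, the sampler enumerates all $|\Sigma|^{|K_i|}$ possible guesses $g$ for the kernel restriction $C(x)|_{K_i}$. For each guess $g$, it collects all $S \in \supp(\mu_i)$ whose petal $S \setminus K_i$ is contained in $W$, builds for each such $S$ a \emph{simulated local view} by combining the true sampled values on $S \setminus K_i$ with the guess $g$ on $S \cap K_i$, and applies the non-adaptive decoder's (deterministic) output rule to each view. If all views from the current guess agree on a single bit $b \in \{0,1\}$, the sampler outputs $b$ for $x_i$; otherwise it moves to the next guess.

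\textbf{Correctness for a single index via robust daisies.} For the correct guess $g^{\star} = C(x)|_{K_i}$, every simulated view coincides with a genuine local view of $C(x)$, so by completeness every decoder output equals $x_i$ and the sampler always produces some answer. The only way the sampler outputs the wrong bit is if some \emph{incorrect} guess $g$ yields views that unanimously return $\overline{x_i}$. An incorrect guess corresponds to a corruption of $C(x)$ supported on at most $|K_i| \le \delta n$ coordinates, so the decoder's relaxed-decoding guarantee applies, and the set $G_g := \{ S \in \supp(\mu_i) : \text{the decoder outputs } x_i \text{ or } \bot \text{ on the view from } g\}$ has $\mu_i(G_g) \ge \sigma$. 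Applying the robust daisy condition to the subfamily $G_g$ bounds the probability that $W$ contains no petal of $G_g$ by $\varepsilon_i^{\mu_i(G_g)} \le \varepsilon_i^{\sigma}$. A union bound over the $|\Sigma|^{|K_i|}$ guesses, followed by one over $i \in [k]$, then uses hypothesis \eqref{eqn:robust_condition} to cap the overall failure probability by $k \cdot |\Sigma|^{|K_i|} \cdot \varepsilon_i^{\sigma} \le 1/3$.

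\textbf{From the sampler to the lower bound.} Since $p > 3\ln(n)/n$, a standard Chernoff bound gives $|W| \le 2pn$ with probability $1-o(1)$, so the conjunction of ``$x$ is recovered'' and ``at most $2pn$ coordinates are queried'' holds with constant probability. Fixing the sampler's randomness appropriately yields a deterministic procedure that reads at most $2pn$ symbols of $\Sigma$ and recovers $x$ for a noticeable fraction of $x \in \{0,1\}^k$; since such a procedure has at most $|\Sigma|^{2pn}$ distinct transcripts, this forces $|\Sigma|^{2pn} \gtrsim 2^k$, i.e., $k \le 2pn \log|\Sigma|$. The main subtlety — and the reason the naive satisfying-set-system abstraction does not suffice — is ruling out that some incorrect kernel guess consistently steers the decoder to the wrong bit on every sampled petal. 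This is precisely where the robust daisy structure is indispensable: it guarantees a satisfying property not just for $\supp(\mu_i)$ itself but for \emph{every} subfamily $\D$ of it, with a failure probability that decays exponentially in $\mu_i(\D)$, which is exactly the form needed to absorb the $|\Sigma|^{|K_i|}$ loss from the union bound over guesses.
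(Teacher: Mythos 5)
Your proposal is correct and takes essentially the same route as the paper's own proof: a global sampler that samples coordinates with probability $p$, enumerates all $|\Sigma|^{|K_i|}$ kernel guesses, applies the robust-daisy property to the $\sigma$-dense subfamily of ``good'' query sets for each guess, and concludes with a Chernoff bound plus a counting/compression step. The only small imprecision is the claim that for the correct guess the sampler ``always produces some answer'': this requires at least one sampled petal, an event the paper bounds by $\varepsilon_i$ (the daisy condition applied to the full support $\supp(\mu_i)$), but since $\varepsilon_i \le \varepsilon_i^{\sigma}$ this term is absorbed by your union bound over all $|\Sigma|^{|K_i|}$ guesses, so the argument goes through as in the paper.
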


    \begin{proof}
        We give a formal description of the global sampler, $\G$, in \cref{algorithm:global_decoder}.

        \paragraph{Notation.} In the description of $\G$ we use the following notation.
        Let $i \in [k]$, and let $\B_i$ be the corresponding relaxed decoder with query set distribution $\mu_i$.
        Let $\D_i = \supp(\mu_i)$.
        On a codeword $y$, the relaxed decoder samples a query set $S \sim \mu_i$, obtains $y_S$ and outputs some deterministic function of $S$ and $y_S$.
        We can assume that this function is deterministic since the decoder is non-adaptive and never errs on valid codewords.
        Let $f_i: \D_i \times \Sigma^q \rightarrow \{0, 1\}$ be this deterministic function.

        \begin{algorithm} \caption{Global Sampler $\G$} \label{algorithm:global_decoder}
            \begin{algorithmic}[1]
                \Require{$\D_i, f_i, K_i \;\forall\; i \in[k]$, query access to $y \in \Sigma^n$.}
                \Ensure{$\hat{x} \in \{0, 1\}^k$}
                \State Sample $W \sim \mathrm{Bin}([n], p).$\;
                \If{$|W| \geq 2pn$}
                    \State Output a random $\hat{x} \sim \{0,1\}^k$.
                \EndIf
                \State $w \gets y_W$. 
                \For{$i \in [k]$}
                    \State $\D \gets \D_i, K \gets K_i, f \gets f_i, \hat{x}_i \gets 0$.
                    \State $\D^{\mathrm{sampled}} \gets \{S \in \D \mid S \setminus K \subseteq W\}$.  
                    \For{$\kappa \in \Sigma^{|K|}$}
                        \For{$S \in \D^ {\mathrm{sampled}}$}
                            \State $a_{S, \kappa} \gets (\kappa_{S \cap K},  w_{S \setminus K})$
                        \EndFor
                    \If{$\exists\; b \in \{0,1\}, \;\forall \;S \in \D^{\mathrm{sampled}}, f(S, a_{S, \kappa}) = b$}
                        \State $\hat{x}_i \gets b$. 
                    \EndIf
                    \EndFor
                \EndFor \\
            \Return $\hat{x}$.
            \end{algorithmic} 
    
        \end{algorithm}
    
    \paragraph{The global sampler.}
    The global sampler $\G$ operates in two phases: 
    \begin{enumerate}
        \item Query Phase: First $\G$ samples each coordinate of $y$ independently with probability $p$.
        Formally, it samples the coordinates $W \sim \text{Bin}([n], p)$.
        If $|W| \geq 2pn$, $\G$ outputs a random $\hat{x} \in \{0, 1\}^k$.
        
        Otherwise, $\G$ queries all the bits in $W$ from $y$ to obtain $w = y_W$. 
        \item Decoding Phase: For each $i \in [k]$, $\G$ decodes $x_i$ as follows.
        Let 
        \[
            \D_i^{\mathrm{sampled}} = \{S \in \D_i \mid S \setminus K_i \subseteq W\}
        \]
        be the query sets of $\D_i$ whose petals are fully sampled by $\G$.
        For any $\kappa \in \Sigma^{|K_i|}$ and $S \in \D_i^{\mathrm{sampled}}$, let $a_{S, \kappa}$ be the assignment of the variables of $S$ that is consistent with $\kappa$ on $K_i$ and with $w$ on $S \setminus K_i$. 
        
        $\G$ iterates over each assignment $\kappa \in \Sigma^{|K_i|}$ and does the following check on the relaxed decoder $\B$ with decoding function $f_i$: 

        For all $S \in \D_i^{\mathrm{sampled}}$ with corresponding assignment $a_{S, \kappa}$, does $\B$ output the same bit $b \in \{0, 1\}$?
        If yes, then $\G$ sets $\hat{x}_i = b$.
    
        If $\G$ never sets $x_i$ in these iterations, i.e., no such $\kappa$ exists, then it sets $\hat{x}_i = 0$. 
    \end{enumerate}
    
    \paragraph{Analysis.}
    We show that $\G$ succeeds, i.e., it outputs $x$ correctly, with probability at least $1/2$.
    $\G$ fails if it either samples $\geq 2pn$ coordinates, or decodes $x_i$ incorrectly for some $i \in [k]$.
    
    The first kind of failure happens if $|W| \geq 2pn$ when $W \sim \mathrm{Bin}([n], p)$. 
    By Chernoff's bound, $\Pr[|W| \geq 2pn] \leq \exp(-pn/3) \leq 1/n$, since $p \geq 3 \ln(n)/n$ by assumption.
     
    In \cref{claim:decoder_phase}, we argue that for each $i \in [k]$, $\G$ fails to decode $x_i$ with probability at most $|\Sigma|^{|K_i|}\cdot\varepsilon_i^{\sigma}$.
    Since, by the hypothesis, $\varepsilon_i < \exp \left(-\frac{\log(3k)+|K_i| \log(|\Sigma|)}{\sigma}\right)$, this failure probability is at most $1/3k$.
    
    First, we complete our argument assuming this claim:
    \begin{align*}
        \Pr_{W}[\G \text{ fails}] &= \Pr_{W}[\G \text{ fails and } |W| \geq 2pn] + \Pr_{W}[\G \text{ fails and }|W| \leq 2pn] \\ 
        & \leq \Pr_{W}[|W| \geq 2pn] + \Pr_{W}[\; \exists\; i \in [k], \G \text{ fails to decode $x_i$}] \\
        & \leq  1/n + \sum_{i \in [k]} \Pr_W[\G \text{ fails to decode $x_i$}] \\ 
        &\leq 1/n + k \cdot 1/3k \leq 1/2 \;,
    \end{align*}
    Hence, $\G$ decodes all of $x$ with probability at least $1/2$.
    Since it makes at most $2pn$ queries, this implies $k < 2pn \log|\Sigma|$.
    
    We are left to prove \cref{claim:decoder_phase}, which ensures that $\G$ decodes each $x_i$ with high probability, which finishes the proof.
    
    \begin{claim}\label{claim:decoder_phase}
        For any $i \in [k]$, the global sampler decodes $x_i$ with probability at least $1 - |\Sigma|^{|K_i|}\varepsilon_i^{\sigma}$. 
    \end{claim}
    
    \begin{proof}
        Fix $i \in [k]$.
        Let $\A := \B_i$ be the relaxed decoder with query set distribution $\mu := \mu_i$ which is a $(p, \varepsilon)$-robust daisy with kernel $K := K_i$ for $\varepsilon := \varepsilon_i$.
        Let $\D := \D_i$ and $f := f_i$. 

        Recall that $\G$ iterates over each $\kappa \in \Sigma^{|K|}$.
        In the case where $\G$ guesses $\kappa$ correctly, i.e., $\kappa = y_K$, we argue that $\G$ sets $\hat{x}_i$ correctly with probability at least $1-\varepsilon$.
        For every other guess, we argue that with probability at least $1-\varepsilon^{\sigma}$, the global sampler either sets $\hat{x}_i$ correctly or does not change $\hat{x}_i$ .
        By the union bound we then conclude that $\G$ sets $\hat{x}_i$ correctly with probability at least $1-|\Sigma|^{|K|}\varepsilon^{\sigma}$. 

        \paragraph{Correct guess.} 
        Suppose $\kappa = y_K$.
        We next show that $\G$ sets $x_i$ correctly with probability at least $1-\varepsilon$. 
        For each $S \in \D^{\mathrm{sampled}}$, $a_{S, \kappa}$ agrees with the codeword $y$.
        Since the relaxed decoder never errs on valid codewords, $f(S, a_{S, \kappa}) = x_i$ for all $S \in  \D^{\mathrm{sampled}}$.
        Therefore, as long as $|\D^{\mathrm{sampled}}| \geq 1$, $\G$ sets $x_i$ correctly.
        Since $\mu$ is a $(p, \varepsilon)$-robust daisy and $\D = \supp(\mu)$, this occurs with probability at least $1 - \varepsilon$. 

        \paragraph{Incorrect guess.}
        Conversely, suppose that $\kappa \neq y_K$.
        We show that with probability at least $1-\varepsilon^{\sigma}$, $\G$ does not set $x_i$ incorrectly.
        
        Let $y' \in \{0, 1\}^{n}$ be the string that agrees with the guess $\kappa$ on $K$, and agrees with $y$ on $[n] \setminus K$.
        Now, each $a_{S, \kappa}$ is consistent with $y'$, and $\dist(y, y') \leq \frac{|K|}{n} \leq \delta$.
        Therefore, due to the soundness of the decoder:
        \begin{equation*}
            \Pr_{S \sim \mu}[f(S, a_{S, \kappa}) \in \{x_i, \bot\}] \geq \sigma \; .
        \end{equation*}

        For any $\kappa \in \{0,1\}^K$, let $\D^{\text{good}}(\kappa) \subseteq \D$ be the query sets on which $\A$ decodes correctly when $K$ is assigned $\kappa$, i.e.,
        $\D^{\text{good}}(\kappa) = \left\{S \in \D \mid f(S, a_{S, \kappa}) \in \{x_i, \bot\}\right\}$, and note that the equation above implies $\mu(\D^{\text{good}}(\kappa)) \geq \sigma$.

        Since $\mu$ is a $(p, \varepsilon)$-robust daisy and $\D^{\text{good}}(\kappa) \subseteq \D$, $\D^{\mathrm{sampled}}$ contains a set from $\D^{\text{good}}(\kappa)$ with probability at least $1 - \varepsilon^{\sigma}$.
        Hence, in this case, there exists $ S \in \D^{\text{good}}(\kappa) \cap \D^{\mathrm{sampled}}$, which ensures that $\G$ does not set $x_i$ incorrectly.
        
    \end{proof}
\end{proof}

\subsection{Lower bound for arbitrary RLDCs} \label{section:lb-general}
    In this section, we use the robust daisy lemma (\pref{lemma:robust_daisy}) to show that an arbitrary RLDC can be transformed into an RLDC with a decoder whose query set distribution is a robust daisy with slightly worse soundness error.

\begin{lemma} \label{lemma:ldc_to_robust}
   Let $C :\{0,1\}^k\rightarrow \Sigma^n$ be an error correcting code with a non-adaptive $(q, \delta, \sigma)$-relaxed decoder.    
   
    Then, for every integer $c > \frac{q}{\sigma}$ such that $n^{-1/c} \leq \delta$, the code $C$ also has a non-adaptive $(q, \delta, \sigma')$-relaxed decoder where $\sigma' = \frac{c\sigma - q}{c-q}$, and for every $i \in [k]$, the query distribution $\mu_i$ of the new decoder for index $i$ is a $(p, \varepsilon_i)$-robust daisy with kernel $K_i \subseteq [n]$ with
    \[
        p =  \frac{8q^3 \cdot\log 3k \cdot \log n \cdot \log|\Sigma|}{\sigma-q/c} \cdot n^{-1/c}
    \]
    such that
    \begin{align*}
        |K_i| &\leq \delta n, & 
        \varepsilon_i^{\sigma'} &\leq \frac{1}{3k \cdot |\Sigma|^{|K_i|}} \: .
    \end{align*}
\end{lemma}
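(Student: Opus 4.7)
The strategy is to apply the Robust Daisy Lemma (\pref{lemma:robust_daisy}) separately to each query distribution of the given decoder, and then argue that the ``conditioned'' decoder still has nontrivial soundness. Let $\B$ be the given non-adaptive $(q,\delta,\sigma)$-relaxed decoder, and for each $i\in[k]$ let $\mu_i^{\mathrm{old}}$ denote its query-set distribution for index $i$, together with the deterministic local function $f_i \colon \supp(\mu_i^{\mathrm{old}}) \times \Sigma^q \to \{0,1,\bot\}$ that computes the decoder's output from a query set and its observed values. I will apply \pref{lemma:robust_daisy} to $\mu_i^{\mathrm{old}}$ with the given integer $c>q/\sigma$ and with $\alpha := p\cdot n^{1/c}$, i.e.\ $\alpha = \frac{8q^3 \log 3k \cdot \log n \cdot \log|\Sigma|}{\sigma - q/c}$; this yields a sub-family $\D_i\subseteq\supp(\mu_i^{\mathrm{old}})$ of total $\mu_i^{\mathrm{old}}$-mass at least $1-q/c$ and a kernel $K_i\subseteq[n]$ such that $\mu_i := (\mu_i^{\mathrm{old}})_{\D_i}$ is a $(p,\varepsilon_i)$-robust daisy with $|K_i|\le n^{1-1/c}$. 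The new decoder simply samples $S\sim \mu_i$ and outputs $f_i(S,y_S)$; it is non-adaptive and makes at most $q$ queries by construction.

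\paragraph{Soundness of the new decoder.}
Fix $i\in[k]$ and a corrupted oracle $w$ with $\dist(w,C(x))\le \delta$. Let $G\subseteq\supp(\mu_i^{\mathrm{old}})$ be the set of queries on which $f_i$ outputs a value in $\{x_i,\bot\}$. The original soundness gives $\mu_i^{\mathrm{old}}(G)\ge \sigma$. Using inclusion–exclusion together with $\mu_i^{\mathrm{old}}(\D_i)\ge 1-q/c$ yields
\[
\mu_i^{\mathrm{old}}(G\cap \D_i) \;\ge\; \mu_i^{\mathrm{old}}(G) + \mu_i^{\mathrm{old}}(\D_i) - 1 \;\ge\; \sigma - q/c,
\]
so the conditional mass of good queries is at least $(\sigma-q/c)/\mu_i^{\mathrm{old}}(\D_i)$; this expression is minimized at $\mu_i^{\mathrm{old}}(\D_i)=1-q/c$, giving exactly $\sigma' = (c\sigma-q)/(c-q)$. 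The kernel bound $|K_i|\le n^{1-1/c} = n\cdot n^{-1/c} \le \delta n$ is immediate from the assumption $n^{-1/c}\le\delta$.

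\paragraph{The failure-probability bound.}
It remains to verify $\varepsilon_i^{\sigma'}\le 1/(3k\cdot|\Sigma|^{|K_i|})$. By \pref{lemma:robust_daisy},
\[
\varepsilon_i^{\sigma'} \;=\; \exp\!\left(-\frac{\alpha\,\sigma'(1-q/c)}{8q^2 \log|\D_i|}\cdot |K_i|\right) \;=\; \exp\!\left(-\frac{\alpha(\sigma - q/c)}{8q^2 \log|\D_i|}\cdot |K_i|\right),
\]
using the identity $\sigma'(1-q/c) = \sigma - q/c$. Since $|\D_i|\le n^q$ (every query set has size at most $q$), we have $\log|\D_i|\le q\log n$, and plugging in our choice of $\alpha$ the exponent becomes
\[
\frac{\alpha(\sigma-q/c)}{8q^2\log|\D_i|}\cdot|K_i| \;\ge\; \log(3k)\cdot\log|\Sigma|\cdot|K_i|,
\]
which dominates $\log(3k)+|K_i|\log|\Sigma|$ (using $|K_i|\ge 1$ from the non-emptiness of the kernel), giving the desired inequality. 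Finally, one has to check that this $\alpha$ satisfies the hypothesis $\alpha>2q$ of \pref{lemma:robust_daisy}; this is automatic for $k,n,|\Sigma|\ge 2$.

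\paragraph{Main obstacle.}
The conceptual move—replacing $\mu_i^{\mathrm{old}}$ by its restriction to $\D_i$—is straightforward, and the kernel size and daisy parameter $p$ transfer directly from \pref{lemma:robust_daisy}. The delicate step is the soundness accounting: one must choose the number of buckets $c$ large enough that the mass lost from conditioning ($\le q/c$) is strictly smaller than the soundness slack $\sigma$, so that $\sigma'>0$, and simultaneously ensure $\alpha$ is large enough to drive $\varepsilon_i^{\sigma'}$ below $1/(3k|\Sigma|^{|K_i|})$ while keeping $p$ close to $n^{-1/c}$. This interplay between $c$, $\alpha$, and the resulting $(p,\varepsilon_i,\sigma')$ is what dictates the exponent $1/\lceil q/\sigma\rceil$ in the final lower bound obtained when \pref{lemma:ldc_to_robust} is fed into \pref{lemma:daisy-to-lowerbound}.
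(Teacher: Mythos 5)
Your proposal follows essentially the same route as the paper's proof: apply \pref{lemma:robust_daisy} with the same choice of $\alpha$ to each query distribution, let the new decoder sample from the conditioned distribution $\mu_{\D_i}$, account for the soundness loss incurred by conditioning, and verify the kernel and $\varepsilon_i$ bounds via $|K_i|\le n^{1-1/c}\le \delta n$, $\log|\D_i|\le q\log n$, and the identity $\sigma'(1-q/c)=\sigma-q/c$. The one slip is in the soundness accounting: after lower-bounding $\mu_i^{\mathrm{old}}(G\cap\D_i)\ge \sigma-q/c$ you divide by $\mu_i^{\mathrm{old}}(\D_i)$ and claim the ratio $(\sigma-q/c)/\mu_i^{\mathrm{old}}(\D_i)$ is minimized at $\mu_i^{\mathrm{old}}(\D_i)=1-q/c$; in fact that expression is \emph{decreasing} in the denominator, so as written your chain only yields the weaker bound $\sigma-q/c$ (worst case $\mu_i^{\mathrm{old}}(\D_i)=1$), not $\sigma'$. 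The fix is to keep the denominator's dependence in the numerator as well: the conditional good mass is at least $\frac{\sigma+\mu_i^{\mathrm{old}}(\D_i)-1}{\mu_i^{\mathrm{old}}(\D_i)}=1-\frac{1-\sigma}{\mu_i^{\mathrm{old}}(\D_i)}\ge 1-\frac{1-\sigma}{1-q/c}=\sigma'$, which is exactly the computation in the paper (carried out there for the complementary event $\B^w(i)\neq x_i$). Everything else, including the verification that the exponent $\log(3k)\cdot\log|\Sigma|\cdot|K_i|$ suffices for $\varepsilon_i^{\sigma'}\le \frac{1}{3k\,|\Sigma|^{|K_i|}}$, matches the paper.
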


\begin{proof}
    We begin by describing our modified relaxed decoder. 
    
    \paragraph{The modified decoder.} 
    For every $i \in [k]$, let $\B_i$ be the original $(q, \delta,\sigma)$-relaxed decoder for $C$ for index $i$, with query set distribution $\mu_i$.
    
    Let $c > q/\sigma$ be an integer such that $n^{-1/c} \leq \delta$, and let 
    \[
        \alpha := \frac{8 q^3 \cdot \log 3k \cdot \log n \cdot \log|\Sigma|}{\sigma - q/c} \: .
    \]
    By \cref{lemma:robust_daisy}, there exists a subset $\D_i \subseteq \supp(\mu_i)$ such that $\mu_{\D_i}$ is a $(p, \varepsilon_i)$-robust daisy with non-empty kernel $K_i$, where: 

    \begin{align} \label{eqn:robust_daisy}
        p &= \alpha n^{-1/c} & \varepsilon_i &= \exp \left(-\frac{\alpha(1-q/c)}{8 q^2\log|\D_i|} \cdot |K_i|\right) & |K_i| &\leq n^{1-1/c} & \mu(\D_i) &\geq 1 - q/c \;. 
    \end{align}

    The modified decoder $\A$ works as follows.
    For each $i \in [k]$, the decoder for index $i$ samples a set $S \sim \mu_{\D_i}$.
    It then decodes $x_i$ according to the deterministic function $f_i$ of $\B_i$ restricted to $\D_i$.

    Since $\D_i \subseteq \supp(\mu_i)$, every set it samples is also a set the original decoder could have sampled.
    Therefore, the new decoder also makes at most $q$ queries, and never errs on valid codewords.

    \paragraph{Soundness.}
    Recall that $\sigma' := \frac{c\sigma - q}{c-q}$.
    We next show that the soundness probability of $\A$ is at least $\sigma'$.
    
    That is, we need to  show that for any $x \in \{0,1\}^k$ and $w \in \Sigma^n$ such that $\dist(w, C(x)) \leq \delta$, we have
    \begin{align} \label{eq:soundness}
        \Pr_{S \sim \mu_{\D_i}}[\A^w(i) =x_i] \geq \frac{c \sigma - q}{c - q}.
    \end{align}  
    Now,

    \begin{align*}
        1 - \sigma & \geq \Pr_{S \sim \mu_i}[\B^w(i) \neq x_i] \\
        &= \Pr_{S \sim \mu_{\D_i}} [\B^w(i) \neq x_i] \mu(\D_i) + \Pr_{S \sim \mu_{\overline{\D_i}}} [\B^w(i) \neq x_i] \mu(\overline{\D_i}) \\
        &\geq \Pr_{S \sim \mu_{\D_i}} [\B^w(i) \neq x_i] (1-q/c) = \Pr_{S \sim \mu_{\D_i}} [\A^w(i) \neq x_i] (1-q/c)
    \end{align*} 

    where $\overline{\D_i} = \supp(\mu_i) \setminus \D_i$ and using $\mu(\D_i) \geq 1- q/c$.
    Rearranging the inequality gives \cref{eq:soundness}. 
    
    We conclude that $\A$ is a $(q, \delta, \sigma')$-relaxed decoder for $C$. 

    \paragraph{Parameter verification.}
    By \cref{eqn:robust_daisy}, for each $i \in [k]$ the query set distribution $\mu_{\D_i}$ is a $(p, \varepsilon_i)$-robust daisy with kernel $K_i$ where: 
    \[
        p = \alpha n^{-1/c} = \frac{8q^3 \log n \log 3k \log|\Sigma|}{\sigma - q/c} \cdot n^{-1/c} \; .
    \]
    By assumptions, $|K_i| \leq n^{1-1/c} \leq \delta n$.
    
    Since each set in $\D_i$ has at most $q$ elements, we have $|\D_i| \leq n^{q} \implies q \log n \geq \log |\D_i|$.
    Plugging the definition of $\sigma'$ in the choice of $\alpha$, we infer the following bound:
    \[
        \alpha = \frac{8q^3 \log n \log 3k \log |\Sigma|}{\sigma - q/c} \geq \frac{8q^2\log |\D_i|}{\sigma'(1-q/c)} \cdot \log (3k) \cdot \log|\Sigma|,
    \]
    and then:
    \[
        \varepsilon_i^{\sigma'} = \exp \left(-\frac{\alpha \cdot (1-q/c)}{8q^2 \log |\D_i|}   \cdot |K_i| \cdot \sigma'\right) \leq \exp \left(-\log(3k)\cdot|K_i| \cdot \log |\Sigma| \right)\leq \frac{1}{3k \cdot |\Sigma|^{|K_i|}},
    \] 
    as required.
\end{proof}

Finally, we combine \cref{lemma:ldc_to_robust} and \cref{lemma:daisy-to-lowerbound} to derive our main theorem.
\begin{proof} [Proof of \pref{thm:main}]
    We prove that:
    \begin{align} \label{eqn:main}
        \frac{k}{\log^2 k} \leq 38q^4\sigma^{-2} \cdot \log^2 |\Sigma| \cdot n^{1-\frac{1}{ \left\lceil \frac{q}{\sigma} \right\rceil+1}} \; .
   \end{align}
   This yields the statement of \pref{thm:main} after rearrangement of the terms.
   
    Let $c := \lceil \frac{q}{\sigma} \rceil+1$.
    Note that $q/\sigma < c < 2q/\sigma$, and therefore, $n^{-1/c} \leq n^{-\frac{\sigma}{2q}} \leq \delta$. 
    
    By \cref{lemma:ldc_to_robust}, $C$ has a non-adaptive $(q, \delta, \sigma')$-relaxed decoder $\B$ with $\sigma' = \frac{c\sigma -q}{c - q}$ where for every $i \in [k]$, the query distribution $\mu_i$ of $\B$ on input $i$ is a $(p, \varepsilon_i)$-robust daisy with a non-empty kernel $K_i \subseteq [n]$ that satisfies \cref{eqn:robust_condition} with $p$ set according to the statement of \cref{lemma:ldc_to_robust}. 
    
    Note that $\sigma - q/c = \sigma - \frac{q}{\lceil q/\sigma \rceil + 1} \geq \sigma - \frac{q}{ q/\sigma + 1} = \frac{\sigma^2}{\sigma + q} \geq \frac{\sigma^2}{q+1}$.
    Furthermore, we can assume without loss of generality that $k \geq n^{1-\frac{1}{q+1}}$, since otherwise we are already done.
    Therefore, $\log n \leq \frac{(q+1) \log k}{q} \leq 2 \log k$ (since $q \geq 2$).
    Combining these two inequalities, 

    \[
        \frac{8q^3 \cdot \log 3k \cdot \log n \cdot \log |\Sigma| }{\sigma - q/c} \leq 8q^3 \cdot (\log 3 + \log k) \cdot \frac{q+1}{\sigma^2} \cdot 2\log k \cdot \log |\Sigma| < 19 q^4 \sigma ^{-2} \log^2 k \log |\Sigma| .
    \]
    
    Now, since $\B$ satisfies the guarantees of \cref{eqn:robust_condition} with 
    \[
        p = \frac{8q^3 \cdot \log 3k \cdot \log n \cdot \log |\Sigma|}{\sigma-q/c} \cdot n^{-1/c} < 19q^4 \sigma ^{-2} \log^2 k \cdot \log |\Sigma| \cdot n^{-1/c}
    \]
    
   we can apply \cref{lemma:daisy-to-lowerbound} (and note that $p > \frac{3 \ln n}{n}$ by the choice of parameters) to conclude:
    \[
        k < 2pn \log |\Sigma| < 38 q^4\sigma^{-2} \log^2 k \cdot \log^2 |\Sigma| \cdot n^{1-\frac{1}{\left\lceil \frac{q}{\sigma} \right\rceil+1}} \; .
    \]
\end{proof}

By rearrangement of the terms in \cref{eqn:main}, we derive a query lower bound for RLDCs with constant rate.

\begin{corollary}[Constant rate] \label{corollary:non_adaptive_const_rate}
   Fix a constant $\sigma > 1/2$ and let $\delta > n^{-1/4q}$.
   Let $C :\{0,1\}^k\rightarrow \{0,1\}^n$ be a non-adaptive $(q, \delta,\sigma)$-RLDC, and suppose that $n = O(k)$.
   Then, $q = \Omega\left(\frac{\log k}{\log \log k}\right)$.
\end{corollary}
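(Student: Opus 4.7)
The plan is to deduce the corollary as a direct algebraic consequence of \cref{eqn:main} in the proof of \pref{thm:main}; no new structural ideas are required. First I would verify that the hypothesis of \pref{thm:main} is met under the corollary's assumptions. Since $\sigma > 1/2$, we have $\sigma/(2q) > 1/(4q)$, and therefore
\[
\delta \;>\; n^{-1/(4q)} \;>\; n^{-\sigma/(2q)},
\]
so \pref{thm:main} is applicable. Substituting $|\Sigma|=2$ (so $\log|\Sigma|=1$) and treating $\sigma$ as a positive constant, the bound $\sigma^{-2} = O(1)$ reduces \cref{eqn:main} to the cleaner form
\[
\frac{k}{\log^2 k} \;\leq\; O(q^4) \cdot n^{\,1 - \frac{1}{\lceil q/\sigma\rceil + 1}}.
\]

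Next, I would absorb the rate assumption $n = O(k)$ into the right-hand side, divide through, and take logarithms. Because $\sigma$ is a constant strictly greater than $1/2$, we have $\lceil q/\sigma\rceil + 1 = \Theta(q)$. The resulting inequality simplifies to
\[
\log k \;\leq\; O\!\bigl( q \cdot (\log q + \log\log k) \bigr).
\]

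Finally, the conclusion follows from a short contradiction argument. Suppose $q = o(\log k / \log \log k)$. Then in particular $q = O(\log k)$, so $\log q = O(\log\log k)$, and the right-hand side above becomes $O(q \log\log k) = o(\log k)$, contradicting the inequality. Hence $q = \Omega(\log k / \log \log k)$, as required.

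The proof is essentially routine rearrangement, so there is no genuine conceptual obstacle; the only mild care needed is in tracking which parameters are constants (so they can be absorbed into $O(\cdot)$) and in double-checking that the $\delta$ hypothesis of \pref{thm:main} follows from the corollary's $\delta > n^{-1/(4q)}$ precisely because $\sigma > 1/2$.
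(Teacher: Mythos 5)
Your proposal is correct and follows essentially the same route as the paper: apply \cref{eqn:main} (i.e., \pref{thm:main}) with $|\Sigma|=2$ and $n=O(k)$, use $\lceil q/\sigma\rceil+1=\Theta(q)$, take logarithms, and conclude $q=\Omega(\log k/\log\log k)$; the paper just tracks explicit constants (e.g., $2q+2$, $q^6$) where you use $O(\cdot)$. The only cosmetic difference is your final contradiction phrased via $q=o(\log k/\log\log k)$, which is slightly informal as a negation of $\Omega$ but is trivially repaired by the case split $q>\log k$ (done) versus $q\le\log k$ (so $\log q\le\log\log k$), exactly as the paper implicitly does.
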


\begin{proof}
    Suppose $n < d \cdot k$ for some constant $d > 0$.
    Note that $\lceil q/\sigma \rceil + 1 < 2q+2$. By \pref{thm:main}, we know that
    \[
        \frac{k}{\log^2 k} \leq 38q^4 \sigma^{-2}\cdot n^{1-\frac{1}{ \lceil \frac{q}{\sigma} \rceil+1}} < q^5\cdot d \cdot k^{1-\frac{1}{2q+2}} < q^6 \cdot k^{1-\frac{1}{2q+2}}
    \]
    where we use that $q = \omega(1)$ (otherwise it is easy to verify that $n = \omega(k)$) and $\sigma > 1/2$.

    This implies that    
    \[
        k^{\frac{1}{2q+3}} \leq \frac{k^{\frac{1}{2q+2}}}{\log^2 k} \leq q^6 
    \]
    
    where the first inequality follows for a large enough $k$.
    We conclude that
    \[
        \log k < 6(2q+3) \log q,
    \]
    which directly implies the stated bound. 
\end{proof}

Our lower bound for non-adaptive RLDCs extends to the important case of linear RLDCs, by using a known reduction. 

\begin{remark} \label{remark:linear_ldc}
    \cite{Goldberg24} showed that any linear $(q, \delta, \sigma)$-RLDC can be turned into $(q+1, \delta, \sigma)$-query non-adaptive RLDCs.
    We use this transformation along with \cref{thm:main} and \cref{corollary:non_adaptive_const_rate} to obtain \cref{corollary:linear_rldc} and  \cref{corollary:const-rate}. 
\end{remark} 

\section*{Acknowledgments}
Tom Gur thanks Victor Seixas Souza and Marcel Dall’Agnol for insightful conversations about $t$-daisies.
Sidhant Saraogi thanks Alexander Golovnev and Chao Yan for helpful discussions.
Guy Goldberg thanks Irit Dinur, Oded Goldreich, and Guy Rothblum for valuable discussions about relaxed LDCs.

\bibliographystyle{alpha}
\bibliography{references}

@article{gl19,
author = {Gur, Tom and Lachish, Oded},
title = {On the Power of Relaxed Local Decoding Algorithms},
journal = {SIAM Journal on Computing},
volume = {50},
number = {2},
pages = {788-813},
year = {2021},
doi = {10.1137/19M1307834},
}

@article{ALWZ21,
author = {Ryan Alweiss and Shachar Lovett and Kewen Wu and Jiapeng Zhang},
title = {{Improved bounds for the sunflower lemma}},
volume = {194},
journal = {Annals of Mathematics},
number = {3},
publisher = {Department of Mathematics of Princeton University},
pages = {795 -- 815},
keywords = {disjunctive normal form, set systems, spread, sunflowers},
year = {2021},
doi = {10.4007/annals.2021.194.3.5},
URL = {https://doi.org/10.4007/annals.2021.194.3.5}
}

@article{GR18,
  title={Non-interactive proofs of proximity},
  author={Gur, Tom and Rothblum, Ron D},
  journal={Computational Complexity},
  volume={27},
  number={1},
  pages={99--207},
  year={2018},
  publisher={Birkhauser Verlag Basel, Switzerland, Switzerland}
}

@article{GG18,
  title={Universal Locally Testable Codes},
  author={Goldreich, Oded and Gur, Tom},
  journal={Chicago Journal OF Theoretical Computer Science},
  volume={3},
  pages={1--21},
  year={2018}
}

@article{GG21,
  title={Universal locally verifiable codes and 3-round interactive proofs of proximity for CSP},
  author={Goldreich, Oded and Gur, Tom},
  journal={Theoretical computer science},
  volume={878},
  pages={83--101},
  year={2021},
  publisher={Elsevier}
}

@article{DGMT22,
  title={Quantum proofs of proximity},
  author={Dall'Agnol, Marcel and Gur, Tom and Moulik, Subhayan Roy and Thaler, Justin},
  journal={Quantum},
  volume={6},
  pages={834},
  year={2022},
  publisher={Verein zur F{\"o}rderung des Open Access Publizierens in den Quantenwissenschaften}
}

@article{BCW21,
title = {Note on sunflowers},
journal = {Discrete Mathematics},
volume = {344},
number = {7},
pages = {112367},
year = {2021},
issn = {0012-365X},
doi = {https://doi.org/10.1016/j.disc.2021.112367},
url = {https://www.sciencedirect.com/science/article/pii/S0012365X21000807},
author = {Tolson Bell and Suchakree Chueluecha and Lutz Warnke},
keywords = {Sunflower, Set system, Sunflower conjecture, Sunflower problem, Intersection theorem},
}

@article{DGL23,
  title={A Structural Theorem for Local Algorithms with Applications to Coding, Testing, and Verification},
  author={Dall’Agnol, Marcel and Gur, Tom and Lachish, Oded},
  journal={SIAM Journal on Computing},
  volume={52},
  number={6},
  pages={1413--1463},
  year={2023},
  publisher={SIAM}
}

@article{BGHSV06,
author = { Ben‐Sasson, Eli and  Goldreich, Oded and  Harsha, Prahladh and  Sudan, Madhu and  Vadhan, Salil},
title = {Robust PCPs of Proximity, Shorter PCPs, and Applications to Coding},
journal = {SIAM Journal on Computing},
volume = {36},
number = {4},
pages = {889-974},
year = {2006},
doi = {10.1137/S0097539705446810},

URL = { 
    
        https://doi.org/10.1137/S0097539705446810
    
    

},
}

@article{Goldreich24,
  author       = {Oded Goldreich},
  title        = {On the relaxed {LDC} of {BGHSV:} {A} survey that corrects the record},
  journal      = {Electron. Colloquium Comput. Complex.},
  volume       = {{TR24-078}},
  year         = {2024},
  url          = {https://eccc.weizmann.ac.il/report/2024/078},
  eprinttype    = {ECCC},
  eprint       = {TR24-078},
  timestamp    = {Thu, 08 Aug 2024 13:00:35 +0200},
  biburl       = {https://dblp.org/rec/journals/eccc/000124b.bib},
  bibsource    = {dblp computer science bibliography, https://dblp.org}
}

@article{Goldreich23,
  author       = {Oded Goldreich},
  title        = {On the Lower Bound on the Length of Relaxed Locally Decodable Codes},
  journal      = {Electron. Colloquium Comput. Complex.},
  volume       = {{TR23-064}},
  year         = {2023},
  url          = {https://eccc.weizmann.ac.il/report/2023/064},
  eprinttype    = {ECCC},
  eprint       = {TR23-064},
  timestamp    = {Tue, 13 Jun 2023 14:28:16 +0200},
  biburl       = {https://dblp.org/rec/journals/eccc/000123a.bib},
  bibsource    = {dblp computer science bibliography, https://dblp.org}
}

@InProceedings{Goldberg24,
  author =	{Goldberg, Guy},
  title =	{{Linear Relaxed Locally Decodable and Correctable Codes Do Not Need Adaptivity and Two-Sided Error}},
  booktitle =	{51st International Colloquium on Automata, Languages, and Programming (ICALP 2024)},
  pages =	{74:1--74:20},
  series =	{Leibniz International Proceedings in Informatics (LIPIcs)},
  ISBN =	{978-3-95977-322-5},
  ISSN =	{1868-8969},
  year =	{2024},
  volume =	{297},
  editor =	{Bringmann, Karl and Grohe, Martin and Puppis, Gabriele and Svensson, Ola},
  publisher =	{Schloss Dagstuhl -- Leibniz-Zentrum f{\"u}r Informatik},
  address =	{Dagstuhl, Germany},
  URL =		{https://drops.dagstuhl.de/entities/document/10.4230/LIPIcs.ICALP.2024.74},
  URN =		{urn:nbn:de:0030-drops-202174},
  doi =		{10.4230/LIPIcs.ICALP.2024.74},
  annote =	{Keywords: Locally decodable codes, Relaxed locally correctable codes, Relaxed locally decodable codes}
}

@inproceedings{KM24,
author = {Kumar, Vinayak M. and Mon, Geoffrey},
title = {Relaxed Local Correctability from Local Testing},
year = {2024},
isbn = {9798400703836},
publisher = {Association for Computing Machinery},
address = {New York, NY, USA},
url = {https://doi.org/10.1145/3618260.3649611},
doi = {10.1145/3618260.3649611},
booktitle = {Proceedings of the 56th Annual ACM Symposium on Theory of Computing},
pages = {1585–1593},
numpages = {9},
keywords = {locally testable codes, relaxed locally correctable codes, relaxed locally decodable codes},
location = {Vancouver, BC, Canada},
series = {STOC 2024}
}

@InProceedings{CY24,
  author =	{Cohen, Gil and Yankovitz, Tal},
  title =	{{Asymptotically-Good RLCCs with $\log^{2+o(1)} n$ Queries}},
  booktitle =	{39th Computational Complexity Conference (CCC 2024)},
  pages =	{8:1--8:16},
  series =	{Leibniz International Proceedings in Informatics (LIPIcs)},
  ISBN =	{978-3-95977-331-7},
  ISSN =	{1868-8969},
  year =	{2024},
  volume =	{300},
  editor =	{Santhanam, Rahul},
  publisher =	{Schloss Dagstuhl -- Leibniz-Zentrum f{\"u}r Informatik},
  address =	{Dagstuhl, Germany},
  URL =		{https://drops.dagstuhl.de/entities/document/10.4230/LIPIcs.CCC.2024.8},
  URN =		{urn:nbn:de:0030-drops-204045},
  doi =		{10.4230/LIPIcs.CCC.2024.8},
  annote =	{Keywords: Relaxed locally decodable codes, Relxaed locally correctable codes, RLCC, RLDC}
}

@INPROCEEDINGS{CY22,
  author={Cohen, Gil and Yankovitz, Tal},
  booktitle={2022 IEEE 63rd Annual Symposium on Foundations of Computer Science (FOCS)}, 
  title={Relaxed Locally Decodable and Correctable Codes: Beyond Tensoring}, 
  year={2022},
  volume={},
  number={},
  pages={24-35},
  keywords={Computer science;Codes;Tensors;Symbols;Complexity theory;Locally correctable codes;locally decodable code;Relaxed locally correctable codes},
  doi={10.1109/FOCS54457.2022.00010}}

@article{CGS22,
author = {Chiesa, Alessandro and Gur, Tom and Shinkar, Igor},
title = {Relaxed Locally Correctable Codes with Nearly-Linear Block Length and Constant Query Complexity},
journal = {SIAM Journal on Computing},
volume = {51},
number = {6},
pages = {1839-1865},
year = {2022},
doi = {10.1137/20M135515X},
URL = {https://doi.org/10.1137/20M135515X},
eprint = {https://doi.org/10.1137/20M135515X}
}

@InProceedings{AS21,
  author =	{Asadi, Vahid R. and Shinkar, Igor},
  title =	{{Relaxed Locally Correctable Codes with Improved Parameters}},
  booktitle =	{48th International Colloquium on Automata, Languages, and Programming (ICALP 2021)},
  pages =	{18:1--18:12},
  series =	{Leibniz International Proceedings in Informatics (LIPIcs)},
  ISBN =	{978-3-95977-195-5},
  ISSN =	{1868-8969},
  year =	{2021},
  volume =	{198},
  editor =	{Bansal, Nikhil and Merelli, Emanuela and Worrell, James},
  publisher =	{Schloss Dagstuhl -- Leibniz-Zentrum f{\"u}r Informatik},
  address =	{Dagstuhl, Germany},
  URL =		{https://drops.dagstuhl.de/entities/document/10.4230/LIPIcs.ICALP.2021.18},
  URN =		{urn:nbn:de:0030-drops-140878},
  doi =		{10.4230/LIPIcs.ICALP.2021.18},
  annote =	{Keywords: Algorithmic coding theory, consistency test using random walk, Reed-Muller code, relaxed locally decodable codes, relaxed locally correctable codes}
}

@book{AS16,
author = {Alon, Noga and Spencer, Joel H.},
title = {The Probabilistic Method},
year = {2016},
isbn = {1119061954},
publisher = {Wiley Publishing},
edition = {4th},
}

@article{yekhanin08,
author = {Yekhanin, Sergey},
title = {Towards 3-query locally decodable codes of subexponential length},
year = {2008},
issue_date = {February 2008},
publisher = {Association for Computing Machinery},
address = {New York, NY, USA},
volume = {55},
number = {1},
issn = {0004-5411},
url = {https://doi.org/10.1145/1326554.1326555},
doi = {10.1145/1326554.1326555},
journal = {J. ACM},
month = feb,
articleno = {1},
numpages = {16},
keywords = {private information retrieval, Mersenne primes, Locally decodable codes}
}

@article{efr12,
author = {Efremenko, Klim},
title = {3-Query Locally Decodable Codes of Subexponential Length},
journal = {SIAM Journal on Computing},
volume = {41},
number = {6},
pages = {1694-1703},
year = {2012},
doi = {10.1137/090772721},
URL = {https://doi.org/10.1137/090772721},
eprint = {https://doi.org/10.1137/090772721}}

@article{GRR20,
 author = {Gur, Tom and Ramnarayan, Govind and Rothblum, Ron},
 title = {Relaxed Locally Correctable Codes},
 year = {2020},
 pages = {1--68},
 doi = {10.4086/toc.2020.v016a018},
 publisher = {Theory of Computing},
 journal = {Theory of Computing},
 volume = {16},
 number = {18},
 URL = {https://theoryofcomputing.org/articles/v016a018},
}

@article{GGK19,
author = {Goldreich, Oded and Gur, Tom and Komargodski, Ilan},
title = {Strong Locally Testable Codes with Relaxed Local Decoders},
year = {2019},
issue_date = {September 2019},
publisher = {Association for Computing Machinery},
address = {New York, NY, USA},
volume = {11},
number = {3},
issn = {1942-3454},
url = {https://doi.org/10.1145/3319907},
doi = {10.1145/3319907},
journal = {ACM Trans. Comput. Theory},
month = apr,
articleno = {17},
numpages = {38},
keywords = {local testability, local decodability, Error correcting codes}
}

@ARTICLE{PP23,
  title     = {A proof of the {Kahn--Kalai} conjecture},
  author    = {Park, Jinyoung and Pham, Huy},
  journal   = {J. Amer. Math. Soc.},
  publisher = {American Mathematical Society (AMS)},
  volume    =  {37},
  number    =  {1},
  pages     = {235--243},
  month     =  {aug},
  year      =  {2023},
  language  = {en}
}

@inproceedings{KT00,
author = {Katz, Jonathan and Trevisan, Luca},
title = {On the efficiency of local decoding procedures for error-correcting codes},
year = {2000},
isbn = {1581131844},
publisher = {Association for Computing Machinery},
address = {New York, NY, USA},
url = {https://doi.org/10.1145/335305.335315},
doi = {10.1145/335305.335315},
booktitle = {Proceedings of the Thirty-Second Annual ACM Symposium on Theory of Computing},
pages = {80–86},
numpages = {7},
location = {Portland, Oregon, USA},
series = {STOC '00}
}

@article{rossman14,
author = {Rossman, Benjamin},
title = {The Monotone Complexity of \$k\$-Clique on Random Graphs},
journal = {SIAM Journal on Computing},
volume = {43},
number = {1},
pages = {256-279},
year = {2014},
doi = {10.1137/110839059},
URL = {https://doi.org/10.1137/110839059},
eprint = {https://doi.org/10.1137/110839059}
}

@article{Yekhanin12,
  author       = {Sergey Yekhanin},
  title        = {Locally Decodable Codes},
  journal      = {Found. Trends Theor. Comput. Sci.},
  volume       = {6},
  number       = {3},
  pages        = {139--255},
  year         = {2012}
}

@article{KS17,
  author       = {Swastik Kopparty and
                  Shubhangi Saraf},
  title        = {Local Testing and Decoding of High-Rate Error-Correcting Codes},
  journal      = {Electron. Colloquium Comput. Complex.},
  volume       = {{TR17-126}},
  year         = {2017}
}

@inproceedings{KW03,
  author       = {Iordanis Kerenidis and
                  Ronald de Wolf},
  title        = {Exponential lower bound for 2-query locally decodable codes via a
                  quantum argument},
  booktitle    = {{STOC}},
  pages        = {106--115},
  publisher    = {{ACM}},
  year         = {2003}
}

@article{Woodruff07,
  author       = {David P. Woodruff},
  title        = {New Lower Bounds for General Locally Decodable Codes},
  journal      = {Electron. Colloquium Comput. Complex.},
  volume       = {{TR07-006}},
  year         = {2007}
}

@article{Woodruff12,
  author       = {David P. Woodruff},
  title        = {A Quadratic Lower Bound for Three-Query Linear Locally Decodable Codes
                  over Any Field},
  journal      = {J. Comput. Sci. Technol.},
  volume       = {27},
  number       = {4},
  pages        = {678--686},
  year         = {2012}
}

@article{AGKM22,
  author       = {Omar Alrabiah and
                  Venkatesan Guruswami and
                  Pravesh Kothari and
                  Peter Manohar},
  title        = {A Near-Cubic Lower Bound for 3-Query Locally Decodable Codes from
                  Semirandom {CSP} Refutation},
  journal      = {Electron. Colloquium Comput. Complex.},
  volume       = {{TR22-101}},
  year         = {2022}
}

@article{Rao2020,
	author = {Rao, Anup},
	journal = {Discrete Analysis},
	doi = {10.19086/da.11887},
	year = {2020},
	month = {feb 25},
	title = {Coding for {Sunflowers}},
}

@inproceedings{talagrand10,
author = {Talagrand, Michel},
title = {Are many small sets explicitly small?},
year = {2010},
isbn = {9781450300506},
publisher = {Association for Computing Machinery},
address = {New York, NY, USA},
url = {https://doi.org/10.1145/1806689.1806693},
doi = {10.1145/1806689.1806693},
booktitle = {Proceedings of the Forty-Second ACM Symposium on Theory of Computing},
pages = {13–36},
numpages = {24},
keywords = {large dimension, self-organization},
location = {Cambridge, Massachusetts, USA},
series = {STOC '10}
}

@ARTICLE{MNSZ25,
  title     = "A Bayesian proof of the spread lemma",
  author    = "Mossel, Elchanan and Niles-Weed, Jonathan and Sun, Nike and
               Zadik, Ilias",            
  journal   = "Random Struct. Algorithms",
  publisher = "Wiley",
  volume    =  66,
  number    =  4,
  month     =  jul,
  year      =  2025,
  copyright = "http://creativecommons.org/licenses/by/4.0/",
  language  = "en"
}

@inproceedings{JM25,
  title={A $k^{q/(q-2)}$ Lower Bound for Odd Query Locally Decodable Codes from Bipartite {K}ikuchi Graphs},
  author={Janzer, Oliver and Manohar, Peter},
  booktitle={FOCS},
  year={2025}
}

@inproceedings{BHKL25,
  title={Improved Lower Bounds for all Odd-Query Locally Decodable Codes},
  author={Basu, Arpon and Hsieh, Jun-Ting and Kothari, Pravesh K. and Lin, Andrew D.},
  booktitle={FOCS},
  year={2025}
}

@inproceedings{LSZ20, author = {Lovett, Shachar and Solomon, Noam and Zhang, Jiapeng}, title = {From DNF compression to sunflower theorems via regularity}, year = {2020}, booktitle = {Proceedings of the 34th Computational Complexity Conference}, articleno = {5}, numpages = {14}, series = {CCC '19} }

@article{CKR22,
  title={Monotone circuit lower bounds from robust sunflowers},
  author={Cavalar, Bruno Pasqualotto and Kumar, Mrinal and Rossman, Benjamin},
  journal={Algorithmica},
  year={2022}
}

@article{CGRSS25,
  title={Monotone Circuit Complexity of Matching},
  author={Cavalar, Bruno and G{\"o}{\"o}s, Mika and Riazanov, Artur and Sofronova, Anastasia and Sokolov, Dmitry},
  journal ={arXiv:2507.16105},
  year={2025}
}

@ARTICLE{FKN21,
  title     = "Thresholds versus fractional expectation-thresholds",
  author    = "Frankston, Keith and Kahn, Jeff and Narayanan, Bhargav and Park,
               Jinyoung",
  journal   = "Ann. Math.",
  publisher = "Annals of Mathematics",
  volume    =  194,
  number    =  2,
  pages     = "475",
  month     =  sep,
  year      =  2021
}

@article{rao2025story,
  title={The Story of Sunflowers},
  author={Rao, Anup},
  journal={arXiv preprint arXiv:2509.14790},
  year={2025}
}
\end{document}